\newtheorem{definition}{Definition}[section]
\newtheorem{proposition}[definition]{Proposition}
\newtheorem{lemma}[definition]{Lemma}
\newtheorem{corollary}[definition]{Corollary}
\begin{document}

\begin{frontmatter}


\title{Probabilistic Pursuits on Graphs}



\author[add1]{Michael Amir}

\ead{ammicha3@cs.technion.ac.il}

\author[add1]{Alfred M. Bruckstein}

\ead{freddy@cs.technion.ac.il}

\address[add1]{Technion Israel Institute of Technology, Haifa, Israel}

\journal{Theoretical Computer Science}

\begin{abstract}
We consider discrete dynamical systems of ``ant-like'' agents engaged in a sequence of pursuits on a graph environment. The agents emerge one by one at equal time intervals from a source vertex $s$ and pursue each other by greedily attempting to close the distance to their immediate predecessor, the agent that emerged just before them from $s$, until they arrive at the destination point $t$. Such pursuits have been investigated before in the continuous setting and in discrete time when the underlying environment is a regular grid. In both these settings the agents' walks provably converge to a shortest path from $s$ to $t$. Furthermore, assuming a certain natural probability distribution over the move choices of the agents on the grid (in case there are multiple shortest paths between an agent and its predecessor), the walks converge to the uniform distribution over all shortest paths from $s$ to $t$ - and so the agents' locations are on average very close to the straight line from $s$ to $t$.

In this work we study the evolution of agent walks over a general finite graph environment $G$. Our model is a natural generalization of the pursuit rule proposed for the case of the grid. The main results are as follows. We show that ``convergence'' to the shortest paths in the sense of previous work extends to all pseudo-modular graphs (i.e. graphs in which every three pairwise intersecting disks have a nonempty intersection), and also to environments obtained by taking graph products, generalizing previous results in two different ways. We show that convergence to the shortest paths is also obtained by chordal graphs (i.e. graphs in which all cycles of four or more vertices have a chord), and discuss some further positive and negative results for planar graphs. In the most general case, convergence to the shortest paths is not guaranteed, and the agents may get stuck on sets of recurrent, non-optimal walks from $s$ to $t$. However, we show that the limiting distributions of the agents' walks will always be uniform distributions over some set of walks of equal length.

\end{abstract}

\begin{keyword}
Chain pursuits \sep Grids \sep Grid geometry \sep Pseudo-modular graphs \sep Multi-agent pursuits \sep Probabilistic multi-agent systems


\end{keyword}

\end{frontmatter}

\section{Introduction}
\label{chap:intro}


The notion of a greedy chain pursuit of agents was inspired by a trail of ants headed from their nest to sources of food. The question of whether ants need to estimate geometrical properties of the underlying surface to converge to the optimal path was posed by Feynman \cite{feynman} and has since been investigated in both robotics and biology, inspiring research into networks, cooperative multi-agent algorithms and dynamical systems (cf. \cite{ants1, ants2, ants3, ants4, ants5}). Here we are interested in whether, when ant-like agents pursue each other using a simplistic logic that requires no persistent states and only local information regarding the environment, the ``trails'' they traverse on the graph converge to the shortest paths from the source vertex $s$ to the destination vertex $t$. We are also interested in the probability distribution of these trails as time goes to infinity.

Various notions of pursuit have been extensively investigated for graphs under the subject of ``cops and robber'' games, where one or more agents attempt to capture a moving target (see \cite{coprobber} for a general survey). A greedy pursuit rule  was investigated in e.g. \cite{coprobber2}. Our objective is completely different, as we are instead interested in the structure over time of the trails formed by configurations of agents in pursuit of each other. 

The model of probabilistic chain pursuit discussed in this work was first introduced by \cite{ants2} for the case of the grid graph. In this model, a sequence of agents $A_0, A_1, A_2 \ldots$ emerges from a source vertex $s$ at times $0, \Delta, 2\Delta, \ldots$ for a fixed integer $\Delta > 1$. The agent $A_0$ moves in an arbitrary way, until, in a finite number of steps, it reaches the destination vertex $t$ and subsequently stops there forever. For any $i > 0$, the agent $A_i$ chases $A_{i-1}$ until they both arrive and stop at $t$.

Denote the position of $A_i$ at time $T$ as $(x_i,y_i)$ and let $d_x = |x_i - x_{i-1}|$, $d_y = |y_i - y_{i-1}|$. At every time step, $A_i$ may take a single step on either the $x$ or the $y$-axis of the grid but not both. Every move that it makes must bring it closer to the current location of $A_{i-1}$ (unless they both stand in the same place, in which case $A_i$ does not move). Most of the time, $A_i$ will be able to get closer to $A_{i-1}$'s location via either the $x$-axis or the $y$-axis. To account for this, $A_i$ chooses, according to a probabilistic rule, whether it will move on the $x$ or the $y$-axis. Specifically, $A_i$ will move on the $x$ axis with probability $\frac{d_x}{d_x + d_y}$ and on the $y$ axis with probability $\frac{d_y}{d_x + d_y}$. 

A corner case that needs to be mentioned is when $d_x + d_y = 0$. This can only occur when $A_i$ lies on the same vertex as $A_{i+1}$. In \cite{ants2} such situations are handled by merging $A_i$ and $A_{i+1}$ to single agent. In this work we instead assume that $A_i$ is allowed to stay put until $A_{i+1}$ distances itself from it. This difference is insignificant, and all of our results are applicable to either case (with only minor technical modifications to the proofs).

When agents chase each other according to this pursuit rule, the walk of the agent $A_i$ converges to the optimal path from $s$ to $t$ as $i$ tends to infinity, irrespective of the initial path of $A_0$. Furthermore, the limiting distribution of the walks of the agents is the uniform distribution over all shortest paths from $s$ to $t$. Remarkably, since in a grid graph (drawn on the plane in the usual way) the vast majority of shortest paths from $s$ to $t$ pass through vertices which are close to the straight line from $s$ to $t$, the positions of the agents $A_0, A_1, A_2, \ldots$ will lie very close to this line almost all the time, and so the ``ant trails'' that the agents form on the grid will almost always look approximately like a straight line.

The purpose of this work is to study an extension of the model proposed in \cite{ants2} wherein pursuit takes place on a fixed but arbitrary graph $G$.

Previously we have mentioned our assumption that the agent $A_i$ is allowed to ``stay put'' when it is on the same vertex as $A_{i-1}$. Formally, we enable it to do this by always assuming that the graph $G$ is reflexive (i.e. the edge $v \to v$ exists for every vertex $v$). We consider the following pursuit rule:

\begin{definition}[Pursuit rule]
\label{Pursuitrule}
The agent $A_i$, for all $i > 0$, pursues agent $A_{i-1}$ by selecting, uniformly at random, a shortest path in $G$ from its current vertex to the current location of $A_{i-1}$, then moves to the next vertex of that path. (The shortest path from a vertex $v$ to itself is defined as the reflexive path $v\to v$. Hence when both $A_i$ and $A_{i-1}$ are located on the same vertex, $A_i$ will stay in place).
\end{definition}

When the underlying graph is a grid, this rule coincides with the pursuit rule of \cite{ants2} described above.
 
\subsection*{Overview}

The structure of this work is as follows. Section 2 formalizes our terminology and model, and establishes a preliminary characterization of what we call ``convergent graphs'' and ``stable graphs'', that we use in the rest of this work. Section 3 contains the results about convergence and stability of various classes of graphs. Section 4 shows that all possible distributions of agent walks occuring at the limit are uniform distributions over some set of walks. All sections use results stated in Section 2, but Sections 3 and 4 are largely independent.

All graphs in which pursuit takes place are assumed to be finite, bidirectional and reflexive, with at most one edge per pair of vertices $(v,u)$. A walk is a sequence of vertices $v_1 v_2\ldots v_n$ such that an edge exists between each $v_i, v_{i+1}$. A path is a walk with no repeating vertices. The length of a path or walk is the number of vertices it traverses, and is written $|P|$. For example, if $P = v_1v_2v_3$ then $|P| = 3$.

\section{Preliminary Characterizations}

We are given an undirected finite graph $G$ and two vertices $s, t \in G$ (not necessarily different). The vertex $s$ will be called the source vertex, and $t$ the destination vertex. Ants (a(ge)nts) emerge at $s$ and pursue the ants that left before them as they walk towards the destination vertex (where they shall stop). Specifically, the first ant, $A_0$, follows an arbitrary finite walk from $s$ to $t$, taking one step across an edge of this walk every unit of time. Furthermore, every $\Delta$ units of time - for some parameter $\Delta$ - the ant $A_i$ leaves $s$ and pursues ant $A_{i-1}$ (taking one step per time unit) according to the pursuit rule described in Definition \ref{Pursuitrule}.

Here, $\Delta$ (``delay time'') is an important parameter, and we will always assume that it is greater than 1. The pursuit rule guarantees that the distance between $A_i$ and $A_{i-1}$ is bounded by $\Delta$. This is because the distance is initially at most $\Delta$, and at every time step $A_{i}$ will close the distance to $A_{i-1}$'s location by one vertex and $A_{i-1}$ will move away from that location by at most one vertex, so the distance between them (so long as it is above $0$) is either preserved or shortened. By the same reasoning, whenever there is a point in the pursuit where $d(A_i, A_{i-1}) = x > 0$, the distance between the agents will never subsequently increase above $x$. The one exception is when $d(A_i, A_{i-1}) = 0$ and both agents haven't yet stopped at $t$. In such cases, the distance might increase to $1$ after one time step (as $A_{i}$ will stay in the same spot but $A_{i-1}$ might step to another vertex), but it will remain at most $1$ from there on (note that since $\Delta > 1$ this does not contradict the earlier bound on the distance). If the distance $d(A_i, A_{i-1})$ is $x$ once $A_{i-1}$ reaches $t$, then $A_i$ will arrive at $t$ after $x$ subsequent steps. Since $A_{i-1}$ arrived at $s$ precisely $\Delta$ time steps before $A_i$, this means that $A_i$ will arrive at $t$ in $\Delta - x \geq 0$ less steps than $A_{i-1}$. Consequently, the walk lengths of subsequent agents in the chain pursuit are non-increasing. 

Note that in order to carry out the pursuit rule, every ant needs only local information about the graph (specifically, it need only know the disk of radius $\Delta$ about its current vertex).

We wish to understand the lengths of the walks of the ant $A_i$ as $i \to \infty$, and their eventual distribution, assuming an arbitrary initial walk for ant $A_0$. In particular, we want to know if, in finite expected time, the walk of $A_i$ will be an optimal path from $s$ to $t$.

We denote the graph walk taken by $A_i$ as $P(A_i)$. To model the distribution of ant walks over time we consider a Markov chain $\mathcal{M}_{\Delta} (s,t)$, parametrized by the vertices $s$, $t$ and a positive integer $\Delta$, whose states are all (the infinitely many) possible walks from $s$ to $t$. The transition probability from $P_1$ to $P_2$ is defined as $Prob[P(A_{i+1})=P_2 | P(A_i) = P_1]$, assuming the given value of $\Delta$. 

In this work we are primarily interested in studying the closed communicating classes of $\mathcal{M}_{\Delta} (s,t)$ (closed classes for short). A closed communicating class is a subset of states of $\mathcal{M}_{\Delta} (s,t)$ such that every two states communicate with each other, and no state in the set communicates with a state outside the set. A state $P_i$ is said to communicate with state $P_j$ if it is possible, with probability greater than 0, for the chain to transition from $P_i$ to $P_j$ in a finite number of steps (the reader may find more detail in the first chapter of \cite{markovref}, or in \cite{markovref2}).

Consider the (finitely many) walks and closed communicating classes of $\mathcal{M}_{\Delta} (s,t)$ reachable from the initial state $P(A_0)$. Once an ant takes a walk that belongs to a closed communicating class, the walk choices of all ants that emerge in the future will belong to this class. Additionally, since the length of $P(A_0)$ is finite and ants cannot transition to a walk of length greater than this, the walks of our ants become such that they belong to a closed communicating class in finite expected time (see \cite{markovref}). Hence, closed communicating classes capture the notion of ``stabilization'' in our dynamical system.

\setcounter{figure}{0}
\begin{figure}[!ht]
\caption{The behavior over time of chain pursuit where the graph environment is a grid with two ``holes''.}
\centering
\begin{subfigure}{.5\textwidth}
\centering
\includegraphics[width=\linewidth]{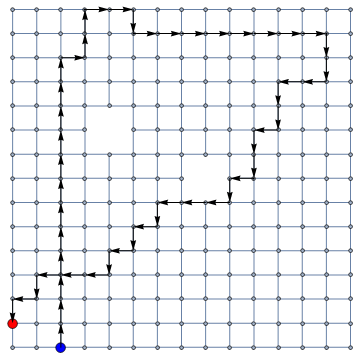}
\end{subfigure}%
\begin{subfigure}{.5\textwidth}
\centering
\includegraphics[width=\textwidth]{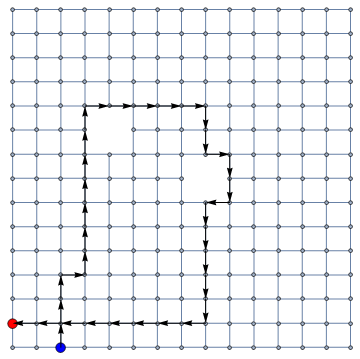}
\end{subfigure}
\begin{subfigure}{.5\textwidth}
\centering
\includegraphics[width=\linewidth]{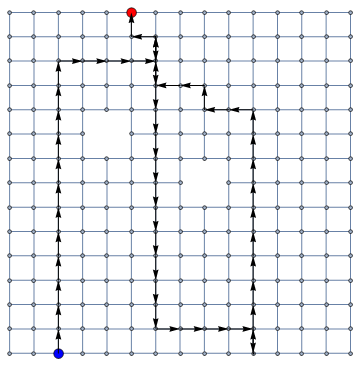}
\end{subfigure}%
\begin{subfigure}{.5\textwidth}
\centering
\includegraphics[width=\linewidth]{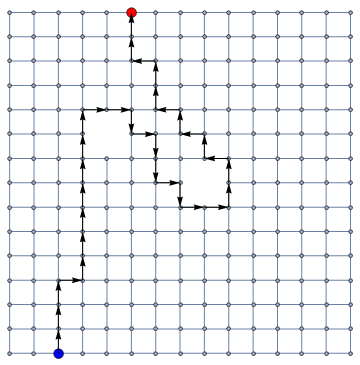}
\end{subfigure}
\label{fig:graphsimulation}
\end{figure}

In Figure \ref{fig:graphsimulation} we show the behavior over time of our dynamical system, for $\Delta = 2$, over a grid with ``holes''. Each image illustrates the walk taken by some $A_i$ (here we show the walk of every agent individually, though in an actual simulation, multiple agents would be walking on the graph concurrently). The image to the left shows the walk taken by $A_0$, and the image to the right shows a walk belonging to the closed communicating class at which the pursuit stabilized. As we can see, though the agents typically manage to shorten the walk of the initial agent $P(A_0)$ over time, they are by no means guaranteed to converge to an optimal path. 

In fact, for the particular graph environment of Figure \ref{fig:graphsimulation} (a grid with ``holes''), there exist an infinite number of closed communicating classes, containing walks of arbitrary length. The loops of the walk $P(A_0)$ around each of the holes determines the kinds of walks the ants may converge to.

Nevertheless, closed communicating classes $\mathcal{C}$ have some nice regularity properties. As an initial observation we will show that the walks belonging to a closed class must all have the same length, i.e. traverse the same number of vertices, denoted by $|P|$.

\begin{lemma}
\label{uniformlength}
Let $\mathcal{C}$ be the set of walks in a closed class of $\mathcal{M}_{\Delta} (s,t)$. Then for all $P_l, P_k \in \mathcal{C}$, $|P_l| = |P_k|$. 
\end{lemma}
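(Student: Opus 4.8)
The plan is to leverage the monotonicity of walk lengths under the pursuit dynamics, which was already established in the discussion preceding the lemma. The first thing I would isolate is the following property of the one-step transitions of $\mathcal{M}_{\Delta}(s,t)$: a single transition can never increase walk length, i.e.\ if $Prob[P(A_{i+1}) = P_2 \mid P(A_i) = P_1] > 0$, then $|P_2| \le |P_1|$. This is precisely the statement, derived in the preceding paragraphs from the bound $d(A_i, A_{i-1}) \le \Delta$, that the walk lengths of successive agents in a chain pursuit are non-increasing; I would simply restate it as a property of the one-step transition probabilities of the Markov chain.

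With this in hand the argument is short. Fix $P_l, P_k \in \mathcal{C}$. Since $\mathcal{C}$ is a communicating class, $P_l$ communicates with $P_k$, so there is a finite sequence of states $P_l = Q_0, Q_1, \ldots, Q_m = P_k$ in which every consecutive transition has positive probability. Applying the monotonicity property along this chain gives $|P_k| = |Q_m| \le |Q_{m-1}| \le \cdots \le |Q_0| = |P_l|$. By symmetry, using that $P_k$ also communicates with $P_l$, the same reasoning in the opposite direction yields $|P_l| \le |P_k|$. Combining the two inequalities gives $|P_l| = |P_k|$, as required.

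The only point requiring care is the first step: making rigorous that the ``non-increasing length'' argument holds for \emph{every} individual positive-probability transition, and not merely in an asymptotic or expected sense. I would verify this by unpacking the pursuit rule directly, observing that $A_{i+1}$ closes the distance to $A_i$ by one vertex per step while $A_i$ recedes by at most one, so that if their separation is $d$ when $A_i$ halts at $t$, then $A_{i+1}$ reaches $t$ in $\Delta - d \ge 0$ fewer steps than $A_i$, whence $|P(A_{i+1})| \le |P(A_i)|$. Beyond securing this monotonicity at the level of single transitions, the proof is a direct application of the definition of a communicating class and I do not expect it to present any further obstacle.
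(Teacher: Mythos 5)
Your proof is correct and takes essentially the same route as the paper: both arguments rest on the length-monotonicity of positive-probability transitions (established just before the lemma) combined with the mutual communication of states within a closed class. The only cosmetic difference is that you derive the two inequalities $|P_k| \le |P_l|$ and $|P_l| \le |P_k|$ directly from communication in both directions, whereas the paper phrases the same observation as a contradiction with the recurrence of the longer walk.
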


\begin{proof}
Assume that there are two walks in \textit{C} such that $|P_l| < |P_k|$. The walk $P_l$ is a reachable state belonging to the closed class. However, once an ant takes walk $P_l$, it will never take $P_k$ again since by the pursuit rule, walk lengths are monotonically non-increasing. Thus $P_k$ is not recurrent in the class, a contradiction. 
\end{proof}

Of particular interest are closed classes that contain the shortest paths from $s$ to $t$, or a subset of these paths. In \cite{ants2} it was shown that when $G$ is a grid graph, $\mathcal{M}_{\Delta} (s,t)$ has a unique closed class, that contains all the shortest paths. Later we will see that whenever $\mathcal{M}$ has a unique closed class, it is necessarily the class of all shortest paths.

A concept that will be used in our arguments is \textit{$\delta $-optimality}:

\begin{definition} We will say that a walk $P = v_1 v_2  ...  v_n$ is $\delta$-optimal, if, for any two vertices $v_i, v_{i+\delta} \in P$ we have that $d_{G}(v_i, v_{i+\delta}) = \delta$ (where $d_G(u,v)$ denotes the distance between two vertices in $G$).
\end{definition}

We note that $d_{G}(v_i, v_{i+\delta}) = \delta$ implies that $v_i v_{i+1} \ldots v_{i+\delta}$ is a shortest path from $v_i$ to $v_{i+\delta}$. Therefore we have by extension that $d_{G}(v_i, v_{i+k}) = k$ for any $k \leq \delta$.

A simple but important observation is that walks which belong to a closed class $\mathcal{C}$ of $\mathcal{M}_{\Delta} (s,t)$ must be $\Delta$-optimal:

\begin{lemma}
\label{Toptimallemma}
Let $\mathcal{C}$ be the set of walks of some closed class of $\mathcal{M}_{\Delta} (s,t)$. Then any walk in $\mathcal{C}$ is $\Delta$-optimal.
\end{lemma}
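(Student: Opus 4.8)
The plan is to argue by contradiction, producing from a hypothetical non-$\Delta$-optimal walk in $\mathcal{C}$ a strictly shorter reachable walk, which then collides with Lemma~\ref{uniformlength}. Suppose some $P = v_1 v_2 \ldots v_n \in \mathcal{C}$ is not $\Delta$-optimal. Since the segment of $P$ from $v_i$ to $v_{i+\Delta}$ is itself a walk of $\Delta$ edges, we always have $d_G(v_i, v_{i+\Delta}) \le \Delta$; hence non-$\Delta$-optimality means there is an index with $d_G(v_i, v_{i+\Delta}) < \Delta$, and I would fix $i$ to be the \emph{smallest} such index. Note $i + \Delta \le n$ and, as $\Delta > 1$, $i < n$, so $v_i \neq t$.

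I would then consider a predecessor $B$ whose walk is $P$ and the follower $A$ emerging $\Delta$ steps later, and track the realization in which $A$ reproduces $B$'s footsteps $v_1, v_2, \ldots, v_i$ one step at a time. The key is that this realization has positive probability: when $A$ sits at $v_j$ with $j < i$, a timing computation shows $B$ sits at $v_{j+\Delta}$, and by the minimality of $i$ the segment $v_j v_{j+1}\ldots v_{j+\Delta}$ is a shortest path; therefore stepping to $v_{j+1}$ is a legal pursuit move, selected with positive probability by Definition~\ref{Pursuitrule}. The same timing computation places $A$ at $v_i$ exactly when $B$ is at $v_{i+\Delta}$.

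At that instant the graph distance between the two agents equals $d_G(v_i, v_{i+\Delta}) < \Delta$. Invoking the monotonicity of the inter-agent distance established in the preliminary discussion (the distance never subsequently exceeds a positive value it has attained, and the sole possible increase, from $0$ to $1$, is harmless since $\Delta > 1$), the distance remains strictly below $\Delta$ for the rest of the pursuit. In particular, when $B$ halts at $t$ the distance $x = d_G(A,t)$ satisfies $x < \Delta$, so $A$ arrives at $t$ in $\Delta - x \ge 1$ fewer steps than $B$, i.e.\ $A$'s walk has length at most $|P| - 1 < |P|$. This shorter walk is reachable from $P$ with positive probability; since $\mathcal{C}$ is closed it must lie in $\mathcal{C}$, contradicting Lemma~\ref{uniformlength}.

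I expect the main obstacle to be the bookkeeping of the footstep-following step: one must verify simultaneously that each corner-preserving move is a genuine pursuit move (this is precisely where the minimality of $i$ and the definition of $\Delta$-optimality are used) and that the emergence delay $\Delta$ aligns $A$ at $v_i$ with $B$ at $v_{i+\Delta}$. Once the ``corner is cut,'' the monotonicity of the distance and the length/closedness contradiction follow quickly.
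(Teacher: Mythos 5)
Your proposal is correct and follows essentially the same route as the paper's own proof: pick the minimal index $i$ where $\Delta$-optimality fails, let the follower replicate the predecessor's footsteps $v_1,\ldots,v_i$ with positive probability (legal precisely because minimality of $i$ makes each segment $v_j\ldots v_{j+\Delta}$ a shortest path), observe the distance drops below $\Delta$ at $v_i$, and conclude a strictly shorter walk lies in $\mathcal{C}$, contradicting Lemma~\ref{uniformlength}. Your handling of the distance monotonicity and closedness is if anything slightly more explicit than the paper's, which instead first dismisses the sub-$\Delta$ initial-distance case, but the argument is the same.
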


\begin{proof}
Suppose for contradiction that there is some walk $P \in \mathcal{C}$ that is not $\Delta$-optimal. Since we are in a closed class, this walk is recurrent. Thus after finite expected time some ant $A_i$ will take walk $P$. After $\Delta$ time, an ant $A_{i+1}$ will start chasing $A_i$ via some path in $\mathcal{C}$, maintaining a distance of $\Delta$ or less from $A_i$ at every time step. 

If $A_{i+1}$ ever manages to decrease the distance to $A_i$ below $\Delta$, then it will arrive at $t$ in less steps than $A_i$, contradicting Lemma \ref{uniformlength}. Hence, we will assume that the initial distance between $A_{i+1}$ and $A_i$ is $\Delta$. We shall show that there exists a legal ``pursuit strategy'' for ant $A_{i+1}$ that can occur with non-zero probability, and will cause the distance between $A_i$ and $A_{i+1}$ to drop below $\Delta$. This leads to a contradiction (due to Lemma \ref{uniformlength}).

Since $A_i$ follows a non-$\Delta$-optimal walk, there exists a vertex $v_l$ along $P$, such that $dist(v_l, v_{l+\Delta}) < \Delta$. Assume that $l$ is the minimal index for which this occurs. Since $l$ is minimal, by the pursuit rule, $A_{i+1}$ will with some probability pursue $A_i$ using the vertices $v_1, v_2 \ldots v_{l}$.

Once ant $A_{i+1}$ arrives at $v_l$, $A_i$ will be at $v_{l+\Delta}$, having walked from $v_l$ to $v_{l+\Delta}$ along the vertices of $P(A_i)=P$. Consequently, at this point in time, we will have that $d(A_i, A_{i+1}) = d(v_l, v_{l+\Delta}) < \Delta$. Since $A_{i+1}$ has successfully dropped the distance between itself and $A_i$ below $\Delta$, which was the distance between them at the moment $A_{i+1}$ emerged from $s$, the walk $P(A_{i+1})$ must be shorter than $P(A_i)$. This directly contradicts Lemma \ref{uniformlength}.
\end{proof}

We will be primarily interested in graphs $G$ for which closed classes contain only shortest paths from $s$ to $t$; in other words, graphs on which convergence to a shortest path is guaranteed. We will consider as a special case graphs for which there is a unique closed class which contains \textit{all} shortest paths. 

\theoremstyle{definition}
\begin{definition}[Convergent graphs] Let $\mathcal{M}_{\Delta} (s,t)$ be a Markov chain defined over the graph $G$. If all closed classes in $\mathcal{M}_{\Delta} (s,t)$ contain only shortest paths from $s$ to $t$, then $G$ is called $(s,t)$-\textbf{convergent} with respect to delay time $\Delta$. When this holds for \textit{all} pairs of vertices $(s,t)$, and any $\Delta > 1$, $G$ is called \textit{convergent}.
\end{definition}

\theoremstyle{definition}
\begin{definition}[Stable graphs] If, for a fixed $\Delta$, $\mathcal{M}_{\Delta} (s,t)$ has a unique closed class, then $G$ is called $(s,t)$-\textbf{stable} with respect to delay time $\Delta$. When this holds for \textit{all} pairs of vertices $(s,t)$, and any $\Delta > 1$, the graph $G$ is called \textit{stable}.
\end{definition}

In every Markov chain $\mathcal{M}_{\Delta} (s,t)$ over any graph $G$ there is at least one closed class that contains a shortest path (since walk lengths are monotonically non-increasing, and since in $G$ there is a shortest path from $s$ to $t$ and we can set that path to be $P(A_0)$). Thus if $G$ is stable, the unique closed class of $\mathcal{M}_{\Delta} (s,t)$ contains only shortest paths from $s$ to $t$. Thus a stable graph is in particular a convergent graph:

\begin{proposition}
If $G$ is $(s,t)$-stable with respect to $\Delta$, then it is $(s,t)$-convergent with respect to $\Delta$. 
\end{proposition}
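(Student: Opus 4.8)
The plan is to combine two facts—stability forces a single closed class, and an independent argument guarantees that some closed class consists of shortest paths—and then use Lemma~\ref{uniformlength} to conclude that the unique class contains shortest paths only.

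First I would show, \emph{without} assuming stability, that $\mathcal{M}_{\Delta}(s,t)$ always admits at least one closed class containing a shortest path. To do this I would initialize the pursuit by letting $A_0$ walk along a shortest path $P^{*}$ from $s$ to $t$. Since the walk lengths of successive ants are monotonically non-increasing (as established in the preliminary discussion) and $P^{*}$ already realizes the minimum possible length among all $s$--$t$ walks, every walk reachable from $P^{*}$ in the Markov chain must again attain this minimal length, i.e.\ must itself be a shortest path. Because $G$ is finite there are only finitely many shortest paths from $s$ to $t$, so the states reachable from $P^{*}$ form a finite set of shortest paths; a finite-state Markov chain started at any state is absorbed into a closed communicating class with probability $1$, so the class it enters from $P^{*}$ lies inside this finite set and in particular contains a shortest path.

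Next I would invoke the stability hypothesis: by definition $\mathcal{M}_{\Delta}(s,t)$ has exactly one closed class $\mathcal{C}$, so the shortest-path-containing class produced above must be $\mathcal{C}$ itself. Finally I would apply Lemma~\ref{uniformlength}, which asserts that all walks of a single closed class share a common length; as $\mathcal{C}$ contains a shortest path, that common length equals the shortest-path length, and hence every walk of $\mathcal{C}$ is a shortest path. Since $\mathcal{C}$ is the only closed class, all closed classes of $\mathcal{M}_{\Delta}(s,t)$ contain only shortest paths, which is precisely $(s,t)$-convergence with respect to $\Delta$.

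I do not expect a genuine obstacle, since the statement merely chains together facts already in hand. The step requiring the most care is the existence claim of the first paragraph, where I must correctly justify the passage from ``every walk reachable from $P^{*}$ is a shortest path'' to ``the closed class reached consists of shortest paths,'' relying on the finiteness of the shortest-path set together with the standard fact that a finite Markov chain is eventually absorbed into some closed communicating class.
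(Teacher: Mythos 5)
Your proposal is correct and follows essentially the same route as the paper: both start $A_0$ on a shortest path, use the monotone non-increase of walk lengths to obtain a closed class containing a shortest path, and then invoke stability (uniqueness of the closed class) together with the equal-length property of closed classes (Lemma~\ref{uniformlength}) to conclude that this unique class consists only of shortest paths. Your write-up merely makes explicit the absorption argument and the appeal to Lemma~\ref{uniformlength} that the paper leaves implicit.
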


In fact, as will be shown, stable graphs are precisely the graphs for which the walks of the ants converge to a unique limiting distribution over \textit{all} shortest paths from $s$ to $t$.

An example of a graph which is convergent but not stable is given in Figure 2, (a). It can be proven simply and directly that it is convergent, with the tools developed later in this section. It is not stable, since if $A_0$ takes the top shortest path from $s$ to $t$ no subsequent $A_i$ will ever use the bottom shortest path.\newline\newline

\setcounter{figure}{1}
\usetikzlibrary{automata}
\begin{figure}[h]
\caption{Examples of graphs which are not convergent or not stable.}
\centering
\begin{subfigure}{.8\textwidth}
\centering
\label{convergentnotstablefigure}
\usetikzlibrary{shapes.geometric}
\begin{tikzpicture}
[every node/.style={inner sep=0pt}]
[every node/.minimum size=1cm]
\node[label={[xshift=-0.3cm, yshift=-0.25cm]s}] (1) [circle,fill] at (87.5pt, -150.0pt) {1};
\node (2) [circle,fill] at (112.5pt, -137.5pt) {1};
\node (3) [circle,fill] at (137.5pt, -137.5pt) {1};
\node (4) [circle,fill] at (112.5pt, -162.5pt) {1};
\node (5) [circle,fill] at (137.5pt, -162.5pt) {1};
\node (6) [circle,fill] at (162.5pt, -137.5pt) {1};
\node (7) [circle,fill] at (162.5pt, -162.5pt) {1};
\node (8) [circle,fill] at (187.5pt, -137.5pt) {1};
\node (9) [circle,fill] at (187.5pt, -162.5pt) {1};
\node[label={[xshift=0.3cm, yshift=-0.25cm]t}] (10) [circle,fill] at (212.5pt, -150.0pt) {1};
\draw (1) to  (2);
\draw (1) to  (4);
\draw (4) to  (5);
\draw (2) to  (3);
\draw (4) to  (2);
\draw (5) to  (3);
\draw (3) to  (6);
\draw (5) to  (7);
\draw (8) to  (6);
\draw (7) to  (9);
\draw (8) to  (10);
\draw (9) to  (10);
\draw (7) to  (6);
\draw (9) to  (8);
\end{tikzpicture}
\subcaption{A convergent graph that is not a stable graph.}
\end{subfigure}
\newline\newline\newline\newline
\begin{subfigure}{.8\textwidth}
\centering
\begin{tikzpicture}
[every node/.style={minimum size=8pt, inner sep=0pt}]
\draw[black] (0:1) arc (0:360:10mm);
\foreach \phi in {1,...,4}{
	\node[circle,fill] at (360/5 * \phi:1cm) {\small \textcolor{black}1};
      }
\node[label={[xshift=0.27cm, yshift=-0.3cm]s}][circle,fill] at (360/5 * 5:1cm)  {\textcolor{black}{\small{1}}};     
\node[label={[xshift=0.25cm, yshift=-0.20cm]t}][circle,fill] at (360/5 * 6:1cm)  {\textcolor{black}{\small{t}}};   
\end{tikzpicture}
\subcaption{$C_5$: A graph which is not convergent.}
\end{subfigure}
\label{fig:examples1}
\end{figure}
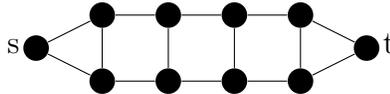
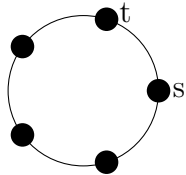

The simplest example of a graph that is not convergent (so also not stable) is the cycle on $n$ vertices $C_n$ for $n \geq 5$ (Figure 2, b). In fact, this graph has an infinite amount of closed classes for Markov chains with $\Delta = 2$ formed by looping clockwise from $s$ to $t$ and back to $s$ any number of times, where $s$ and $t$ are selected to be two adjacent vertices. It is easy to come up with additional graphs where convergence fails (these are typically, but not necessarily, sparse graphs with big cycles), and some more involved examples are provided in section 2.3.

The choice of $\Delta$ is important when discussing $(s,t)$-stability or convergence (but not for the generic properties ``stable graph'' and ``convergent graph''). For instance, the 5-cycle in Figure 2, (b) is $(s,t)$-convergent with respect to delay time $3$, but not delay time $2$.

Our first characterization of stable and convergent graphs will be in terms of local ``deformations'' of walks to one another.

\theoremstyle{definition}
\begin{definition}[$\delta$-deformability] 
\begin{enumerate} 
\item Let $P=U_1 U_2 U_3$ and $P^* = U_1 U_2' U_3$ be two walks from $v_1$ to $v_n$, such that $U_i$ is a (possibly empty) sub-walk in $G$, and such that $|U_2'| \leq |U_2| \leq \delta-1$. Then $P^*$ is said to be an \textit{atomic} $\delta$-deformation of $P$. 
\item If there is a sequence of atomic $\delta$-deformations of $P$ that results in $P'$, then $P'$ is said to be a $\delta$-deformation of $P$. 
\end{enumerate}
\end{definition}

A $\delta $-deformation of $P$ can result in a walk of shorter length (by replacing $U_2$ with a shorter or even empty sub-walk), but will never lengthen it. Atomic 2-deformations are illustrated in Figure \ref{fig:examples2}. Another example can be seen in Figure 8, (b), later on. \newline\newline

\setcounter{figure}{2}
\begin{figure}[h]
\centering
\begin{tikzpicture}[scale=1.6]
[every node/.style={minimum size=10pt, inner sep=0pt}]
\node (1) at (112.5pt, -187.5pt) {\small \textcolor{black}{$v_1$}};
\node (2) at (112.5pt, -150.0pt) {\small \textcolor{black}{$v_2$}};
\node (3) at (150.0pt, -150.0pt) {\small \textcolor{black}{$v_3$}};
\node (4) at (150.0pt, -187.5pt) {\small \textcolor{black}{$v_2'$}};
\node (5) at (212.5pt, -187.5pt) {\small \textcolor{black}{$v_1$}};
\node (6) at (237.5pt, -162.5pt) {\small \textcolor{black}{$v_2$}};
\node (7) at (262.5pt, -187.5pt) {\small \textcolor{black}{$v_3$}};
\draw [->, color=black] (2) to  (3);
\draw [->, dashed, color=black] (4) to  (3);
\draw [->, dashed, color=black] (1) to  (4);
\draw [->, color=black] (1) to  (2);
\draw [->, color=black] (5) to  (6);
\draw [->, color=black] (6) to  (7);
\draw [->, dashed, color=black] (5) to  (7);
\end{tikzpicture}
\caption{Illustration of a 2-deformation with and without walk shortening. Illustrated are walks drawn on the 4-cycle and 3-cycle graphs. On the left, we deform the path $v_1 v_2 v_3$ to $v_1 v_2' v_3$. On the right, we deform $v_1v_2v_3$ to $v_1v_3$.\newline}
\label{fig:examples2}
\end{figure}
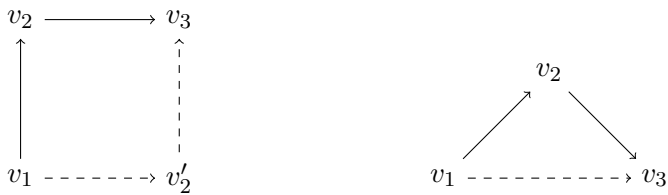

Intuitively, an atomic $\delta $-deformation of a walk $P$ is a \textit{small local change} (replacing at most $\delta-1$ vertices) of $P$. We prove the following:

\begin{proposition}
\label{convergentdeform}
$G$ is $(s,t)$-convergent with respect to $\Delta$ if and only if every walk from $s$ to $t$ is $\Delta$-deformable to a shortest path from $s$ to $t$.
\end{proposition}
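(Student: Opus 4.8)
The plan is to reduce the equivalence to a tight correspondence between single transitions of $\mathcal{M}_{\Delta}(s,t)$ and atomic $\Delta$-deformations, and then feed this correspondence into the Markov-chain structure already developed. Concretely, I would prove two bridging lemmas. \textbf{Lemma A:} whenever $Prob[P(A_{i+1})=P'\mid P(A_i)=P]>0$, the walk $P'$ is a $\Delta$-deformation of $P$. \textbf{Lemma B:} if $P$ is $\Delta$-optimal and $P^{*}$ is an atomic $\Delta$-deformation of $P$ with $|P^{*}|=|P|$, then $Prob[P(A_{i+1})=P^{*}\mid P(A_i)=P]>0$. I would also record the elementary fact that an atomic $\Delta$-deformation of a $\Delta$-optimal walk cannot shorten it: if the modified block sits between $a=v_p$ and $b=v_q$ with $q-p\le\Delta$, then $\Delta$-optimality forces $d_G(a,b)=q-p$, so the original block is already a geodesic and the replacement (being no longer and joining the same endpoints) has exactly the same length.

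For the forward direction I would argue as follows. Given convergence and an arbitrary walk $P$ from $s$ to $t$, set $P(A_0)=P$. Since every reachable closed class consists only of shortest paths and the chain enters a closed class in finite expected time, there is a positive-probability finite trajectory $P=P_0\to P_1\to\cdots\to P_m$ in which $P_m$ is a shortest path. By Lemma A each arrow is a $\Delta$-deformation, and a composition of $\Delta$-deformations is a $\Delta$-deformation, so $P$ is $\Delta$-deformable to the shortest path $P_m$.

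For the converse I would argue contrapositively using Lemmas \ref{uniformlength} and \ref{Toptimallemma}. Suppose $G$ is not $(s,t)$-convergent, so some closed class $\mathcal{C}$ contains a walk $P$ that is not a shortest path; by Lemma \ref{uniformlength} every walk of $\mathcal{C}$ has length $|P|$, and by Lemma \ref{Toptimallemma} every walk of $\mathcal{C}$ is $\Delta$-optimal. The key is a closure claim: if $W\in\mathcal{C}$ and $W^{*}$ is any atomic $\Delta$-deformation of $W$, then $W^{*}\in\mathcal{C}$. Indeed, $W$ is $\Delta$-optimal, so by the length-preservation fact $|W^{*}|=|W|$, and then Lemma B makes $W^{*}$ reachable from $W$ in one step with positive probability; since $\mathcal{C}$ is closed, $W^{*}\in\mathcal{C}$. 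Iterating this claim along any deformation sequence starting at $P$ shows that every walk $\Delta$-reachable from $P$ lies in $\mathcal{C}$ and hence has length $|P|$. But by hypothesis $P$ is $\Delta$-deformable to a shortest path, which is strictly shorter than $P$ --- a contradiction. Hence deformability to a shortest path forces $(s,t)$-convergence.

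The routine part of Lemma A is bookkeeping but must be done with care, and it is the first place the argument could go wrong. I would track the ``projected'' walks $R_k$ obtained by following the pursuer's trajectory $w_0 w_1 \cdots w_k$ and then the geodesic it currently aims along to the pursued agent's position $v_{\Delta+k+1}$, followed by the remaining tail of $P$; the pursuer's realized walk $P'$ is the limit of the $R_k$. One checks that consecutive $R_k$ differ by replacing an internal block of at most $d(w_k,v_{\Delta+k+1})-1\le\Delta-1$ vertices by a geodesic between the same endpoints, which is exactly an atomic $\Delta$-deformation; getting the window size to land on $\Delta-1$ rather than $\Delta$ requires choosing the fixed endpoints correctly. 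The genuine obstacle, however, is Lemma B: I must exhibit a positive-probability pursuit that reproduces $W^{*}$ exactly. Outside the single modified window the pursuer simply shadows $W$, which is a legal greedy step precisely because $\Delta$-optimality makes every length-$\Delta$ window of $W$ a geodesic; inside the window I must verify that each prescribed step of $U_2'$ still lies on a shortest path to the pursued's current vertex. This reduces to showing that over scales of at most $\Delta$ the graph metric is geodesically additive along $W$ and $W^{*}$ (both blocks being geodesics between the same endpoints at distance $\le\Delta$), and carrying out this local distance analysis is the crux of the whole proposition.
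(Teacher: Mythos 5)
Your overall architecture matches the paper's (your Lemma A is the paper's Lemma \ref{walkdeform}, and your closure claim is its Lemma \ref{closedwalk}), but Lemma B is false as you state it: $\Delta$-optimality of $P$ together with $|P^{*}|=|P|$ does \emph{not} imply a positive one-step transition probability. Counterexample for $\Delta=2$: take vertices $u_1,u_2,u_2',u_3,u_4$ with edges $u_1u_2$, $u_2u_3$, $u_3u_4$, $u_1u_2'$, $u_2'u_3$, $u_2'u_4$ (plus reflexive loops), and $s=u_1$, $t=u_4$. The walk $P=u_1u_2u_3u_4$ is $2$-optimal, since $d(u_1,u_3)=d(u_2,u_4)=2$, and $P^{*}=u_1u_2'u_3u_4$ is an atomic $2$-deformation of equal length. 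Yet $Prob[P(A_{i+1})=P^{*}\mid P(A_i)=P]=0$: a pursuer that has stepped to $u_2'$ faces the pursued agent at $u_4$ with $d(u_2',u_4)=1$, and the pursuit rule forces it to jump directly to $u_4$, skipping $u_3$. This breaks exactly the step you flag as the crux: ``geodesic additivity at scales $\leq\Delta$'' does not follow from the two blocks being geodesics between the same endpoints, because $\Delta$-optimality constrains $d(v_i,v_{i+k})$ only for $k\leq\Delta$, and an interior vertex of the replacement block may be strictly closer to vertices further down the walk than the vertex it replaced.

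The repair is precisely where the paper's Lemma \ref{closedwalk} differs from your Lemma B: it assumes $P$ lies in a \emph{closed class}, and at the critical point (pursuer standing on an interior vertex $u_k'$ of the replaced block) it derives $d(u_k',u_{k+\Delta})=\Delta$ not by local metric analysis but by a global recurrence argument --- if the inequality were strict, the pursuer would with positive probability complete a walk shorter than $|P|$, contradicting Lemma \ref{uniformlength}. (Consistently, in the counterexample above $P$ is not in a closed class: it transitions with positive probability to the shorter walk $u_1u_2'u_4$.) Since in your contrapositive direction you only ever apply Lemma B to walks $W\in\mathcal{C}$, strengthening its hypothesis from ``$\Delta$-optimal'' to ``belongs to a closed class'' and importing this recurrence argument repairs your proof without disturbing anything else; after that fix, your argument coincides with the paper's in both directions (forward: Lemma \ref{walkdeform} plus absorption into a closed class in finite expected time; converse: closure of closed classes under atomic deformations plus uniformity of lengths within a class). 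As written, however, the proposed reduction of Lemma B to a purely local distance analysis is a dead end, not a routine verification.
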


\begin{proposition}
\label{stabledeform}
$G$ is $(s,t)$-stable if and only if there exists a fixed shortest path $P$ from $s$ to $t$ such that every walk from $s$ to $t$ is $\Delta$-deformable to $P$. 
\end{proposition}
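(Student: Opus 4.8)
The plan is to reduce everything to the correspondence between reachability in $\mathcal{M}_{\Delta}(s,t)$ and $\Delta$-deformability that already drives Proposition~\ref{convergentdeform}. Concretely, the engine behind that proposition is the fact that, for two walks $P_1,P_2$ from $s$ to $t$, the chain can pass from $P_1$ to $P_2$ with positive probability in finitely many steps \emph{if and only if} $P_2$ is a $\Delta$-deformation of $P_1$: a single pursuit modifies the predecessor's walk only by replacing sub-walks of length at most $\Delta-1$ with shorter-or-equal ones (so every pursuit outcome is a $\Delta$-deformation), and conversely each atomic $\Delta$-deformation is realized by some pursuit with positive probability. I would isolate this equivalence once and then use it as a black box. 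Throughout, write ``$P_1\to P_2$'' for positive-probability reachability in the chain, and recall that both $\to$ and $\Delta$-deformability are reflexive and transitive.

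For the forward direction, assume $G$ is $(s,t)$-stable, so $\mathcal{M}_{\Delta}(s,t)$ has a unique closed class $\mathcal{C}$. Since every such chain has at least one closed class containing a shortest path, uniqueness forces $\mathcal{C}$ to contain a shortest path, and then Lemma~\ref{uniformlength} makes \emph{every} walk of $\mathcal{C}$ a shortest path. Fix any $P\in\mathcal{C}$; it is a shortest path, as required. Now take an arbitrary walk $W$ from $s$ to $t$. Started at $W$, the chain reaches a closed class in finite expected time almost surely, and with a unique closed class this forces $W\to Q$ for some $Q\in\mathcal{C}$; by the correspondence, $Q$ is a $\Delta$-deformation of $W$. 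Because $\mathcal{C}$ is a communicating class, $Q\to P$, so $P$ is a $\Delta$-deformation of $Q$, and transitivity then yields that $P$ is a $\Delta$-deformation of $W$, which is exactly the claim.

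For the converse, suppose there is a fixed shortest path $P$ to which every walk from $s$ to $t$ is $\Delta$-deformable. By the correspondence, $W\to P$ for every walk $W$. Let $\mathcal{C}$ be any closed class (one exists) and pick $W\in\mathcal{C}$; then $W\to P$, and since a closed class is closed under reachability, $P\in\mathcal{C}$. Hence every closed class contains $P$, and because distinct closed classes are disjoint, there can be only one. Thus $\mathcal{M}_{\Delta}(s,t)$ has a unique closed class and $G$ is $(s,t)$-stable.

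The only genuinely substantive ingredient is the reachability/deformability correspondence invoked at the start, which is precisely the technical lemma underlying Proposition~\ref{convergentdeform}; once it is available, the present proposition is pure bookkeeping about closed classes and transitivity of deformations. The point requiring the most care is in the forward direction, where one must ensure that from an \emph{arbitrary} initial walk the chain actually reaches the \emph{unique} closed class (so that $W\to Q\in\mathcal{C}$), rather than merely some closed class — a conclusion that is legitimate here only because stability guarantees uniqueness.
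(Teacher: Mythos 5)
Your overall skeleton (forward: enter the unique class, use reachability-implies-deformability, then communicate within the class to a fixed $P$; converse: every closed class must absorb $P$, and disjointness forces uniqueness) is essentially the paper's proof. However, the black-box equivalence you rest it on is false as stated. The direction ``reachability implies $\Delta$-deformability'' is Lemma \ref{walkdeform} and holds for arbitrary walks, so your forward direction, which uses only this (together with the standard facts that the chain a.s.\ enters a closed class and that Lemma \ref{uniformlength} makes every walk of a class containing a shortest path itself a shortest path), is sound. But the reverse direction --- ``each atomic $\Delta$-deformation is realized by some pursuit with positive probability'' --- is \emph{not} true for general walks; it is exactly Lemma \ref{closedwalk}, whose proof needs the $\Delta$-optimality and equal-length constraints that membership in a closed class provides. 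A concrete failure with $\Delta=2$: take vertices $a,b,b',c,t$ with edges $ab$, $ab'$, $ac$, $bc$, $b'c$, $ct$, and the walk $W=abct$. Replacing $b$ by $b'$ is a valid atomic $2$-deformation (the replaced sub-walk has length $1=\Delta-1$), yet a pursuer starting at $a$ faces its predecessor at $c$ with $d(a,c)=1$, is forced onto the edge $ac$ by the pursuit rule, and takes the walk $act$ with probability $1$; the walk $ab'ct$ is reachable from $W$ with probability zero. Consequently your intermediate assertion in the converse, ``$W\to P$ for every walk $W$,'' is unjustified at the point where you make it (it happens to be true, but only a posteriori, once uniqueness of the closed class has been established --- the very thing being proved).

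The damage is local and easily repaired, because you only ever \emph{use} $W\to P$ for $W$ chosen inside a closed class $\mathcal{C}$, and for such $W$ deformability does imply reachability: iterating Lemma \ref{closedwalk} along the atomic deformation sequence keeps every intermediate walk in $\mathcal{C}$ and realizes each atomic step with positive probability, so $P\in\mathcal{C}$ as you want. With the correspondence correctly scoped in this way, your argument coincides with the paper's: its forward direction combines Proposition \ref{convergentdeform} and Lemma \ref{walkdeform} with the closure of the unique class under deformations, and its converse is precisely the observation that, by Lemma \ref{closedwalk}, every closed class containing a walk deformable to $P$ must contain $P$, whence there is only one class. So the verdict is: right architecture, correct forward direction, but the unrestricted ``iff'' you isolate as the one substantive ingredient is false in the deformability-to-reachability direction, and the converse as literally written leans on that false generality.
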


Proposition \ref{convergentdeform} can be interpreted as saying that if any walk $P$ is transformable to a shortest path by a sequence of small, local $\Delta$-changes to its vertices, then our ants are guaranteed to find a shortest path to their destination (in finite expected time), and vice-versa.

For the proofs, we will use the following lemmas.

\begin{lemma}
\label{closedwalk}
If $P$ is a walk belonging to a closed communicating class of $\mathcal{M}_{\Delta} (s,t)$, then so is any $\Delta$-deformation of $P$.
\end{lemma}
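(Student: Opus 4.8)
The plan is to reduce the general case to a single atomic $\Delta$-deformation and then show that such a deformation does not take us out of the closed class. So suppose $P \in \mathcal{C}$ and let $P^*$ be an atomic $\Delta$-deformation of $P$, writing $P = U_1 U_2 U_3$ and $P^* = U_1 U_2' U_3$ with $|U_2'| \leq |U_2| \leq \Delta - 1$. I want to argue two things: first, that $P^*$ is itself a walk from $s$ to $t$ that is reachable from $P$ with positive probability in the Markov chain $\mathcal{M}_\Delta(s,t)$; and second, that $P$ is reachable back from $P^*$. Together these say $P$ and $P^*$ communicate, and since $P$ lies in a closed class $\mathcal{C}$ and closed classes are the maximal communicating sets, $P^*$ must also lie in $\mathcal{C}$. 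The full statement for an arbitrary $\Delta$-deformation then follows by induction on the number of atomic deformations in the defining sequence, since communication is transitive and a closed class is closed under communication.

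The heart of the argument is the positive-probability reachability of $P^*$ from $P$ under the pursuit rule. First I would recall the setup from Lemma \ref{Toptimallemma}: once some ant $A_i$ takes the recurrent walk $P$ (which happens in finite expected time), the next ant $A_{i+1}$ pursues it maintaining distance at most $\Delta$, and we may assume the distance stays exactly $\Delta$ throughout (if it ever drops below $\Delta$, the resulting walk is strictly shorter, contradicting Lemma \ref{uniformlength} since $P^*$ has the same length as $P$ when $|U_2'| = |U_2|$; the length-shortening case is handled separately below). The key point is that $A_{i+1}$, trailing $A_i$ at distance $\Delta$, has freedom in its move choices whenever multiple shortest paths to $A_i$'s current location exist. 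I would show that $A_{i+1}$ can be steered — with the positive probability guaranteed by the uniform-random choice in the pursuit rule — to traverse $U_1$, then the alternate segment $U_2'$, and finally $U_3$, thereby producing exactly the walk $P^*$. The crucial feasibility check is that replacing $U_2$ by the shorter-or-equal segment $U_2'$ of length at most $\Delta - 1$ keeps $A_{i+1}$ within distance $\Delta$ of $A_i$ at every step: because $|U_2|, |U_2'| \leq \Delta - 1$, the endpoints of the deformed segment are within the $\Delta$-disk that $A_{i+1}$ is allowed to reach, so a shortest path realizing the desired detour is always available as a legal pursuit choice.

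The main obstacle I anticipate is making the ``steering'' argument fully rigorous in the case $|U_2'| = |U_2|$, where no length is lost: I must exhibit a concrete sequence of shortest-path choices for $A_{i+1}$ that (a) is legal under the pursuit rule at every time step, (b) keeps the pursuer–pursued distance from ever falling below $\Delta$ (so as not to run afoul of Lemma \ref{uniformlength} in a direction I do not want), and yet (c) produces $P^*$ rather than $P$. Here the reflexivity of $G$ and the fact that $\Delta > 1$ give the pursuer the slack it needs to "linger" on the alternate vertices of $U_2'$ while $A_i$ advances. For the reverse direction, establishing that $P$ is reachable from $P^*$, I would observe that $U_2$ is likewise an atomic $\Delta$-deformation target of $U_2'$ (the relation between $U_2$ and $U_2'$ is symmetric when their lengths are equal), so the same steering argument applies with the roles reversed. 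Finally, when $|U_2'| < |U_2|$ strictly, $P^*$ is a strictly shorter walk; but since both $P$ and $P^*$ are asserted to lie in the same closed class and closed classes have uniform length by Lemma \ref{uniformlength}, this strict case cannot arise for walks already inside $\mathcal{C}$ — so the only deformations relevant to the closure claim are the length-preserving ones, and it is precisely those that the communication argument covers.
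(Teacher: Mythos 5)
Your overall strategy --- reduce to a single atomic deformation, then show that the pursuer $A_{i+1}$ can with positive probability trace out $P^*$ while $A_i$ walks $P$ --- is the same as the paper's, and your reduction via transitivity is fine (in fact one direction of reachability suffices: since $\mathcal{C}$ is closed, any state reachable from $P$ with positive probability already lies in $\mathcal{C}$, so your reverse-direction argument is superfluous). But there are two genuine gaps. First, your dismissal of the strictly shortening case $|U_2'| < |U_2|$ is circular: you justify it by saying ``both $P$ and $P^*$ are asserted to lie in the same closed class,'' but that assertion is exactly the conclusion of the lemma. The paper instead rules this case out using Lemma \ref{Toptimallemma}: since $P \in \mathcal{C}$ is $\Delta$-optimal and $|U_2| \leq \Delta - 1$, the endpoints of the replaced segment sit at index difference $|U_2| + 1 \leq \Delta$ in $P$, hence at graph distance exactly $|U_2| + 1$; therefore no strictly shorter replacement exists, and necessarily $|U_2'| = |U_2|$, so $|P^*| = |P|$. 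You never establish this equality, even though your steering argument is explicitly conditioned on it.

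Second --- and this is the heart of the matter --- the step-by-step feasibility check that you flag as an ``anticipated obstacle'' is precisely what the paper's proof consists of: a four-case verification that when $A_{i+1}$ stands on the $k$th vertex of $P^*$ and $A_i$ on the $(k+\Delta)$th vertex of $P$, the next vertex of $P^*$ lies on a shortest path from $A_{i+1}$'s position to $A_i$'s position. Your plan for carrying it out would not work as described. The pursuit rule never allows the pursuer to ``linger'': when the agents are apart, $A_{i+1}$ must strictly decrease its distance to $A_i$'s current location at every step, so reflexivity of $G$ gives no slack here (it only matters when the two agents are co-located). Nor does it suffice that ``the endpoints of the deformed segment are within the $\Delta$-disk'': what must be checked is that each internal vertex $u_k'$ of $U_2'$ satisfies $d(u_k', u_{k+\Delta}) = \Delta$ and $d(u_{k+1}', u_{k+\Delta}) \leq \Delta - 1$, where $u_{k+\Delta}$ is the vertex of $P$ occupied by $A_i$ at that instant, so that stepping to $u_{k+1}'$ is a legal pursuit choice. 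The equality $d(u_k', u_{k+\Delta}) = \Delta$ is not automatic: a priori one has only $d(u_k', u_{k+\Delta}) \leq \Delta$, and the paper excludes the strict inequality by observing that it would let $A_{i+1}$ finish with a walk shorter than $|P|$, contradicting Lemma \ref{uniformlength}. Your sketch contains this ingredient (distance dropping below $\Delta$ contradicts uniform length) but never assembles it into the local verification; as written, the central step of the proof remains undone.
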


\begin{proof}
First we prove the statement for the case of atomic $\Delta$-deformations. Let $P' = U_1U_2'U_3$ be an atomic $\Delta$-deformation of $P=U_1U_2U_3$. Suppose $A_i$ takes walk $P$. We will show $A_{i+1}$ can take walk P' with non-zero probability.

We note that since $P$ is $\Delta$-optimal (due to belonging to a closed class), we must have precisely $|U_2'| = |U_2|$ (otherwise the $\Delta$-optimality of $P$ would break at the last vertex of $U_1$). Therefore we also have that $|P| = |P'|$.  

Let $P = u_1 u_2 \ldots u_n$ and $P' = u_1 \ldots u_j u_{j+1}' \ldots u_{j+\Delta-1}' u_{j+\Delta} \ldots u_n$, such that $U_2' = u_{j+1}' \ldots u_{j+\Delta-1}'$. It suffices to show that if $A_{i+1}$ is standing on the $k$th vertex of $P'$ and $A_i$ on the $(k+\Delta)$th vertex of $P$, then $A_{i+1}$ may move to the $(k+1)$th vertex of $P'$ in the next time step in pursuit of $A_i$. We show this by separation to cases:

\begin{enumerate}
\item If $A_{i+1}$ is standing on $u_k$, for $1 \leq k < j$, then $A_{i}$ is standing on $u_{k+\Delta}$, and by the $\Delta$-optimality of $P$ we have that $d(u_k, u_{k+\Delta}) = \Delta$, implying also that $d(u_{k+1}, u_{k+\Delta}) = \Delta - 1$. Hence, $A_{i+1}$ may with some probability move to $u_{k+1}$, due to the pursuit rule.

\item If $A_{i+1}$ is standing on $u_j$, then since $|U_2'| = |U_2|$ we have that $d(u_{j+1}', u_{j+\Delta}) \leq \Delta - 1$ (in fact this is an equality). Furthermore, similar to (1) we have that $d(u_{j}, u_{j+\Delta}) = \Delta$, so we may have $A_{i+1}$ move to $u_{j+1}'$ in the next time step.

\item If $A_{i+1}$ is standing on $u_k'$ for $j < k < j+\Delta$, then since $|U_2| = |U_2'|$ we have that $d(u_k', u_{k+\Delta}) \leq \Delta$. However, since $P$ belongs to a closed communicating class, this inequality cannot be strict (otherwise $A_{i+1}$ would've been able to lower its overall walk length below $|P(A_i)|$). Therefore we have that $d(u_k', u_{k+\Delta}) = \Delta$. Furthermore due to the indices and the fact that $P'$ is a walk we have that $d(u_{k+1}', u_{k+\Delta}) \leq \Delta-1$, so $A_{i+1}$ may move to $u_{k+1}'$ (or $u_{k+1}$, if $k = \Delta - 1$) in pursuit of $A_i$. 

\item If $j + \Delta \leq k$, then the proof proceeds similar to (1).
\end{enumerate}

This shows that $A_{i+1}$ may always, with some probability, pursue $A_i$ by taking the next vertex of $P'$, and this in turn shows that $P'$ belongs to the same closed class as $P$. Since any $\Delta$-deformation of $P$ can be constructed from a sequence of atomic deformations, the proof is complete.
\end{proof}

\begin{lemma}
\label{walkdeform}
For a $\Delta$-delay chain pursuit, we have that: for all $i$, $P(A_{i+1})$ is a $\Delta$-deformation of $P(A_i)$.
\end{lemma}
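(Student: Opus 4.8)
The plan is to interpolate explicitly between $P := P(A_i)$ and $Q := P(A_{i+1})$ through a finite chain of hybrid walks, each obtained from the previous one by a single atomic $\Delta$-deformation. Write $P = w_0 w_1 \cdots w_m$ with $w_0 = s$ and $w_m = t$, adopting the convention $w_j = t$ for all $j \ge m$ (reflecting that $A_i$ halts at $t$). Let $q_0 q_1 \cdots q_K = Q$ record the successive positions of $A_{i+1}$, where $q_k$ is its location at time $\Delta + k$; thus $q_0 = s$ and $q_K = t$. By the pursuit rule of Definition \ref{Pursuitrule}, at time $\Delta + k$ the agent $A_{i+1}$ sits at $q_k$, its predecessor sits at $w_{\Delta+k}$, and $q_{k+1}$ is the first vertex after $q_k$ on \emph{some} shortest path from $q_k$ to $w_{\Delta+k}$. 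Set $\delta_k := d(q_k, w_{\Delta+k})$; by the distance bound established before Lemma \ref{uniformlength} we have $\delta_k \le \Delta$ throughout, and (away from the catch-up case $\delta_k = 0$) the sequence $\delta_k$ is non-increasing.

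First I would define, for each $0 \le k \le K$, the hybrid walk
$$R_k = q_0 q_1 \cdots q_k \cdot \pi_k \cdot w_{\Delta+k+1} w_{\Delta+k+2} \cdots w_m,$$
where $\pi_k$ is a shortest path from $q_k$ to $w_{\Delta+k}$ and the trailing block is the suffix of $P$ from index $\Delta+k+1$ onward (empty once $\Delta + k \ge m$). The crucial design choice — and the step I expect to be the main obstacle — is that $\pi_k$ must be selected so that its first edge is exactly the move $q_k \to q_{k+1}$ actually taken by $A_{i+1}$; such a shortest path exists precisely because $A_{i+1}$'s step is, by definition, the first step of a shortest path to $w_{\Delta+k}$. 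Without this alignment the segment of $R_k$ not shared with $R_{k+1}$ would span $\delta_k \le \Delta$ interior vertices, one too many for an atomic $\Delta$-deformation; forcing $\pi_k$ through $q_{k+1}$ lets the shared prefix of $R_k$ and $R_{k+1}$ absorb the vertex $q_{k+1}$, trimming the changed block to $\delta_k - 1 \le \Delta - 1$ vertices.

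Next I would check the two ends of the chain and the generic interior step. For $R_0$ the shared prefix with $P$ is $w_0 = q_0$ and the shared suffix is $w_\Delta w_{\Delta+1}\cdots w_m$ (the hybrid's shortest path rejoins $P$ at $w_\Delta$), so $R_0$ is obtained from $P$ by replacing the $\Delta - 1$ interior vertices $w_1 \cdots w_{\Delta-1}$ with the $\delta_0 - 1 \le \Delta - 1$ interior vertices of $\pi_0$, an atomic $\Delta$-deformation. At the other end $q_K = t$, so $\pi_K$ is trivial and the tail empty, giving $R_K = Q$. For the interior step I would compare $R_k$ and $R_{k+1}$: by the alignment they share the prefix $q_0\cdots q_{k+1}$ and the suffix $w_{\Delta+k+1}\cdots w_m$; the block $U_2$ of $R_k$ between the anchors $q_{k+1}$ and $w_{\Delta+k+1}$ consists of the interior of $\pi_k$ past $q_{k+1}$ together with $w_{\Delta+k}$, exactly $\delta_k - 1 \le \Delta - 1$ vertices, while the corresponding block $U_2'$ of $R_{k+1}$ is the interior of $\pi_{k+1}$, of size $\delta_{k+1} - 1 \le \delta_k - 1$. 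This is precisely an atomic $\Delta$-deformation.

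Finally I would dispatch the boundary and degenerate cases. When $A_i$ has already reached $t$ (so the stationary target $w_{\Delta+k} = t$ is fixed), both $R_k$ and $R_{k+1}$ end with shortest paths from $q_{k+1}$ to $t$ whose interiors differ in at most $d(q_{k+1},t) - 1 \le \Delta - 1$ vertices; and when $\delta_k = 0$ the reflexive pursuit rule forces $q_{k+1} = q_k$, and a direct check shows $R_k$ and $R_{k+1}$ coincide (or differ only by the trivial reflexive adjustment mentioned after Definition \ref{Pursuitrule}). Chaining everything yields $P \to R_0 \to R_1 \to \cdots \to R_K = Q$, a finite sequence of atomic $\Delta$-deformations, so $Q = P(A_{i+1})$ is a $\Delta$-deformation of $P = P(A_i)$, as claimed.
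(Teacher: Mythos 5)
Your proof is correct and takes essentially the same route as the paper: your hybrid walks $R_k$ are exactly the paper's interpolating sequence $P_k$ (a prefix of $P(A_{i+1})$, a bridging shortest path forced through the pursuer's actual next step, and a suffix of $P(A_i)$), including the same key alignment observation that the pursuit rule supplies a shortest path from $q_k$ to $w_{\Delta+k}$ whose first edge is $q_k q_{k+1}$, which keeps each changed block to at most $\Delta-1$ vertices. Your explicit accounting of the block sizes $\delta_k - 1$ and of the catch-up case $\delta_k = 0$ (which works out under the paper's convention that the shortest path from a vertex to itself is the reflexive path $v \to v$ of Definition \ref{Pursuitrule}) is in fact more detailed than the paper's own verification.
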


\begin{proof}

We will define a sequence $P_0, P_1, P_2, \ldots P_N$ of atomic $\Delta$-deformations that deform $P(A_i)$ to $P(A_{i+1})$, such that $P_0 = P(A_i)$ and $P_N = P(A_{i+1})$. $P_k$ is defined recursively, based on $P_{k-1}$.

Write $P_0 = v_1 \ldots v_{n_1}$ and $P_N = u_1 \ldots u_{n_2}$, where $v_1 = u_1$ and $v_{n_1} = u_{n_2}$. Note that by the pursuit rule we have for all $r$ that $d(u_r, v_{r+\Delta}) \leq \Delta$ and that $u_{r+1}$ lies on a shortest path from $u_r$ to $v_{r+\Delta}$. Thus there is always a shortest path from $u_r$ to $v_{r+\Delta}$ passing through at most $\Delta - 1$ vertices, starting with $u_{r+1}$. 

For a given $k$, $P_k$ is a $\Delta$-deformation of $P_{k-1}$, replacing the sub-walk $u_k \ldots v_{k+\Delta}$ with a shortest path from $u_k$ to $v_{k+\Delta}$ passing through $u_{k+1}$ (if $k+\Delta > n$ we set $v_{k+\Delta} = v_n$).

An example of the sequence for $\Delta = 3$ and $|P_0| = |P_N| = 6$ is seen below. $x_i$ is an arbitrary vertex along a shortest path, as constrained by the definition of $P_k$.

\large
\begin{center}
	$\textrm{ {} {} {} {} {} {} } \mathrm{v_1v_2v_3v_4v_5v_6} \stackrel{3-def.}{\rightsquigarrow} $
    
	$\textrm{ {} {} {} }\mathrm{v_1\mathbf{u}_2\mathbf{x}_3v_4v_5v_6} \rightsquigarrow $
    
    $\textrm{ {} {} {} }\mathrm{v_1u_2\mathbf{u}_3\mathbf{x}_4v_5v_6} \rightsquigarrow $
    
	$\textrm{ {} {} {} }\mathrm{v_1u_2u_3\mathbf{u}_4\mathbf{x}_5v_6} \rightsquigarrow $
    
	$\mathrm{u_1u_2u_3u_4\mathbf{u}_5u_6} \notag $
\end{center}
\normalsize

The following is always true:

\begin{enumerate}
\item $P_k$ is a valid walk in $G$.

\item $P_k$ is always a $\Delta$-deformation of $P_{k-1}$, as it replaces at most $\Delta - 1$ vertices (note that $P_1$ is well-defined since $u_1 = v_1$)

\item The first $k$ vertices of $P_k$ match those of $P_N$.
\end{enumerate}

So we see that for some $N < n_2$, we will have $P_N = P(A_{i+1})$, completing the proof.
\end{proof}

An important corollary of the Lemmas just proven is that a closed communicating class $\mathcal{C}$ is equal, precisely, to the closure under $\Delta$-deformations of any walk $P \in \mathcal{C}$. 

To prove Proposition \ref{convergentdeform} we apply the lemmas. In one direction, assume that the graph is convergent. Then for every walk $P$ there is a valid sequence of walks taken by successive ants (the first ant taking walk $P$) that goes to a shortest path. By Lemma \ref{walkdeform} this means that every walk can be $\Delta$-deformed to a shortest path. The other direction follows from lemma \ref{closedwalk}, and due to the fact that the walks of the ants will enter some closed class of $\mathcal{M}$ in finite expected time. \qed

To prove Proposition \ref{stabledeform}, in one direction, suppose that the graph is stable. So it has a unique closed class. Recall that every stable graph is a convergent graph. Due to Proposition \ref{convergentdeform} and \ref{walkdeform}, this implies that every walk is $\Delta$-deformable to a shortest path from the unique closed class. All paths in this class are deformable to each other due to the mentioned closure, so it follows that every walk is deformable to a fixed (shortest) path. In the other direction, suppose every walk is deformable to the same shortest path $P$. Then it immediately follows from \ref{closedwalk} that there is just one closed class, so the graph is stable. \qed

Earlier we mentioned that the unique closed class of $\mathcal{M}_{\Delta} (s,t)$ when $G$ is stable necessarily contains \textit{all} shortest paths from $s$ to $t$. We can now prove this. To start, we know the unique closed class, that we will denote $\mathcal{C}$, can contain only shortest paths. Now let $P$ be a shortest path not in $\mathcal{C}$. Since the successors of any agent following this path must eventually (in expected finite time) end up taking a path in $\mathcal{C}$, it follows from Lemma \ref{walkdeform} that there is a sequence of atomic $\Delta$-deformations of $P$ - each deforming it necessarily to another shortest path - such that at the end of this sequence is a shortest path belonging to $\mathcal{C}$. Let $P'$ be the penultimate path in this sequence, that is, the last one not belonging to $\mathcal{C}$. Any $\Delta$-deformation $P^*$ of $P'$ cannot shorten it (as it is a shortest path), thus it necessarily replaces a sub-walk of length $k \leq \Delta-1$ in $P'$ with a different sub-walk of length $k$. But by doing the reverse we can $\Delta$-deform $P^*$ back to $P'$, thus $P'$ belongs to $\mathcal{C}$ - contradiction. This gives a stronger characterization of stable graphs:

\begin{proposition}
\label{stabledeformpaths}
$G$ is $(s,t)$-stable (with respect to $\Delta$), iff its unique closed class contains all shortest paths from $s$ to $t$. 
\end{proposition}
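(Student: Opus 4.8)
The plan is to split the biconditional into its two directions and observe at the outset that the reverse direction is essentially definitional, so that all the real content sits in the forward direction. For the reverse direction, if we are handed a Markov chain $\mathcal{M}_{\Delta}(s,t)$ whose \emph{unique} closed class happens to contain every shortest path, then in particular $\mathcal{M}_{\Delta}(s,t)$ has exactly one closed class, and that is by definition precisely what it means for $G$ to be $(s,t)$-stable; no further argument is needed. I would therefore devote the proof to the forward direction: assuming $G$ is $(s,t)$-stable, with unique closed class $\mathcal{C}$, I want to show that $\mathcal{C}$ contains \emph{all} shortest paths from $s$ to $t$. Since every stable graph is convergent (by the earlier proposition), I may start from the weaker fact that $\mathcal{C}$ contains \emph{only} shortest paths, and the goal is to upgrade ``only'' to ``all.''

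I would argue by contradiction. Suppose some shortest path $P$ lies outside $\mathcal{C}$. The paper already records that the walks of successive ants enter a closed class in finite expected time; since $\mathcal{C}$ is the sole closed class, there is a concrete finite run of agents with $P(A_0) = P$ whose walk eventually lands inside $\mathcal{C}$. By Lemma \ref{walkdeform}, each consecutive walk $P(A_{j+1})$ is a $\Delta$-deformation of its predecessor $P(A_j)$, and every $\Delta$-deformation refines into atomic ones, so concatenating yields a finite chain of atomic $\Delta$-deformations leading from $P$ into $\mathcal{C}$. A crucial bookkeeping point is that \emph{every} walk along this chain is itself a shortest path: atomic $\Delta$-deformations never increase length, so each intermediate walk has length at most $|P|$, while any $s$--$t$ walk has length at least the shortest-path length $|P|$; hence each has length exactly $|P|$.

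Now let $P'$ be the last walk in the chain that does \emph{not} belong to $\mathcal{C}$, and let $P^{\ast}\in\mathcal{C}$ be the walk immediately following it, so that $P^{\ast}$ is an atomic $\Delta$-deformation of $P'$. Because $|P'| = |P^{\ast}|$, this deformation cannot shorten anything: it replaces a sub-walk $U_2$ of length $k \le \Delta - 1$ by a sub-walk $U_2'$ of the \emph{same} length $k$. The decisive observation is that an equal-length replacement is symmetric — swapping $U_2'$ back for $U_2$ is itself a legal atomic $\Delta$-deformation (the length constraint $|U_2|\le\Delta-1$ is met in both directions) — so $P'$ is a $\Delta$-deformation of $P^{\ast}$. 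Since $P^{\ast}\in\mathcal{C}$, Lemma \ref{closedwalk} then forces $P'\in\mathcal{C}$, contradicting the choice of $P'$. This contradiction shows no shortest path can lie outside $\mathcal{C}$, completing the forward direction.

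The main obstacle, and the one step I would be careful to spell out, is exactly this reversibility argument. The definition of an atomic $\Delta$-deformation is asymmetric only through the inequality $|U_2'|\le|U_2|\le\Delta-1$, and the whole proof hinges on recognizing that when the two endpoints of a single atomic step are \emph{both} shortest paths the step is length-preserving and hence invertible. This invertibility is what lets Lemma \ref{closedwalk} propagate class membership \emph{backwards} along the deformation chain rather than only forwards. Everything else — producing the chain from Lemma \ref{walkdeform}, and the length accounting that pins every intermediate walk to be a shortest path via Lemma \ref{uniformlength} and Proposition \ref{convergentdeform} — is routine once this symmetry is in hand.
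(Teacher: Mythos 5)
Your proposal is correct and follows essentially the same route as the paper: both argue by contradiction, use Lemma \ref{walkdeform} to produce a chain of atomic $\Delta$-deformations from a putative shortest path $P \notin \mathcal{C}$ into $\mathcal{C}$, note that every intermediate walk must itself be a shortest path, and then exploit the reversibility of a length-preserving atomic deformation together with Lemma \ref{closedwalk} to pull the penultimate walk $P'$ into $\mathcal{C}$, yielding the contradiction. The key invertibility observation you highlight is exactly the step the paper's proof rests on (``by doing the reverse we can $\Delta$-deform $P^*$ back to $P'$''), and your treatment of the reverse direction as definitional matches the paper's implicit handling of it.
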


$\mathcal{M}_{\Delta} (s,t)$ has a unique closed class, and it is easy to see that it is aperiodic when restricted to this class (since any shortest path taken by $A_i$ has a chance of repeating itself for $A_{i+1}$). Thus it has a unique limiting distribution. Proposition \ref{stabledeformpaths} implies that $\mathcal{M}_{\Delta} (s,t)$ has a unique limiting distribution if and only if it has a unique limiting distribution over all shortest paths from $s$ to $t$. In Section 4 we will show that this distribution must in fact be the uniform distribution over all shortest paths from $s$ to $t$. 

In some sense the delay time $\Delta =2$ is a benchmark for whether a graph is convergent (resp. stable):

\begin{proposition} 
\label{convergentT2}
$G$ is convergent if and only if it is $(s,t)$-convergent for any pair of vertices $(s,t)$, with respect to $\Delta =2$.
\end{proposition}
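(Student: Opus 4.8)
The plan is to route the entire argument through Proposition \ref{convergentdeform}, which recasts $(s,t)$-convergence with respect to $\Delta$ as the purely combinatorial statement that every walk from $s$ to $t$ is $\Delta$-deformable to a shortest path. The forward direction is immediate from the definitions: if $G$ is convergent then it is $(s,t)$-convergent for \emph{every} pair $(s,t)$ and \emph{every} $\Delta > 1$, so in particular for $\Delta = 2$. All the work is in the reverse direction, where I would fix an arbitrary pair $(s,t)$ and an arbitrary delay $\Delta > 1$ and establish $(s,t)$-convergence with respect to that $\Delta$.

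The key structural observation I would isolate first is a monotonicity (nesting) of the deformation relation in its parameter $\delta$. An atomic $\delta$-deformation replaces a sub-walk $U_2$ with $|U_2'| \le |U_2| \le \delta - 1$ by a no-longer sub-walk $U_2'$, and the only way $\delta$ enters is through the upper bound $|U_2| \le \delta - 1$, which only weakens as $\delta$ grows. Concretely, since $\Delta \ge 2$ we have $1 \le \Delta - 1$, so any atomic $2$-deformation (which satisfies $|U_2'| \le |U_2| \le 1$) automatically satisfies $|U_2'| \le |U_2| \le 1 \le \Delta - 1$ and is therefore also an atomic $\Delta$-deformation. In other words, the identical pair of walks $(P, P^*)$ witnessing an atomic $2$-deformation simultaneously witnesses an atomic $\Delta$-deformation; the set of available atomic moves only grows with $\delta$.

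With this observation the argument closes in one line. By hypothesis $G$ is $(s,t)$-convergent with respect to $\Delta = 2$, so Proposition \ref{convergentdeform} supplies, for each walk $P$ from $s$ to $t$, a finite sequence of atomic $2$-deformations ending at a shortest path; by the nesting observation that very same sequence is a sequence of atomic $\Delta$-deformations, so $P$ is $\Delta$-deformable to a shortest path. Since $P$ was arbitrary, Proposition \ref{convergentdeform} (applied now in the $\Delta$ direction) yields $(s,t)$-convergence with respect to $\Delta$, and since $(s,t)$ and $\Delta$ were arbitrary, $G$ is convergent. I would note that this argument is in fact entirely pairwise: it shows $(s,t)$-convergence at $\Delta = 2$ already implies $(s,t)$-convergence at every $\Delta \ge 2$; the quantifier "for all pairs $(s,t)$" appears on the right-hand side only because "convergent" is defined as $(s,t)$-convergence holding for all pairs at once.

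I do not expect a genuine obstacle, and the single point requiring care is the \emph{direction} of the monotonicity. It is tempting to guess that a larger delay $\Delta$ makes convergence harder to achieve, but the reverse holds: a larger $\Delta$ authorizes larger local edits, so strictly more walks can be deformed down to shortest paths, making $\Delta = 2$ the most demanding case. This is precisely why $\Delta = 2$ functions as a benchmark, and it is consistent with the earlier remark that the $5$-cycle is $(s,t)$-convergent for $\Delta = 3$ yet fails to be so for $\Delta = 2$.
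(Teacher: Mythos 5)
Your proposal is correct and follows essentially the same route as the paper: the paper's proof likewise rests on the single observation that any atomic $2$-deformation is in particular an atomic $\Delta$-deformation for all $\Delta > 2$, combined (implicitly) with the characterization of Proposition \ref{convergentdeform}. You merely spell out the quantifier bookkeeping and the monotonicity direction more explicitly than the paper does, which is a presentational difference, not a mathematical one.
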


\begin{proposition} 
\label{stableT2}
$G$ is stable if and only if it is $(s,t)$-stable for any pair of vertices $(s,t)$, with respect to $\Delta =2$.
\end{proposition}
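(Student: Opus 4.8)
The plan is to reduce everything to the deformation characterization of stability in Proposition~\ref{stabledeform} and to exploit the fact that the family of $\delta$-deformations grows with $\delta$. The forward implication is immediate: if $G$ is stable, then by definition it is $(s,t)$-stable with respect to every $\Delta > 1$ and every pair $(s,t)$, so in particular with respect to $\Delta = 2$. All of the content lies in the reverse implication.

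For the reverse direction, the key observation I would establish first is the inclusion of $2$-deformations among $\Delta$-deformations for every integer $\Delta \geq 2$. This follows directly from the definition of $\delta$-deformability: an atomic $2$-deformation replaces a sub-walk $U_2$ with $|U_2| \leq 1$ by a sub-walk $U_2'$ with $|U_2'| \leq |U_2|$, and since $1 \leq \Delta - 1$ whenever $\Delta \geq 2$, exactly the same replacement qualifies as an atomic $\Delta$-deformation. Concatenating atomic deformations, any $2$-deformation of a walk $P$ into a walk $P'$ is therefore also a $\Delta$-deformation of $P$ into $P'$. In words: a smaller delay time gives a more constrained (hence smaller) set of admissible local moves, and $\Delta = 2$ is the most restrictive case.

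Given this inclusion, I would conclude as follows. Fix a pair $(s,t)$ and suppose $G$ is $(s,t)$-stable with respect to $\Delta = 2$. By Proposition~\ref{stabledeform} there is a single shortest path $P$ from $s$ to $t$ such that every walk from $s$ to $t$ is $2$-deformable to $P$. By the inclusion above, every such walk is then $\Delta$-deformable to the very same path $P$ for every $\Delta \geq 2$. Applying the converse direction of Proposition~\ref{stabledeform} (for general $\Delta$) yields that $G$ is $(s,t)$-stable with respect to $\Delta$. Since this holds for every pair $(s,t)$ and every integer $\Delta \geq 2$, the graph $G$ is stable by definition, completing the reverse implication.

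The argument is short, and the only place requiring care is the inclusion of the deformation families: one must verify that the length bound $|U_2| \leq \delta - 1$ is a monotone relaxation in $\delta$, and that the witnessing shortest path $P$ can be held fixed as $\Delta$ varies (so that stability, and not merely convergence, is what is preserved). I do not expect a genuine obstacle here, since the heavy lifting---translating stability into the existence of a common deformation target---was already carried out in Proposition~\ref{stabledeform}; the role of $\Delta = 2$ as a benchmark is simply that it is the most demanding case, so passing it forces all larger delay times. The identical scheme, with Proposition~\ref{convergentdeform} in place of Proposition~\ref{stabledeform}, also proves Proposition~\ref{convergentT2}.
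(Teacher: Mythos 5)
Your proof is correct and takes essentially the same route as the paper: the paper's own argument is precisely the observation that any $2$-deformation is in particular a $\Delta$-deformation for all $\Delta > 2$, combined with the deformation characterization of stability in Proposition~\ref{stabledeform}. Your write-up merely spells out explicitly what the paper leaves implicit, namely that the fixed target path $P$ furnished by $(s,t)$-stability at $\Delta = 2$ serves as the common deformation target for every larger $\Delta$.
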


Both these propositions follow immediately from the fact that any $2$-deformation is in particular a $\Delta$-deformation for all $\Delta > 2$. Thus any graph which is convergent (resp. stable) with respect to $\Delta = 2$ is also convergent (resp. stable) for $\Delta > 2$. \qed 

In other words, we need only prove that a graph is convergent (stable) for $\Delta =2$ to show that it is convergent (stable) for any $\Delta >1$. The significance of this is that we can now consider stability and convergence as properties of graphs rather than properties of the dynamical system defined by our pursuit rule. As a consequence of the statements we have proved in this section, we can forget the pursuit rule and consider only the relations between walks that exist in $G$. Hence in the following sections, \textbf{we will assume that $\Delta =2$}, unless stated otherwise. 

\newpage 

\section{Convergent and Stable Graphs}

Having set up some helpful propositions in the previous section, we can begin to discuss several interesting classifications of stable and convergent graphs. In \cite{ants2} it was shown that the grid is stable. We focus on generalizing this result to broad classes of graphs that include the grid as a special case. 

We want to study graphs whose  $\delta$-deformation can easily be understood. We find it fruitful to study graphs whose induced distance metric has certain kinds of constraints placed on it (intuitively, constraints that relate to the idea of a simply connected or convex space or to operations that preserve these). To this end we show that (i) all pseudo-modular graphs are stable, and that (ii) the stable- and convergent-graph properties are preserved under taking graph products. We then move to a discussion of  chordal and planar graphs. 

We state the following definition and lemma, which will become useful in several sections:

\begin{definition}
\label{discrepdefine}
Let $P = v_1 \ldots v_n$ be a walk in $G$. We call two vertices $v_i$, $v_j$ \textit{discrepancy vertices} of $P$, if they minimize the difference of indexes $j-i$ under the constraint that $j-i > d(v_i, v_j)$.
\end{definition}

If $P$ is the shortest path from $v_1$ to $v_n$, then $P$ has no discrepancy vertices. On the other hand any non-optimal walk must have at least one pair of such vertices. We note that due to $\Delta$-optimality, if $P$ belongs to a closed communicating class of $\mathcal{M}_{\Delta} (s,t)$, then for any pair of discrepancy vertices $v_i$, $v_j$, $j-i$ must be larger than $\Delta$; in particular, under our assumption that $\Delta =2$ we always have $j-i > 2$.

\begin{lemma}
\label{discrepobserve}
Let $v_i$, $v_j$ be discrepancy vertices in some walk $P = v_1 v_2 \ldots v_n$. Then:
\begin{enumerate}
\item There is no vertex $v$ in the sub-walk $v_{i+1} v_{i+2} \ldots v_{j-1}$ such that $d(v_i,v) + d(v,v_j) = d(v_i,v_j)$. (That is, no vertex between $v_i$ and $v_j$ in $P$ belongs on a shortest path between them, or is equal to one of them).
\item $j - i \leq d(v_i, v_j) + 2$
\end{enumerate}
\end{lemma}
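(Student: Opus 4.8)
The plan is to first isolate the single structural fact that powers both items: a \emph{minimality property} of discrepancy vertices. Because $(v_i,v_j)$ minimizes the index gap $j-i$ among all pairs whose gap strictly exceeds their graph distance, every pair $v_a,v_b$ with $a<b$ and $b-a < j-i$ must satisfy $d(v_a,v_b)=b-a$; equivalently, the sub-walk $v_a \cdots v_b$ is a shortest path. Indeed, a sub-walk of a walk always gives $d(v_a,v_b)\le b-a$, and if this inequality were strict then the pair $(v_a,v_b)$ would itself satisfy the defining constraint $b-a > d(v_a,v_b)$ with a strictly smaller index gap, contradicting the minimality of $(v_i,v_j)$. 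I would state and prove this observation first, since both parts reduce to it.

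For part 1 I would argue by contradiction. Suppose some $v_m$ with $i<m<j$ satisfies $d(v_i,v_m)+d(v_m,v_j)=d(v_i,v_j)$. Both index gaps $m-i$ and $j-m$ are strictly smaller than $j-i$, so the minimality property gives $d(v_i,v_m)=m-i$ and $d(v_m,v_j)=j-m$. Substituting, $d(v_i,v_j)=(m-i)+(j-m)=j-i$, which contradicts the defining inequality $j-i>d(v_i,v_j)$. The parenthetical ``or is equal to one of them'' case is subsumed by the same computation, since it already forces $d(v_i,v_m)=m-i\ge 1$, so no intermediate vertex can even coincide with $v_i$ or $v_j$.

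For part 2 I would apply the minimality property to the pair $(v_i,v_{j-1})$, whose index gap $(j-1)-i$ is one less than $j-i$, obtaining $d(v_i,v_{j-1})=(j-1)-i$. Since $v_{j-1}$ and $v_j$ are consecutive on the walk, they are adjacent (or equal, as $G$ is reflexive), so $d(v_{j-1},v_j)\le 1$. The triangle inequality $d(v_i,v_{j-1})\le d(v_i,v_j)+d(v_j,v_{j-1})$ then yields $(j-1)-i\le d(v_i,v_j)+1$, i.e. $j-i\le d(v_i,v_j)+2$, as required.

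I do not anticipate a serious obstacle, since both parts are short deductions once the minimality property is in hand. The one point demanding care is the bookkeeping with strict versus non-strict inequalities: confirming that the relevant index gaps ($m-i$, $j-m$, and $(j-1)-i$) are genuinely below $j-i$ so that minimality legitimately applies, and that reflexivity of $G$ lets consecutive walk vertices be treated uniformly (distance at most $1$) even in degenerate cases where they coincide or where $j=i+1$.
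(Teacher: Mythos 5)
Your proof is correct and takes essentially the same approach as the paper: both arguments rest on the minimality of the index gap of the discrepancy pair, proving part 1 by splitting at the intermediate vertex and part 2 by passing to a vertex adjacent to an endpoint ($v_{j-1}$ in your version, $v_{i+1}$ in the paper's) together with the triangle inequality. The only differences are cosmetic — you isolate the minimality property as an explicit sub-lemma and derive part 2 directly rather than by contradiction.
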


\begin{proof}
For proof of (1), we note that had there been such a vertex, say $v_t \in P$, then either $v_i, v_t$ or $v_t, v_j$ would've been discrepancy vertices instead of $v_i, v_j$ since the difference of indexes is smaller.

For (2), assume that $j-i > d(v_i, v_j) + 2$. From this we have: $j-(i+1) > d(v_i,v_j) + 1 \geq d(v_{i+1},v_j)$, thus $v_{i+1},v_j$ are discrepancy vertices - a contradiction to the minimality property of discrepancy vertices, since $j-(i+1) < j-i$. 
\end{proof}

\subsection{Pseudo-modular Graphs}

\setcounter{figure}{3}
\begin{figure}[h]
\caption{Pseudo-modular graphs.}
\centering
\begin{subfigure}{.5\textwidth}
\centering
\scalebox{0.8}{
\begin{tikzpicture}
[every node/.style={inner sep=0pt}]
\node (1) [circle,fill] at (125.0pt, -137.5pt) {\textcolor{black}{'}};
\node (2) [circle,fill] at (100.0pt, -137.5pt) {\textcolor{black}{'}};
\node (3) [circle,fill] at (100.0pt, -162.5pt) {\textcolor{black}{'}};
\node (4) [circle,fill] at (125.0pt, -162.5pt) {\textcolor{black}{'}};
\node (5) [circle,fill] at (100.0pt, -112.5pt) {\textcolor{black}{'}};
\node (6) [circle,fill] at (125.0pt, -112.5pt) {\textcolor{black}{'}};
\node (7) [circle,fill] at (150.0pt, -112.5pt) {\textcolor{black}{'}};
\node (8) [circle,fill] at (150.0pt, -137.5pt) {\textcolor{black}{'}};
\node (9) [circle,fill] at (100.0pt, -187.5pt) {\textcolor{black}{'}};
\node (10) [circle,fill] at (125.0pt, -187.5pt) {\textcolor{black}{'}};
\node (11) [circle,fill] at (100.0pt, -87.5pt) {\textcolor{black}{'}};
\node (12) [circle,fill] at (125.0pt, -87.5pt) {\textcolor{black}{'}};
\node (13) [circle,fill] at (150.0pt, -87.5pt) {\textcolor{black}{'}};
\node (14) [circle,fill] at (175.0pt, -87.5pt) {\textcolor{black}{'}};
\node (15) [circle,fill] at (175.0pt, -112.5pt) {\textcolor{black}{'}};
\node (16) [circle,fill] at (175.0pt, -137.5pt) {\textcolor{black}{'}};
\node (17) [circle,fill] at (75.0pt, -187.5pt) {\textcolor{black}{'}};
\node (18) [circle,fill] at (75.0pt, -162.5pt) {\textcolor{black}{'}};
\node (19) [circle,fill] at (75.0pt, -137.5pt) {\textcolor{black}{'}};
\node (20) [circle,fill] at (75.0pt, -112.5pt) {\textcolor{black}{'}};
\node (21) [circle,fill] at (75.0pt, -87.5pt) {\textcolor{black}{'}};
\draw [line width=0.625, color=black] (17) to  (18);
\draw [line width=0.625, color=black] (18) to  (19);
\draw [line width=0.625, color=black] (19) to  (20);
\draw [line width=0.625, color=black] (20) to  (21);
\draw [line width=0.625, color=black] (21) to  (11);
\draw [line width=0.625, color=black] (11) to  (5);
\draw [line width=0.625, color=black] (5) to  (2);
\draw [line width=0.625, color=black] (2) to  (3);
\draw [line width=0.625, color=black] (3) to  (9);
\draw [line width=0.625, color=black] (20) to  (5);
\draw [line width=0.625, color=black] (19) to  (2);
\draw [line width=0.625, color=black] (18) to  (3);
\draw [line width=0.625, color=black] (17) to  (9);
\draw [line width=0.625, color=black] (3) to  (4);
\draw [line width=0.625, color=black] (9) to  (10);
\draw [line width=0.625, color=black] (2) to  (1);
\draw [line width=0.625, color=black] (1) to  (4);
\draw [line width=0.625, color=black] (10) to  (4);
\draw [line width=0.625, color=black] (1) to  (6);
\draw [line width=0.625, color=black] (5) to  (6);
\draw [line width=0.625, color=black] (11) to  (12);
\draw [line width=0.625, color=black] (12) to  (6);
\draw [line width=0.625, color=black] (6) to  (7);
\draw [line width=0.625, color=black] (12) to  (13);
\draw [line width=0.625, color=black] (13) to  (7);
\draw [line width=0.625, color=black] (8) to  (7);
\draw [line width=0.625, color=black] (1) to  (8);
\draw [line width=0.625, color=black] (14) to  (15);
\draw [line width=0.625, color=black] (13) to  (14);
\draw [line width=0.625, color=black] (7) to  (15);
\draw [line width=0.625, color=black] (15) to  (16);
\draw [line width=0.625, color=black] (8) to  (16);
\end{tikzpicture}
}

\end{subfigure}%
\begin{subfigure}{.5\textwidth}
\centering
\usetikzlibrary{shapes.geometric}
\scalebox{0.8}{
\begin{tikzpicture}
[every node/.style={inner sep=0pt}]
\node (1) [circle,fill] at (175.0pt, -75.0pt) {\textcolor{black}{1}};
\node (2) [circle,fill] at (150.0pt, -100.0pt) {\textcolor{black}{2}};
\node (3) [circle,fill] at (200.0pt, -100.0pt) {\textcolor{black}{3}};
\node (4) [circle,fill] at (175.0pt, -125.0pt) {\textcolor{black}{4}};
\node (5) [circle,fill] at (200.0pt, -150.0pt) {\textcolor{black}{5}};
\node (6) [circle,fill] at (225.0pt, -125.0pt) {\textcolor{black}{6}};
\node (7) [circle,fill] at (150.0pt, -150.0pt) {\textcolor{black}{7}};
\node (8) [circle,fill] at (175.0pt, -175.0pt) {\textcolor{black}{8}};
\node (9) [circle,fill] at (162.5pt, -150.0pt) {\textcolor{black}{9}};
\node (10) [circle,fill] at (175.0pt, -150.0pt) {\textcolor{black}{1}};
\node (11) [circle,fill] at (187.5pt, -150.0pt) {\textcolor{black}{1}};
\node (12) [circle,fill] at (175.0pt, -100.0pt) {\textcolor{black}{1}};
\draw [line width=0.625, color=black] (7) to  (9);
\draw [line width=0.625, color=black] (9) to  (10);
\draw [line width=0.625, color=black] (10) to  (11);
\draw [line width=0.625, color=black] (11) to  (5);
\draw [line width=0.625, color=black] (5) to  (8);
\draw [line width=0.625, color=black] (7) to  (8);
\draw [line width=0.625, color=black] (5) to  (6);
\draw [line width=0.625, color=black] (4) to  (3);
\draw [line width=0.625, color=black] (4) to  (5);
\draw [line width=0.625, color=black] (6) to  (3);
\draw [line width=0.625, color=black] (4) to  (12);
\draw [line width=0.625, color=black] (3) to  (1);
\draw [line width=0.625, color=black] (2) to  (1);
\draw [line width=0.625, color=black] (2) to  (4);
\draw [line width=0.625, color=black] (12) to  (1);
\draw [line width=0.625, color=black] (4) to  (7);
\draw [line width=0.625, color=black] (4) to  (9);
\draw [line width=0.625, color=black] (4) to  (10);
\draw [line width=0.625, color=black] (4) to  (11);
\draw [line width=0.625, color=black] (4) to  (5);
\draw [line width=0.625, color=black] (9) to  (8);
\draw [line width=0.625, color=black] (10) to (8);
\draw [line width=0.625, color=black] (11) to (8);
\end{tikzpicture}
}
\end{subfigure}
\begin{subfigure}{.5\textwidth}
\centering
\usetikzlibrary{shapes.geometric}
\scalebox{0.8}{
\begin{tikzpicture}
[every node/.style={inner sep=0pt}]
\node (1) [circle, fill] at (87.5pt, -62.5pt) {\textcolor{black}{1}};
\node (2) [circle, fill] at (137.5pt, -62.5pt) {\textcolor{black}{2}};
\node (3) [circle, fill] at (112.5pt, -100.0pt) {\textcolor{black}{3}};
\node (4) [circle, fill] at (150.0pt, -125.0pt) {\textcolor{black}{4}};
\node (5) [circle, fill] at (137.5pt, -162.5pt) {\textcolor{black}{5}};
\node (6) [circle, fill] at (87.5pt, -162.5pt) {\textcolor{black}{6}};
\node (7) [circle, fill] at (75.0pt, -125.0pt) {\textcolor{black}{7}};
\node (8) [circle, fill] at (187.5pt, -100.0pt) {\textcolor{black}{8}};
\node (9) [circle, fill] at (225.0pt, -125.0pt) {\textcolor{black}{9}};
\node (10) [circle, fill] at (262.5pt, -75.0pt) {\textcolor{black}{0}};
\node (11) [circle, fill] at (287.5pt, -87.5pt) {\textcolor{black}{1}};
\node (12) [circle, fill] at (225.0pt, -175.0pt) {\textcolor{black}{2}};
\node (13) [circle, fill] at (250.0pt, -212.5pt) {\textcolor{black}{1}};
\node (14) [circle, fill] at (187.5pt, -200.0pt) {\textcolor{black}{4}};
\node (15) [circle, fill] at (212.5pt, -237.5pt) {\textcolor{black}{1}};
\node (17) [circle, fill] at (112.5pt, -200.0pt) {\textcolor{black}{7}};
\node (16) [circle, fill] at (75.0pt, -200.0pt) {\textcolor{black}{6}};
\node (18) [circle, fill] at (75.0pt, -237.5pt) {\textcolor{black}{8}};
\node (19) [circle, fill] at (112.5pt, -237.5pt) {\textcolor{black}{1}};
\draw [line width=0.625, color=black] (1) to  (3);
\draw [line width=0.625, color=black] (2) to  (3);
\draw [line width=0.625, color=black] (1) to  (2);
\draw [line width=0.625, color=black] (7) to  (3);
\draw [line width=0.625, color=black] (3) to  (4);
\draw [line width=0.625, color=black] (6) to  (7);
\draw [line width=0.625, color=black] (6) to  (5);
\draw [line width=0.625, color=black] (5) to  (4);
\draw [line width=0.625, color=black] (3) to  (6);
\draw [line width=0.625, color=black] (5) to  (3);
\draw [line width=0.625, color=black] (7) to  (4);
\draw [line width=0.625, color=black] (5) to  (7);
\draw [line width=0.625, color=black] (6) to  (4);
\draw [line width=0.625, color=black] (5) to  (17);
\draw [line width=0.625, color=black] (17) to  (19);
\draw [line width=0.625, color=black] (16) to  (17);
\draw [line width=0.625, color=black] (18) to  (16);
\draw [line width=0.625, color=black] (18) to  (19);
\draw [line width=0.625, color=black] (19) to  (16);
\draw [line width=0.625, color=black] (18) to  (17);
\draw [line width=0.625, color=black] (14) to  (12);
\draw [line width=0.625, color=black] (12) to  [in=119, out=307] (13);
\draw [line width=0.625, color=black] (12) to  (15);
\draw [line width=0.625, color=black] (12) to  (9);
\draw [line width=0.625, color=black] (13) to  (15);
\draw [line width=0.625, color=black] (15) to  (14);
\draw [line width=0.625, color=black] (8) to  [in=143, out=331] (9);
\draw [line width=0.625, color=black] (8) to  (12);
\draw [line width=0.625, color=black] (8) to  (2);
\draw [line width=0.625, color=black] (9) to  (10);
\draw [line width=0.625, color=black] (9) to  (11);
\draw [line width=0.625, color=black] (14) to  (13);
\end{tikzpicture}
}
\end{subfigure}
\end{figure}
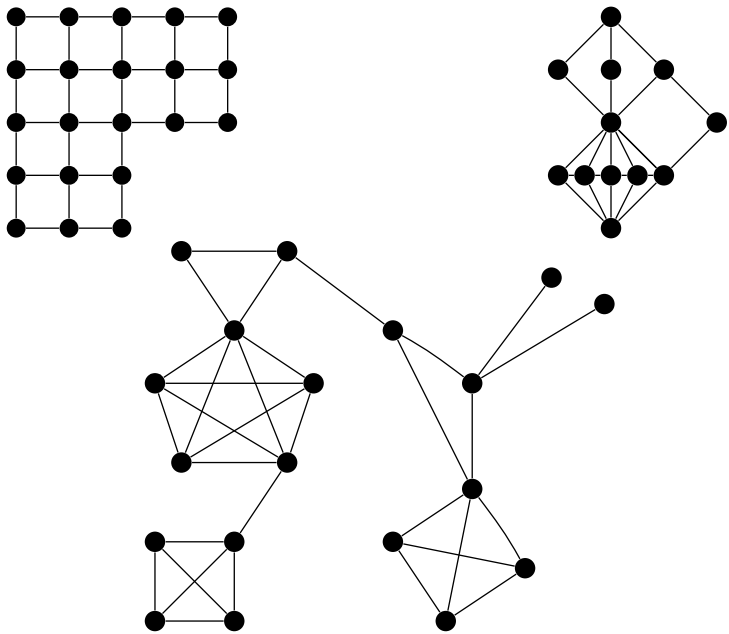

It is known that chain pursuit in the continuous Euclidean plane converges to the shortest path (see \cite{ants1}). One of the primary reasons for this is that the plane has no holes--it is simply connected. In a simply connected subspace of the plane, any path can be ``optimized'' into a shortest path by making small local deformations to it, as these deformations are never prevented by the undue presence of holes. When working with graphs, we can capture the notion of making small local changes to pursuit walks via the $\Delta$-deformations (and this led to propositions \ref{convergentdeform} and \ref{stabledeform}), but capturing the notion of ``holes'' in the right way is trickier, leading us to consider properties that are more indirect. One idea is to consider properties of convex shapes, as any convex shape is in particular holeless. 

Helly's theorem is a theorem about intersections of convex shapes, stated as follows: let $X_1 \ldots X_n$ be a collection of $n$ convex, finite subsets of $R^d$. If the intersection of every $d+1$ of these sets is nonempty, then the collection has a nonempty intersection (see \cite{helly} and Chapter 1 of \cite{helly2} for additional background). Helly's theorem motivates one of the possible, equivalent definitions of pseudo-modular graphs:

\theoremstyle{definition}
\begin{definition}
\label{pseudomodulardef}
A graph $G$ is called \textit{pseudo-modular}, or ``3-Helly'', if any three pairwise intersecting disks of $G$ have a nonempty intersection. (A disk of radius $r$ about the vertex $v$ is the set of all vertices of distance $\leq r$ from $v$)
\end{definition}

Pseudo-modular graphs were introduced in \cite{pseudomodular1} as a generalization of several important classes of graphs in metric graph theory, such as the so-called ``median'', ``modular'' and ``distance-hereditary'' graphs (see \cite{metricgraphtheory} for a general survey). It is not hard to confirm that the grid graph is pseudo-modular, and so are many of its sub-graphs and many grid-like graphs. 

To begin we wish to prove the following:

\begin{proposition} 
\label{pseudomodularconvergent}
Pseudo-modular graphs are convergent.
\end{proposition}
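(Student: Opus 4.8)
The plan is to reduce the statement to a purely combinatorial deformation claim and then prove that claim by an inductive ``straightening'' argument powered by the disk-Helly hypothesis. First I would invoke Proposition \ref{convergentT2} to restrict attention to $\Delta = 2$, and Proposition \ref{convergentdeform} to reduce $(s,t)$-convergence to the statement that \emph{every walk from $s$ to $t$ is $2$-deformable to a shortest path from $s$ to $t$}. Since $(s,t)$ is arbitrary, it suffices to prove this for an arbitrary walk in a pseudo-modular graph $G$. Recall that, for $\Delta = 2$, an atomic $2$-deformation either replaces a single vertex of the walk by another or deletes a single vertex, the deletion being possible exactly when its two neighbours are adjacent.

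Next I would extract from the disk-Helly condition (Definition \ref{pseudomodulardef}) the two local metric properties that do the real work. Applying the three-disk property to $D(u,k-1), D(x,1), D(y,1)$ yields a \emph{triangle condition}: if $x \sim y$ and $d(u,x)=d(u,y)=k$, then $x,y$ have a common neighbour at distance $k-1$ from $u$. Applying it to $D(u,k-2), D(x,1), D(y,1)$ yields a \emph{quadrangle condition}: if $d(u,x)=d(u,y)=k-1$, $d(x,y)=2$, and $x,y$ have a common neighbour at distance $k$ from $u$, then they also have a common neighbour at distance $k-2$ from $u$. In each case the three pairwise intersections are witnessed by an appropriate neighbour, and the triangle inequality pins the common vertex to the claimed distance.

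The core is an induction on the number of edges of the walk $P$. If $P$ is already a shortest path we are done; otherwise pick discrepancy vertices $v_i, v_j$ (Definition \ref{discrepdefine}). By minimality and Lemma \ref{discrepobserve}, both sub-walks $v_i\ldots v_{j-1}$ and $v_{i+1}\ldots v_j$ are geodesics and $j-i \le d(v_i,v_j)+2$; hence, measuring distance from the anchor $u=v_i$, the window $v_i\ldots v_j$ increases geodesically up to $v_{j-1}$ and then makes a single ``defective'' step: a plateau when $d(v_i,v_j)=j-i-1$, or a descent (a local peak at $v_{j-1}$) when $d(v_i,v_j)=j-i-2$. At this defect I would use the quadrangle condition (descent case) or the three-disk property applied to $D(v_{j-2},1), D(v_j,1), D(v_i, j-i-2)$ (plateau case) to produce a vertex $z$ adjacent to the relevant window vertices but strictly closer to $u$. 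Replacing the defective vertex by $z$ is a length-preserving atomic $2$-deformation that strictly decreases the potential $\Phi = \sum_{k} d(v_i, v_k)$ and migrates the defect one step toward $u$; when the two flanking vertices finally coincide or become adjacent (which must happen, at the latest when the defect reaches $u$), a deletion is available, strictly shortening $P$. Because $\Phi$ is bounded below while any $m$-edge window with $d(v_i,v_j)<m$ always contains a defect, this inner loop terminates in a strict shortening, and the shorter walk is $2$-deformable to a geodesic by the induction hypothesis; concatenating all these deformations completes the proof.

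The step I expect to be the main obstacle is the plateau case $d(v_i,v_j)=j-i-1$. Unlike the descent case, there is no local peak to which the quadrangle condition directly applies, so one must invoke the full three-disk Helly property to manufacture the closer vertex $z$, then verify that the replacement keeps $v_i\ldots v_j$ a legal walk retaining the geodesic-prefix structure needed for the next iteration, and confirm that the potential argument genuinely forces termination in a deletion rather than cycling. Handling this case carefully, together with the degenerate situations in which $z$ coincides with an existing window vertex (yielding an immediate deletion), is where the bulk of the technical work will lie.
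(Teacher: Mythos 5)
Your proposal is correct, but it takes a genuinely different route from the paper's. The paper argues by induction on the number of vertices of the walk: it first straightens the suffix $v_2 \ldots v_n$ via the inductive hypothesis, reduces to the case where $v_1$ and $v_n$ are discrepancy vertices, and then makes a \emph{single} application of the disk-Helly property per induction level, to the triple $\mathcal{D}(v_1,1)$, $\mathcal{D}(v_3,1)$, $\mathcal{D}(v_n,d-1)$ --- whose pairwise intersections are certified by Lemma \ref{discrepobserve}(2), just as in your plateau case --- obtaining a vertex $x$ with which it deforms $v_1v_2v_3$ to $v_1xv_3$ and then recurses on the sub-walk $x\ldots v_n$; the head is pulled onto a geodesic in one stroke, with no potential function or defect tracking. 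You instead localize: classify the defect at a discrepancy pair as plateau or descent, fix descents with the quadrangle-type consequence of 3-Helly, fix plateaus with the triple $\mathcal{D}(v_{j-2},1)$, $\mathcal{D}(v_j,1)$, $\mathcal{D}(v_i,j-i-2)$ --- structurally the mirror image of the paper's move, with the large disk anchored at the near end rather than the far end --- and force termination via $\Phi$. The details you flagged do go through: in the plateau case the Helly vertex $z$ satisfies $d(v_i,z)=j-i-2$ exactly (being adjacent to $v_j$ it cannot be closer), so the defect migrates as a plateau with the geodesic prefix intact; $z=v_j$ is impossible by distance, $z=v_{j-2}$ gives an immediate deletion; and at the anchor the radius-zero disk forces $z=v_i$ and a deletion. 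One small remark: your triangle condition ends up unused, since it yields a vertex adjacent to $v_{j-1},v_j$ but not necessarily to $v_{j-2}$ and so cannot be spliced in --- your instinct to bypass it with the full disk-Helly property was the right one. Comparatively, the paper's proof is shorter (two inductive calls per level and one Helly application), while yours buys fully explicit local moves, an $O(j-i)$ bound on atomic deformations per strict shortening, and reusable metric lemmas, at the cost of a heavier case analysis.
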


\begin{proof}
By \ref{convergentT2} it suffices to show that $G$ is convergent for delay time $\Delta =2$ (indeed, for all of our proofs from this point onwards, we will implicitly assume that $\Delta =2$ unless stated otherwise). We show that any walk can be $2$-deformed to a shortest path via $2$-deformations.

The proof is by induction on the number of vertices in the walk, denoted by $n$. For the induction base, it is simple to see that any walk of length $3$ or less (i.e. with 3 vertices or less) from $s$ to $t$ can be $2$-deformed to a shortest path (it is either the shortest path, or a direct link exists from $s$ to $t$, and then $2$-deformation clearly leads to it!). Thus the statement holds for $n \leq 3$. 

Now assume that all walks of length $n-1$ can be $2$-deformed to an optimal path. Let $P = v_1 \ldots v_n$ be a walk with $n$ vertices. We will show $P$ can be 2-deformed to a shortest path.

First, as the sub-walk $v_2 \ldots v_n$ is of length $n-1$, we may 2-deform it to a shortest path. Assume wlog that it already is. Then either $P$ is a shortest path (and we are done), or $v_1$ and $v_q$, for some $q > 1$, are discrepancy vertices. If $q \neq n$ then we can 2-deform the sub-walk from $v_1$ to $v_q$ into a shortest path (due to the inductive assumption), which shortens $P$, and reduces us to a previous case of the induction. So we simply need to handle the case where $v_1$ and $v_n$ are discrepancy vertices. 

If $v_1 = v_n$ (i.e. $P$ is a ``loop'' around $v_1$) then it follows that $|P| = 3$, so we are reduced to an earlier case of the induction. Otherwise, write $d = d(v_1, v_n)$. Consider the three disks $\mathcal{D}(v_1, 1)$, $\mathcal{D}(v_n, d-1)$, $\mathcal{D}(v_3, 1)$, where $\mathcal{D}(v,r)$ is the disk of radius $r$ about vertex $v$. 

We show that the three disks intersect pairwise:

\begin{enumerate}
\item $\mathcal{D}(v_1, 1)$ and $\mathcal{D}(v_{3}, 1)$ intersect at $v_{2}$. 

\item $\mathcal{D}(v_1, 1)$ and $\mathcal{D}(v_n, d-1)$ intersect at a vertex $u$ that lies along the shortest path from $v_1$ to $v_n$.

\item By Lemma \ref{discrepobserve}, 2 we have that $n-1 \leq d+2$, and therefore $d(v_4, v_n) \leq n-4 \leq d-1$. Hence $\mathcal{D}(v_{3}, 1)$ and $\mathcal{D}(v_n, d-1)$ intersect at $v_4$.
\end{enumerate}

Since $G$ is pseudo-modular, we learn from this that the three disks have a non-empty intersection. Thus, there exists a vertex $x$ for which (i) $d(v_1,x) \leq 1$, (ii) $d(x,v_3) \leq 1$, and (iii) $d(x,v_n) \leq d-1$,. It follows from (i) and (ii) that can 2-deform $v_1v_2v_3$ to $v_1xv_3$. Then, by the inductive assumption, we can 2-deform the sub-walk $x\ldots v_n$ to a shortest path from $x$ to $v_n$. Since by (iii), $x$ already lies on a shortest path from $v_1$ to $v_n$, this turns $P$ into a shortest path from $v_1$ to $v_n$ as desired.
\end{proof}

\setcounter{figure}{4} 
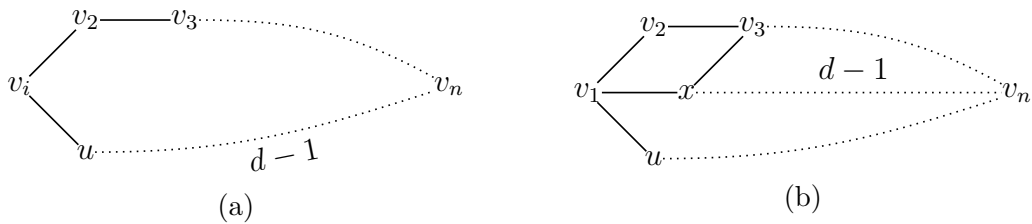
\begin{figure}
\caption{The constructions in the proof of Proposition \ref{pseudomodularconvergent}}
\centering
\begin{subfigure}{.5\textwidth}
\centering

\begin{tikzpicture}
[every node/.style={inner sep=0pt}]
\node (1) at (100.0pt, -137.5pt) {\textcolor{black}{$v_i$}};
\node (2) at (125.0pt, -112.5pt) {\textcolor{black}{$v_{2}$}};
\node (3) at (162.5pt, -112.5pt) {\textcolor{black}{$v_{3}$}};
\node (4) at (262.5pt, -137.5pt) {\textcolor{black}{$v_n$}};
\node (5) at (125.0pt, -162.5pt) {\textcolor{black}{$u$}};
\draw [line width=0.625, color=black] (1) to  (2);
\draw [line width=0.625, color=black] (1) to  (5);
\draw [line width=0.625, dotted, color=black] (5) to  [in=198, out=0] (4);
\draw [line width=0.625, color=black] (2) to  (3);
\draw [line width=0.625, dotted, color=black] (3) to  [in=151, out=0] (4);
\node at (200.0pt, -163.75pt) [rotate=11] {\textcolor{black}{$d-1$}};
\node at (211.875pt, -108.75pt) [rotate=348] {\textcolor{black}{}};
\end{tikzpicture}
\subcaption{}
\end{subfigure}%
\begin{subfigure}{.6\textwidth}
\centering
\usetikzlibrary{shapes.geometric}

\begin{tikzpicture}
[every node/.style={inner sep=0pt}]
\node (1) at (100.0pt, -137.5pt) {\textcolor{black}{$v_1$}};
\node (2) at (125.0pt, -112.5pt) {\textcolor{black}{$v_2$}};
\node (3) at (162.5pt, -112.5pt) {\textcolor{black}{$v_3$}};
\node (4) at (262.5pt, -137.5pt) {\textcolor{black}{$v_n$}};
\node (5) at (125.0pt, -162.5pt) {\textcolor{black}{$u$}};
\node (6) at (137.5pt, -137.5pt) {\textcolor{black}{$x$}};
\draw [line width=0.625, color=black] (1) to  (5);
\draw [line width=0.625, dotted, color=black] (5) to  [in=198, out=0] (4);
\draw [line width=0.625, color=black] (2) to  (3);
\draw [line width=0.625, dotted, color=black] (3) to  [in=151, out=0] (4);
\draw [line width=0.625, color=black] (1) to  (2);
\draw [line width=0.625, color=black] (1) to  (6);
\draw [line width=0.625, color=black] (6) to  (3);
\draw [line width=0.625, dotted, color=black] (6) to  (4);
\node at (200.625pt, -128.75pt) {\textcolor{black}{$d-1$}};
\end{tikzpicture}
\subcaption{}
\end{subfigure}
\end{figure}

We prove next that every pseudomodular graph is stable.

\begin{proposition} 
\label{pseudomodularstable}
Pseudo-modular graphs are stable.
\end{proposition}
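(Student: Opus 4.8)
The plan is to reduce stability, via the characterizations of Section 2, to a purely combinatorial statement about shortest paths, and then to establish that statement by induction using the $3$-Helly property. First, by Proposition \ref{stableT2} I may assume $\Delta = 2$. By Proposition \ref{stabledeform} it suffices to produce a single shortest path $P$ from $s$ to $t$ to which every walk is $2$-deformable. Since pseudo-modular graphs are convergent (Proposition \ref{pseudomodularconvergent}), every walk is already $2$-deformable to \emph{some} shortest path; hence it is enough to prove that any two shortest paths from $s$ to $t$ are $2$-deformable to one another, since one may then compose to reach a fixed representative $P$. (Equivalently, this is exactly the content of Proposition \ref{stabledeformpaths}: all shortest paths lie in a single closed class.)

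The core of the argument is the following claim, which I would prove by induction on $d = d_G(a,b)$: for any two vertices $a, b$, any two shortest paths between them can be transformed into each other by a sequence of atomic $2$-deformations, each of which is itself a shortest path (that is, single interior-vertex swaps $x_{i-1} x_i x_{i+1} \rightsquigarrow x_{i-1} x_i' x_{i+1}$). The cases $d \le 2$ are immediate: for $d \le 1$ the shortest path is unique, and for $d = 2$ any two shortest paths $a\,m\,b$ and $a\,m'\,b$ differ only in their middle vertex and are related by one swap. For the inductive step, take shortest paths $a = x_0, x_1, \dots, x_d = b$ and $a = y_0, y_1, \dots, y_d = b$; if $x_1 = y_1$ I apply the inductive hypothesis to the two length-$(d-1)$ geodesics from $x_1$ to $b$, holding the prefix $a$ fixed, so I may assume $x_1 \neq y_1$.

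Reconciling two geodesics that already diverge at their first step is the main obstacle, and this is exactly where pseudo-modularity enters. I consider the three disks $\mathcal{D}(x_1, 1)$, $\mathcal{D}(y_1, 1)$ and $\mathcal{D}(b, d-2)$. They intersect pairwise — at $a$, at $x_2$, and at $y_2$ respectively — so by Definition \ref{pseudomodulardef} they share a common vertex $w$. Because $d(w, b) \le d-2$ while $d(x_1, b) = d-1$, one checks that $d(w,b) = d-2$ and $d(a,w) = 2$; in particular $w \notin \{x_1, y_1\}$ and $w$ is adjacent to both $x_1$ and $y_1$, so $a\,x_1\,w$ and $a\,y_1\,w$ are both geodesics. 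Fixing any geodesic $R$ from $w$ to $b$ then yields two shortest paths $a\,x_1\,w\,R$ and $a\,y_1\,w\,R$, and I connect $x_0\dots x_d \rightsquigarrow a\,x_1\,w\,R \rightsquigarrow a\,y_1\,w\,R \rightsquigarrow y_0\dots y_d$: the first and last steps follow from the inductive hypothesis applied to geodesics of length $d-1$ (from $x_1$, respectively $y_1$, to $b$, with the common prefix held fixed), while the middle step is a single interior-vertex swap of $x_1$ for $y_1$, legal since $a$ and $w$ are fixed and both $a\,x_1\,w$ and $a\,y_1\,w$ are shortest. This closes the induction. The only delicate points — verifying the three disks meet pairwise and that the Helly vertex is genuinely new — are routine, and the full strength of pseudo-modularity is used precisely in producing the bridging vertex $w$.
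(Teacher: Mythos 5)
Your proof is correct and follows essentially the same route as the paper's: the paper also inducts on $d(s,t)$ and applies the 3-Helly property to the same three disks --- unit disks about the second vertices of the two geodesics and a disk of radius $d-2$ about $t$ --- obtaining the same bridging vertex and the same three-step reconciliation through a common geodesic tail. The only difference is presentational: the paper phrases the induction as a contradiction between two distinct closed classes, whereas you directly prove mutual $2$-deformability of shortest paths and invoke Proposition \ref{stabledeform}.
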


For shorthand, a ``closed communicating class between $s$ and $t$'' refers to a closed communicating class of the the Markov chain $\mathcal{M}$ defined over the walks from $s$ to $t$ in $G$ ($\Delta =2$). 

\begin{proof}

We want to show that there is a unique closed communicating class for every $s$, $t$.

Assume for contradiction that $G$ is not stable. Then there are two vertices $s$, $t$ such that there are at least two distinct closed communicating classes from $s$ to $t$ (i.e. of the Markov chain $\mathcal{M}_2(s,t)$). Let $\mathcal{C}_1$ and $\mathcal{C}_2$ be two such classes. Note that $\mathcal{C}_1$ and $\mathcal{C}_2$ contain only \textit{shortest paths} from $s$ to $t$, as shown in Proposition \ref{pseudomodularconvergent}. 

The proof proceeds by induction on the distance between $s$ and $t$, $d(s,t)$. For the base case, if $d(s,t) \leq 2$ then there clearly must be just one closed communicating class between $s$ and $t$, a contradiction to $C_1$ and $C_2$ being distinct.

To proceed, assume that the statement holds for distances $\leq n-1$, and consider two vertices $s$ and $t$ such that $d(s,t) = n$.

Let $P_1 = v_1 v_2 \ldots v_{n+1} \in \mathcal{C}_1$ and $P_2 = u_1 u_2 \ldots u_{n+1} \in \mathcal{C}_2$, with $s = v_1 = u_1$ and $t = v_{n+1} = u_{n+1}$, be two shortest paths from $s$ to $t$. Consider the disks $\mathcal{D}(u_{2}, 1)$, $\mathcal{D}(v_{2}, 1)$ and $\mathcal{D}(t, n-2)$. We have that $\mathcal{D}(t, n-2)$ intersects $\mathcal{D}(u_{2}, 1)$ and $\mathcal{D}(v_{2}, 1)$ respectively at $u_3$ and $v_3$. Furthermore $s \in \mathcal{D}(u_{2}, 1) \cap \mathcal{D}(v_{2}, 1)$. Thus there must be a vertex $x$ in the intersection of all disks. For this vertex we have: $d(x,v_2) = 1$, $d(x,u_2) = 1$ and $d(x,t) = n-2$. Since $d(v_2,t) = d(u_2,t) = n-1$, we have that $x$ lies on a shortest path from both $u_2$ and $v_2$, to $t$. Let $P_x = x \ldots t$ be some fixed shortest path from $x$ to $t$. By the inductive assumption we can 2-deform the sub-walks $v_2 \ldots v_n$ of $P_1$ and $u_2 \ldots u_n$ of $P_2$ to the paths $v_2 P_x$ and $u_2 P_x$ respectively. Then we may deform the sub-walk $v_1 v_2 x$ (of the path $v_1v_2P_x$) to $v_1 u_2 x$, thus deforming the two paths into the same path. This contradicts that $\mathcal{C}_1$ and $\mathcal{C}_2$ are distinct, so we are done.
\end{proof}

An example of a graph which is stable but not pseudo-modular is seen in Figure 6.

\setcounter{figure}{5}
\begin{figure}[!htb]
\caption{Stable, non-pseudo-modular graph}
\centering
\usetikzlibrary{shapes.geometric}

\begin{tikzpicture}
[every node/.style={inner sep=0pt}]
\node (1) [circle, fill] at (112.5pt, -175.0pt) {\textcolor{black}{1}};
\node (2) [circle, fill] at (137.5pt, -150.0pt) {\textcolor{black}{2}};
\node (3) [circle, fill] at (162.5pt, -125.0pt) {\textcolor{black}{3}};
\node (4) [circle, fill] at (187.5pt, -150.0pt) {\textcolor{black}{4}};
\node (5) [circle, fill] at (212.5pt, -175.0pt) {\textcolor{black}{5}};
\node (6) [circle, fill] at (162.5pt, -175.0pt) {\textcolor{black}{6}};
\draw [line width=0.625, color=black] (1) to  (6);
\draw [line width=0.625, color=black] (1) to  (2);
\draw [line width=0.625, color=black] (6) to  (2);
\draw [line width=0.625, color=black] (4) to  (6);
\draw [line width=0.625, color=black] (4) to  (5);
\draw [line width=0.625, color=black] (6) to  (5);
\draw [line width=0.625, color=black] (2) to  (4);
\draw [line width=0.625, color=black] (2) to  (3);
\draw [line width=0.625, color=black] (4) to  (3);
\end{tikzpicture}
\end{figure}
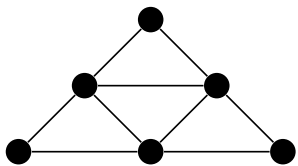

\newpage 

\subsection{Graph Products}

Graph products are a rich topic of study as well as an effective method of creating new topologies from old (see \cite{graphproducts, graphproducts2} for an overview). In this section we concern ourselves with two kinds of graph product operations, and show that they preserve graph convergence, and graph stability. 

\setcounter{figure}{6}
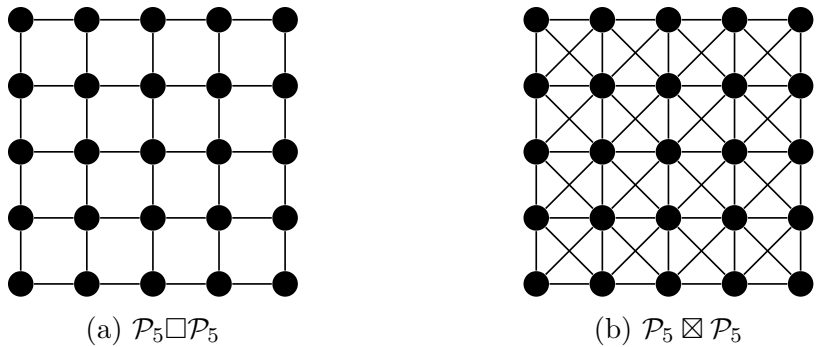
\begin{figure}[!htb]
\caption{Graph products of $\mathcal{P}_5$ with itself}
\begin{subfigure}{.5\textwidth}
\centering

\begin{tikzpicture}
[every node/.style={inner sep=0pt}]
\node (1) [circle, fill] at (125.0pt, -137.5pt) {\textcolor{black}{1}};
\node (2) [circle, fill] at (100.0pt, -137.5pt) {\textcolor{black}{2}};
\node (3) [circle, fill] at (100.0pt, -162.5pt) {\textcolor{black}{3}};
\node (4) [circle, fill] at (125.0pt, -162.5pt) {\textcolor{black}{4}};
\node (5) [circle, fill] at (100.0pt, -112.5pt) {\textcolor{black}{5}};
\node (6) [circle, fill] at (125.0pt, -112.5pt) {\textcolor{black}{6}};
\node (7) [circle, fill] at (150.0pt, -112.5pt) {\textcolor{black}{7}};
\node (8) [circle, fill] at (150.0pt, -137.5pt) {\textcolor{black}{8}};
\node (9) [circle, fill] at (100.0pt, -187.5pt) {\textcolor{black}{9}};
\node (10) [circle, fill] at (125.0pt, -187.5pt) {\textcolor{black}{0}};
\node (11) [circle, fill] at (100.0pt, -87.5pt) {\textcolor{black}{1}};
\node (12) [circle, fill] at (125.0pt, -87.5pt) {\textcolor{black}{2}};
\node (13) [circle, fill] at (150.0pt, -87.5pt) {\textcolor{black}{3}};
\node (14) [circle, fill] at (175.0pt, -87.5pt) {\textcolor{black}{4}};
\node (15) [circle, fill] at (175.0pt, -112.5pt) {\textcolor{black}{5}};
\node (16) [circle, fill] at (175.0pt, -137.5pt) {\textcolor{black}{6}};
\node (17) [circle, fill] at (75.0pt, -187.5pt) {\textcolor{black}{7}};
\node (18) [circle, fill] at (75.0pt, -162.5pt) {\textcolor{black}{8}};
\node (19) [circle, fill] at (75.0pt, -137.5pt) {\textcolor{black}{9}};
\node (20) [circle, fill] at (75.0pt, -112.5pt) {\textcolor{black}{0}};
\node (21) [circle, fill] at (75.0pt, -87.5pt) {\textcolor{black}{1}};
\node (22) [circle, fill] at (150.0pt, -162.5pt) {\textcolor{black}{2}};
\node (23) [circle, fill] at (175.0pt, -162.5pt) {\textcolor{black}{3}};
\node (24) [circle, fill] at (150.0pt, -187.5pt) {\textcolor{black}{4}};
\node (25) [circle, fill] at (175.0pt, -187.5pt) {\textcolor{black}{5}};
\draw [line width=0.625, color=black] (17) to  (18);
\draw [line width=0.625, color=black] (18) to  (19);
\draw [line width=0.625, color=black] (19) to  (20);
\draw [line width=0.625, color=black] (20) to  (21);
\draw [line width=0.625, color=black] (21) to  (11);
\draw [line width=0.625, color=black] (11) to  (5);
\draw [line width=0.625, color=black] (5) to  (2);
\draw [line width=0.625, color=black] (2) to  (3);
\draw [line width=0.625, color=black] (3) to  (9);
\draw [line width=0.625, color=black] (20) to  (5);
\draw [line width=0.625, color=black] (19) to  (2);
\draw [line width=0.625, color=black] (18) to  (3);
\draw [line width=0.625, color=black] (17) to  (9);
\draw [line width=0.625, color=black] (3) to  (4);
\draw [line width=0.625, color=black] (9) to  (10);
\draw [line width=0.625, color=black] (2) to  (1);
\draw [line width=0.625, color=black] (1) to  (4);
\draw [line width=0.625, color=black] (10) to  (4);
\draw [line width=0.625, color=black] (1) to  (6);
\draw [line width=0.625, color=black] (5) to  (6);
\draw [line width=0.625, color=black] (11) to  (12);
\draw [line width=0.625, color=black] (12) to  (6);
\draw [line width=0.625, color=black] (6) to  (7);
\draw [line width=0.625, color=black] (12) to  (13);
\draw [line width=0.625, color=black] (13) to  (7);
\draw [line width=0.625, color=black] (8) to  (7);
\draw [line width=0.625, color=black] (1) to  (8);
\draw [line width=0.625, color=black] (14) to  (15);
\draw [line width=0.625, color=black] (13) to  (14);
\draw [line width=0.625, color=black] (7) to  (15);
\draw [line width=0.625, color=black] (15) to  (16);
\draw [line width=0.625, color=black] (8) to  (16);
\draw [line width=0.625, color=black] (4) to  (22);
\draw [line width=0.625, color=black] (22) to  (8);
\draw [line width=0.625, color=black] (23) to  (16);
\draw [line width=0.625, color=black] (22) to  (23);
\draw [line width=0.625, color=black] (24) to  (22);
\draw [line width=0.625, color=black] (25) to  (23);
\draw [line width=0.625, color=black] (24) to  (25);
\draw [line width=0.625, color=black] (10) to  (24);
\end{tikzpicture}
\subcaption{$\mathcal{P}_5 \square \mathcal{P}_5$}
\end{subfigure}%
\begin{subfigure}{.5\textwidth}
\centering

\begin{tikzpicture}
[every node/.style={inner sep=0pt}]
\node (1) [circle, fill] at (125.0pt, -137.5pt) {\textcolor{black}{1}};
\node (2) [circle, fill] at (100.0pt, -137.5pt) {\textcolor{black}{2}};
\node (3) [circle, fill] at (100.0pt, -162.5pt) {\textcolor{black}{3}};
\node (4) [circle, fill] at (125.0pt, -162.5pt) {\textcolor{black}{4}};
\node (5) [circle, fill] at (100.0pt, -112.5pt) {\textcolor{black}{5}};
\node (6) [circle, fill] at (125.0pt, -112.5pt) {\textcolor{black}{6}};
\node (7) [circle, fill] at (150.0pt, -112.5pt) {\textcolor{black}{7}};
\node (8) [circle, fill] at (150.0pt, -137.5pt) {\textcolor{black}{8}};
\node (9) [circle, fill] at (100.0pt, -187.5pt) {\textcolor{black}{9}};
\node (10) [circle, fill] at (125.0pt, -187.5pt) {\textcolor{black}{0}};
\node (11) [circle, fill] at (100.0pt, -87.5pt) {\textcolor{black}{1}};
\node (12) [circle, fill] at (125.0pt, -87.5pt) {\textcolor{black}{2}};
\node (13) [circle, fill] at (150.0pt, -87.5pt) {\textcolor{black}{3}};
\node (14) [circle, fill] at (175.0pt, -87.5pt) {\textcolor{black}{4}};
\node (15) [circle, fill] at (175.0pt, -112.5pt) {\textcolor{black}{5}};
\node (16) [circle, fill] at (175.0pt, -137.5pt) {\textcolor{black}{6}};
\node (17) [circle, fill] at (75.0pt, -187.5pt) {\textcolor{black}{7}};
\node (18) [circle, fill] at (75.0pt, -162.5pt) {\textcolor{black}{8}};
\node (19) [circle, fill] at (75.0pt, -137.5pt) {\textcolor{black}{9}};
\node (20) [circle, fill] at (75.0pt, -112.5pt) {\textcolor{black}{0}};
\node (21) [circle, fill] at (75.0pt, -87.5pt) {\textcolor{black}{1}};
\node (22) [circle, fill] at (150.0pt, -162.5pt) {\textcolor{black}{2}};
\node (23) [circle, fill] at (175.0pt, -162.5pt) {\textcolor{black}{3}};
\node (24) [circle, fill] at (150.0pt, -187.5pt) {\textcolor{black}{4}};
\node (25) [circle, fill] at (175.0pt, -187.5pt) {\textcolor{black}{5}};
\draw [line width=0.625, color=black] (17) to  (18);
\draw [line width=0.625, color=black] (18) to  (19);
\draw [line width=0.625, color=black] (19) to  (20);
\draw [line width=0.625, color=black] (20) to  (21);
\draw [line width=0.625, color=black] (21) to  (11);
\draw [line width=0.625, color=black] (11) to  (5);
\draw [line width=0.625, color=black] (5) to  (2);
\draw [line width=0.625, color=black] (2) to  (3);
\draw [line width=0.625, color=black] (3) to  (9);
\draw [line width=0.625, color=black] (20) to  (5);
\draw [line width=0.625, color=black] (19) to  (2);
\draw [line width=0.625, color=black] (18) to  (3);
\draw [line width=0.625, color=black] (17) to  (9);
\draw [line width=0.625, color=black] (3) to  (4);
\draw [line width=0.625, color=black] (9) to  (10);
\draw [line width=0.625, color=black] (2) to  (1);
\draw [line width=0.625, color=black] (1) to  (4);
\draw [line width=0.625, color=black] (10) to  (4);
\draw [line width=0.625, color=black] (1) to  (6);
\draw [line width=0.625, color=black] (5) to  (6);
\draw [line width=0.625, color=black] (11) to  (12);
\draw [line width=0.625, color=black] (12) to  (6);
\draw [line width=0.625, color=black] (6) to  (7);
\draw [line width=0.625, color=black] (12) to  (13);
\draw [line width=0.625, color=black] (13) to  (7);
\draw [line width=0.625, color=black] (8) to  (7);
\draw [line width=0.625, color=black] (1) to  (8);
\draw [line width=0.625, color=black] (14) to  (15);
\draw [line width=0.625, color=black] (13) to  (14);
\draw [line width=0.625, color=black] (7) to  (15);
\draw [line width=0.625, color=black] (15) to  (16);
\draw [line width=0.625, color=black] (8) to  (16);
\draw [line width=0.625, color=black] (4) to  (22);
\draw [line width=0.625, color=black] (22) to  (8);
\draw [line width=0.625, color=black] (23) to  (16);
\draw [line width=0.625, color=black] (22) to  (23);
\draw [line width=0.625, color=black] (24) to  (22);
\draw [line width=0.625, color=black] (25) to  (23);
\draw [line width=0.625, color=black] (24) to  (25);
\draw [line width=0.625, color=black] (10) to  (24);
\draw [line width=0.625, color=black] (17) to  (3);
\draw [line width=0.625, color=black] (9) to  (18);
\draw [line width=0.625, color=black] (9) to  (4);
\draw [line width=0.625, color=black] (10) to  (3);
\draw [line width=0.625, color=black] (24) to  (23);
\draw [line width=0.625, color=black] (25) to  (22);
\draw [line width=0.625, color=black] (10) to  (22);
\draw [line width=0.625, color=black] (24) to  (4);
\draw [line width=0.625, color=black] (22) to  (16);
\draw [line width=0.625, color=black] (23) to  (8);
\draw [line width=0.625, color=black] (4) to  (8);
\draw [line width=0.625, color=black] (22) to  (1);
\draw [line width=0.625, color=black] (3) to  (1);
\draw [line width=0.625, color=black] (4) to  (2);
\draw [line width=0.625, color=black] (18) to  (2);
\draw [line width=0.625, color=black] (3) to  (19);
\draw [line width=0.625, color=black] (2) to  (6);
\draw [line width=0.625, color=black] (1) to  (5);
\draw [line width=0.625, color=black] (1) to  (7);
\draw [line width=0.625, color=black] (8) to  (6);
\draw [line width=0.625, color=black] (8) to  (15);
\draw [line width=0.625, color=black] (16) to  (7);
\draw [line width=0.625, color=black] (19) to  (5);
\draw [line width=0.625, color=black] (2) to  (20);
\draw [line width=0.625, color=black] (20) to  (11);
\draw [line width=0.625, color=black] (5) to  (21);
\draw [line width=0.625, color=black] (5) to  (12);
\draw [line width=0.625, color=black] (6) to  (11);
\draw [line width=0.625, color=black] (6) to  (13);
\draw [line width=0.625, color=black] (7) to  (12);
\draw [line width=0.625, color=black] (7) to  (14);
\draw [line width=0.625, color=black] (15) to  (13);
\end{tikzpicture}
\subcaption{$\mathcal{P}_5 \boxtimes \mathcal{P}_5$}
\end{subfigure}
\label{productexample}
\end{figure}

\theoremstyle{definition}
\begin{definition}[Cartesian product] The Cartesian product of two graphs $G = (V_1, E_1)$ and $H = (V_2, E_2)$, written $G \square H$, is defined to be the graph whose vertices are of the form $u = (x,y)$ where $x \in G$, $y \in H$, and where there is an edge between $(x_1, y_1) \to (x_2, y_2)$ iff $x_1 = x_2$ and $y_1y_2 \in E_2$, or $y_1 = y_2$ and $x_1x_2 \in E_1$. 
\end{definition}

\theoremstyle{definition}
\begin{definition}[Strong product] The strong product of two graphs $G = (V_1, E_1)$ and $H = (V_2, E_2)$, written $G \boxtimes H$, is defined to be the graph whose vertices are of the form $u = (x,y)$ where $x \in G$, $y \in H$, and where there is an edge between $(x_1, y_1) \to (x_2, y_2)$ iff $x_1 = x_2$ and $y_1y_2 \in E_2$, or $y_1 = y_2$ and $x_1x_2 \in E_1$, or $x_1x_2 \in E_1$ and $y_1y_2 \in E_2$.
\end{definition}

Any vertex $v$ of a graph product of $G_1$ and $G_2$ can be described as a pair $(x,y)$. The \textit{projection} of $v$ onto $G_1$ is defined to be $x$, and its projection onto $G_2$ is defined to be $y$. The projection of a walk $P$ onto $G_i$ is the walk over $G_i$ that consists of the projections of the vertices of $P$ onto $G_i$ in order. We will have the notation $d_i(v, u)$ refer to the distance between the projections of the vertices $v, u$ onto $G_i$. We note that the distance between two vertices $v$ and $u$ over $G_1 \square G_2$ is simply the ``taxicab metric'' distance \cite{taxicab}; $d_{G_1 \square G_2}(u,v) = d_1(v,u) + d_2(v,u)$. In comparison, it follows from the definition of a strong product that $d_{G_1 \boxtimes G_2}(v,u) = max \big( d_1(v,u), d_2(v,u) \big)$. 

Let $\mathcal{P}_n$ be the path graph on $n$ vertices. Then $\mathcal{P}_n \square \mathcal{P}_n$ is the regular $n \times n$ grid and $\mathcal{P}_n \boxtimes \mathcal{P}_n$ is the grid with diagonals, see Figure \ref{productexample}. Therefore this section offers another, different generalization of the known results regarding grid graphs.

\subsubsection*{Cartesian products}

\begin{proposition} 
\label{cartesianconvergent}
$G_1 \square G_2$ is convergent iff $G_1$ and $G_2$ are convergent.
\end{proposition}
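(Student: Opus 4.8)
The plan is to argue entirely through the deformation characterization of convergence, Proposition \ref{convergentdeform}, while restricting to $\Delta = 2$ throughout (legitimate by Proposition \ref{convergentT2}). Thus it suffices to show that every walk of $G_1 \square G_2$ is $2$-deformable to a shortest path if and only if the same holds in each factor. The engine of the whole argument is the interaction between $2$-deformations in the product and in the factors, which is tightly controlled by the two structural facts recorded before the statement: $d_{G_1 \square G_2}(u,v) = d_1(u,v) + d_2(u,v)$, and every edge of the product alters \emph{exactly one} coordinate (along a genuine or reflexive edge of the corresponding factor).

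For the direction ``$G_1 \square G_2$ convergent $\Rightarrow$ $G_1, G_2$ convergent'' I would first establish that projection commutes with deformation. Fix $y_0 \in G_2$; the vertex set $\{(x,y_0) : x \in G_1\}$ induces a copy of $G_1$ that is isometrically embedded in the product. The key observation is that if a product walk $Q'$ is an atomic $2$-deformation of $Q$ (replacing one vertex $(x_k,y_k)$ by $(x_k',y_k')$, or deleting it), then the first-coordinate projection $\pi_1(Q')$ is a $2$-deformation of $\pi_1(Q)$: every product edge projects to an edge-or-equality of $G_1$, so $\pi_1(Q')$ is a legal reflexive walk obtained by replacing or deleting a single vertex, hence an atomic $2$-deformation (possibly trivial, if only the second coordinate changed). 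Given any walk $P$ of $G_1$ from $s_1$ to $t_1$, embed it as $\tilde P$ in the copy; since the product is convergent, $\tilde P$ $2$-deforms to a shortest product path $\tilde Q$ from $(s_1,y_0)$ to $(t_1,y_0)$, and projecting the whole deformation sequence onto $G_1$ deforms $P$ to $\pi_1(\tilde Q)$. As $\tilde Q$ has exactly $d_1(s_1,t_1)$ steps, $\pi_1(\tilde Q)$ is a walk from $s_1$ to $t_1$ of at most $d_1(s_1,t_1)$ edges, i.e.\ a shortest path. This yields convergence of $G_1$, and symmetrically of $G_2$.

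The converse is the harder direction. Let $P = w_1 \ldots w_n$ be a product walk; first delete all ``stay'' steps (consecutive equal vertices) by shortening $2$-deformations, so that every remaining step is a genuine move of exactly one coordinate. The crucial tool is a commutation lemma: two adjacent steps of different type, say $(x,y)\to(x',y)\to(x',y')$, may be rerouted through the opposite corner as $(x,y)\to(x,y')\to(x',y')$, and this is an atomic $2$-deformation, since it replaces a single middle vertex by another vertex adjacent to both fixed endpoints. Treating these swaps as adjacent transpositions, I would sort all first-coordinate moves before all second-coordinate moves, converting $P$ into $\hat P = \tilde R_1 \cdot \tilde R_2$, where $\tilde R_1$ runs inside the copy $G_1 \times \{s_2\}$ and $\tilde R_2$ inside $\{t_1\} \times G_2$. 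Each of these copies is an isometric, induced copy of a factor, so a $2$-deformation in $G_i$ lifts verbatim to one inside the copy. Invoking convergence of $G_1$ and $G_2$, I deform $\tilde R_1$ to a shortest path $Q_1$ of $G_1$ and $\tilde R_2$ to a shortest path $Q_2$ of $G_2$, keeping the junction $(t_1,s_2)$ fixed. The concatenation $Q_1 Q_2$ then has $d_1(s_1,t_1) + d_2(s_2,t_2) = d_{G_1\square G_2}(s,t)$ steps and is a shortest product path, completing $P \rightsquigarrow \hat P \rightsquigarrow Q_1 Q_2$.

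The main obstacle is precisely this converse direction, and within it the need to \emph{decouple} the two coordinates before appealing to the factors. A factor-deformation cannot be lifted ``in place'': interleaving of the two move types makes a naive coordinate replacement create an illegal diagonal product-edge, which is exactly the pathology that distinguishes $\square$ from the strong product. The commutation lemma removes this obstruction by allowing the walk to be reorganized into two independent single-factor problems. The remaining work is bookkeeping: confirming that stay-deletion and the corner-swap are genuinely atomic $2$-deformations (each touches at most $\delta-1 = 1$ vertex), and that deformations confined to a coordinate-copy never leave it.
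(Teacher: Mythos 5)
Your proposal is correct and follows essentially the same route as the paper: the forward direction via embedding a factor walk with fixed second coordinate and transferring the product deformation back to the factor, and the converse via the corner-swap (the paper's ``swap'' $2$-deformation) used to sort all $G_1$-moves before all $G_2$-moves, then invoking factor convergence on the two resulting single-coordinate segments and concatenating the shortest paths. Your explicit ``projection commutes with atomic $2$-deformation'' lemma is in fact a slightly more careful rendering of the paper's parenthetical claim that deformations of the embedded walk only touch the $G_1$ coordinate, and your stay-step cleanup is harmless bookkeeping the paper leaves implicit.
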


\begin{proof}
In one direction, assume  $G_1 \square G_2$ is convergent and let, wlog, $P = v_1v_2 \ldots v_n$ be a walk in $G_1$. Let $u \in G_2$ be an arbitrary vertex. The walk $P'$ whose ith vertex is $v_i' = (v_i, u)$, can be projected onto $G_1$ and its projection is $P$. Since $P'$ is 2-deformable to a shortest path from $(v_1,u)$ to $(v_n,u)$, it follows from this that $P$ is 2-deformable to a shortest path from $v_1$ to $v_n$ (note that any valid 2-deformation of $P'$ over $G_1 \square G_2$ changes only the $G_1$ coordinate component of the vertices, since 2-deformations never lengthen a path and changing the fixed $G_2$ component $u$ will do so). 

In the other direction, assume $G_1$ and $G_2$ are convergent. Let $P = v_1 \ldots v_n$ be a walk in $G_1 \square G_2$. We will show it is 2-deformable to a shortest path from $v_1$ to $v_n$.

Call an edge in $G_1 \square G_2$ an \textit{x-change} if it affects the $G_1$ coordinate component and leaves $G_2$ fixed, else call it a \textit{y-change}. Now let $u_1u_2u_3$ be some walk in $G_1 \square G_2$. By the definition of a Cartesian product, if $u_1u_2$ is a y-change and $u_2u_3$ is an x-change, we can 2-deform $u_1u_2u_3$ into $u_1u_2'u_3$ such that $u_1u_2'$ is an x-change and $u_2'u_3$ is a y-change. Vice-versa, this is also true. We will call such 2-deformations \textit{swaps}. (See Figure \ref{swapdeformationfig} for an illustration). 

\setcounter{figure}{7}
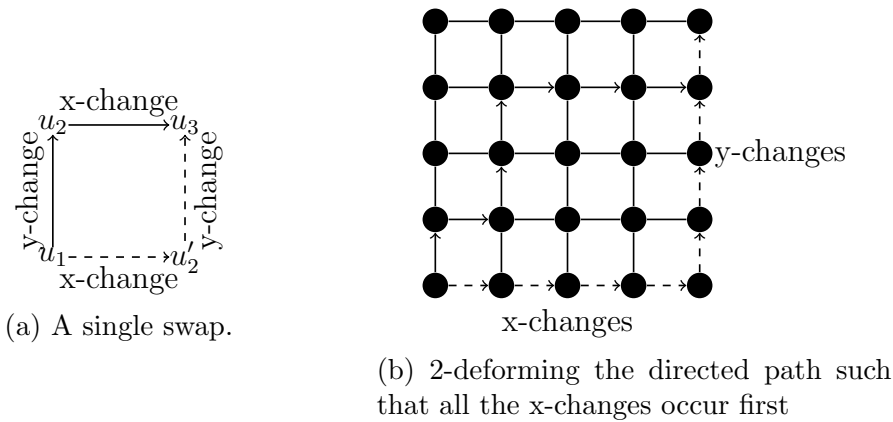
\begin{figure}[!htb]
\caption{The constructions in the proof of Proposition \ref{cartesianconvergent}.}
\centering
\begin{subfigure}{.5\textwidth}
\centering
\usetikzlibrary{shapes.geometric}

\begin{tikzpicture}
[every node/.style={inner sep=0pt}]
\node (1) at (150.0pt, -212.5pt) {\textcolor{black}{$u_1$}};
\node (2) at (200.0pt, -212.5pt) {\textcolor{black}{$u_2'$}};
\node (3) at (150.0pt, -162.5pt) {\textcolor{black}{$u_2$}};
\node (4) at (200.0pt, -162.5pt) {\textcolor{black}{$u_3$}};
\draw [line width=0.625, ->, color=black] (1) to  (3);
\draw [line width=0.625, ->, color=black] (3) to  (4);
\draw [line width=0.625, ->, dashed, color=black] (2) to  (4);
\draw [line width=0.625, ->, dashed, color=black] (1) to  (2);
\node at (141.25pt, -187.5pt) [rotate=90] {\textcolor{black}{y-change}};
\node at (175.0pt, -153.75pt) {\textcolor{black}{x-change}};
\node at (208.75pt, -187.5pt) [rotate=90] {\textcolor{black}{y-change}};
\node at (175.0pt, -221.25pt) {\textcolor{black}{x-change}};
\end{tikzpicture}
\subcaption{A single swap.}
\end{subfigure}%
\begin{subfigure}{.5\textwidth}
\centering

\begin{tikzpicture}
[every node/.style={inner sep=0pt}]
\node (1) [circle, fill] at (125.0pt, -137.5pt) {\textcolor{black}{1}};
\node (2) [circle, fill] at (100.0pt, -137.5pt) {\textcolor{black}{2}};
\node (3) [circle, fill] at (100.0pt, -162.5pt) {\textcolor{black}{3}};
\node (4) [circle, fill] at (125.0pt, -162.5pt) {\textcolor{black}{4}};
\node (5) [circle, fill] at (100.0pt, -112.5pt) {\textcolor{black}{5}};
\node (6) [circle, fill] at (125.0pt, -112.5pt) {\textcolor{black}{6}};
\node (7) [circle, fill] at (150.0pt, -112.5pt) {\textcolor{black}{7}};
\node (8) [circle, fill] at (150.0pt, -137.5pt) {\textcolor{black}{8}};
\node (9) [circle, fill] at (100.0pt, -187.5pt) {\textcolor{black}{9}};
\node (10) [circle, fill] at (125.0pt, -187.5pt) {\textcolor{black}{0}};
\node (11) [circle, fill] at (100.0pt, -87.5pt) {\textcolor{black}{1}};
\node (12) [circle, fill] at (125.0pt, -87.5pt) {\textcolor{black}{2}};
\node (13) [circle, fill] at (150.0pt, -87.5pt) {\textcolor{black}{3}};
\node (14) [circle, fill] at (175.0pt, -87.5pt) {\textcolor{black}{4}};
\node (15) [circle, fill] at (175.0pt, -112.5pt) {\textcolor{black}{5}};
\node (16) [circle, fill] at (175.0pt, -137.5pt) {\textcolor{black}{6}};
\node (17) [circle, fill] at (75.0pt, -187.5pt) {\textcolor{black}{7}};
\node (18) [circle, fill] at (75.0pt, -162.5pt) {\textcolor{black}{8}};
\node (19) [circle, fill] at (75.0pt, -137.5pt) {\textcolor{black}{9}};
\node (20) [circle, fill] at (75.0pt, -112.5pt) {\textcolor{black}{0}};
\node (21) [circle, fill] at (75.0pt, -87.5pt) {\textcolor{black}{1}};
\node (22) [circle, fill] at (150.0pt, -162.5pt) {\textcolor{black}{2}};
\node (23) [circle, fill] at (175.0pt, -162.5pt) {\textcolor{black}{3}};
\node (24) [circle, fill] at (150.0pt, -187.5pt) {\textcolor{black}{4}};
\node (25) [circle, fill] at (175.0pt, -187.5pt) {\textcolor{black}{5}};
\draw [line width=0.625, ->, color=black] (17) to  (18);
\draw [line width=0.625, color=black] (18) to  (19);
\draw [line width=0.625, color=black] (19) to  (20);
\draw [line width=0.625, color=black] (20) to  (21);
\draw [line width=0.625, color=black] (21) to  (11);
\draw [line width=0.625, color=black] (11) to  (5);
\draw [line width=0.625, color=black] (3) to  (9);
\draw [line width=0.625, color=black] (20) to  (5);
\draw [line width=0.625, color=black] (19) to  (2);
\draw [line width=0.625, ->, color=black] (18) to  (3);
\draw [line width=0.625, ->, dashed, color=black] (17) to  (9);
\draw [line width=0.625, color=black] (3) to  (4);
\draw [line width=0.625, ->, dashed, color=black] (9) to  (10);
\draw [line width=0.625, color=black] (2) to  (1);
\draw [line width=0.625, color=black] (1) to  (4);
\draw [line width=0.625, color=black] (10) to  (4);
\draw [line width=0.625, color=black] (1) to  (6);
\draw [line width=0.625, ->, color=black] (5) to  (6);
\draw [line width=0.625, color=black] (11) to  (12);
\draw [line width=0.625, color=black] (12) to  (6);
\draw [line width=0.625, ->, color=black] (6) to  (7);
\draw [line width=0.625, color=black] (12) to  (13);
\draw [line width=0.625, color=black] (13) to  (7);
\draw [line width=0.625, color=black] (8) to  (7);
\draw [line width=0.625, color=black] (1) to  (8);
\draw [line width=0.625, color=black] (13) to  (14);
\draw [line width=0.625, ->, color=black] (7) to  (15);
\draw [line width=0.625, color=black] (8) to  (16);
\draw [line width=0.625, color=black] (4) to  (22);
\draw [line width=0.625, color=black] (22) to  (8);
\draw [line width=0.625, ->, dashed, color=black] (23) to  (16);
\draw [line width=0.625, color=black] (22) to  (23);
\draw [line width=0.625, color=black] (24) to  (22);
\draw [line width=0.625, ->, dashed, color=black] (25) to  (23);
\draw [line width=0.625, ->, dashed, color=black] (24) to  (25);
\draw [line width=0.625, ->, dashed, color=black] (10) to  (24);
\draw [line width=0.625, ->, color=black] (3) to  (2);
\draw [line width=0.625, ->, color=black] (2) to  (5);
\draw [line width=0.625, ->, dashed, color=black] (15) to  (14);
\draw [line width=0.625, ->, dashed, color=black] (16) to  (15);
\node at (125.0pt, -201.875pt) {\textcolor{black}{x-changes}};
\node at (205.625pt, -137.5pt) {\textcolor{black}{y-changes}};
\end{tikzpicture}
\subcaption{2-deforming the directed path such that all the x-changes occur first}
\end{subfigure}
\label{swapdeformationfig}
\end{figure}

The idea is this: take the walk $P$ and perform swaps on its vertices until all x-changes are consecutive, and all y-changes are consecutive. This results in a walk $P'$ that can be divided into two sub-walks: $P_1' = u_1 \ldots u_k$ and $P_2' = u_k \ldots u_n$, such that the vertices in $P_1'$ have their $G_2$-component held constant, and the vertices in $P_2'$ have their $G_1$-component held constant. We can then use the convergence of $G_1$ and $G_2$ to 2-deform these two components to shortest paths, $P_1^*$ and $P_2^*$. It is simple to see from the definition of a Cartesian product that $P_1^*P_2^*$ must be a shortest path from $u_1$ to $u_n$; so we are done.
\end{proof}

\begin{proposition} 
\label{cartesianstable}
 $G_1 \square G_2$ is stable iff $G_1$ and $G_2$ are stable.
\end{proposition}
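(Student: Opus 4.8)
The plan is to mirror the structure of the proof of Proposition \ref{cartesianconvergent}, upgrading each step from ``deformable to a shortest path'' to ``deformable to a \emph{fixed} shortest path''. Throughout I work with $\Delta = 2$. The characterization I will use is that, by Propositions \ref{stabledeform}, \ref{stabledeformpaths} and \ref{convergentdeform}, a graph is $(s,t)$-stable if and only if it is $(s,t)$-convergent \emph{and} any two shortest paths from $s$ to $t$ are mutually $2$-deformable (equivalently, they all lie in a single closed class, which is then the unique one). Since a stable graph is convergent, the convergence half of both implications is already delivered by Proposition \ref{cartesianconvergent}; hence in each direction the only new content is the mutual deformability of shortest paths. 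The two facts I lean on are that $d_{G_1 \square G_2}(u,v) = d_1(u,v) + d_2(u,v)$ and that a $2$-deformation replaces a single intermediate vertex between two fixed anchors, so it projects to (and lifts from) a single-vertex replacement in a factor.

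For the ``only if'' direction, suppose $G_1 \square G_2$ is stable; I will show $G_1$ is stable (the argument for $G_2$ is symmetric). Fix $v_1, v_n \in G_1$ and an arbitrary $u \in G_2$, and embed each shortest $G_1$-path $P$ from $v_1$ to $v_n$ as the walk $(P,u)$ in the product. Because $d_2(u,u)=0$, \emph{every} shortest path in $G_1 \square G_2$ between $(v_1,u)$ and $(v_n,u)$ must use zero $y$-changes, so its $G_2$-coordinate is constant equal to $u$; consequently these shortest paths are exactly the embeddings $(P,u)$ of shortest $G_1$-paths. By stability of the product they are mutually $2$-deformable, and since every walk in this closed class has constant $G_2$-coordinate, each atomic $2$-deformation in the connecting sequence projects onto an atomic $2$-deformation in $G_1$. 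Projecting the whole sequence shows that any two shortest $G_1$-paths between $v_1$ and $v_n$ are mutually $2$-deformable; together with the convergence of $G_1$ furnished by Proposition \ref{cartesianconvergent}, this makes $G_1$ stable.

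For the ``if'' direction, suppose $G_1$ and $G_2$ are stable, hence convergent, so $G_1 \square G_2$ is convergent by Proposition \ref{cartesianconvergent}. It remains to connect any two shortest paths $P, Q$ between fixed endpoints $a=(x_1,y_1)$ and $b=(x_n,y_m)$ by $2$-deformations. I first apply the \emph{swap} $2$-deformations of Proposition \ref{cartesianconvergent} to bring $P$ into the normal form $P_1^* P_2^*$ in which all $x$-changes precede all $y$-changes; since swaps preserve the multisets of $G_1$- and $G_2$-edges used, $P_1^*$ is the embedding at $y=y_1$ of the $G_1$-projection of $P$ (itself a shortest path in $G_1$), and $P_2^*$ is the embedding at $x=x_n$ of a shortest path in $G_2$. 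Doing the same for $Q$ yields $Q_1^* Q_2^*$. Now I use stability of $G_1$ to $2$-deform $P_1^*$ to $Q_1^*$, lifting each factor step to the product at the fixed level $y=y_1$ and leaving $P_2^*$ untouched, and then stability of $G_2$ to $2$-deform $P_2^*$ to $Q_2^*$ at the fixed level $x=x_n$; this carries $P_1^* P_2^*$ to $Q_1^* Q_2^*$. Since all walks in sight are shortest paths of equal length, the deformation $Q \rightsquigarrow Q_1^* Q_2^*$ is reversible, so $P$ and $Q$ are mutually $2$-deformable, completing this direction.

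The main obstacle is making the correspondence between deformations in the product and in the factors airtight. The cleanest enabler is the observation, used above, that a shortest path between two vertices sharing a coordinate must leave that coordinate fixed, which forces the relevant closed class to live entirely in one ``slice'' and thus to project cleanly; dually, a factor deformation lifts to the product simply by holding the other coordinate fixed. Beyond this I must confirm the two bookkeeping facts already implicit in Proposition \ref{cartesianconvergent} — that swaps preserve the edge multisets of each factor, so the normal-form segments $P_i^*$ are genuine shortest paths of the factors, and that $2$-deformations between equal-length walks are reversible — neither of which requires more than the symmetry of the atomic-deformation definition on walks of equal length.
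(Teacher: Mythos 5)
Your proof is correct and follows essentially the same route as the paper's: embedding walks in a fixed slice and projecting deformations for the ``only if'' direction, and swap-based normal forms $P_1^*P_2^*$ combined with factor stability for the ``if'' direction, with your reduction to ``convergence plus mutual $2$-deformability of shortest paths'' and the reversibility of equal-length deformations being exactly the machinery the paper itself sets up via Propositions \ref{stabledeform} and \ref{stabledeformpaths}. If anything, your closed-class slice argument (shortest paths between $(v_1,u)$ and $(v_n,u)$ have constant $G_2$-coordinate, so the entire connecting sequence stays in the slice and projects cleanly) makes the projection step more airtight than the paper's brief ``similar to its counterpart'' remark.
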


\begin{proof}
The direction where $G_1 \square G_2$ is stable is similar to its counterpart in Proposition \ref{cartesianconvergent}. We let $P = v_1 v_2 \ldots v_n$ be a walk in $G_1$ and create a walk $P'$ (as in \ref{cartesianconvergent}) whose projection onto $G_1$ is $P$. Since from stability it follows that $P'$ is 2-deformable to any shortest path from $(v_1,u)$ to $(v_n,u)$, it follows from this that $P$ is 2-deformable to any shortest path from $v_1$ to $v_n$. 

In the other direction, assume $G_1$ and $G_2$ are stable. Let $P = (x_1,y_1) \to (x_n,y_n)$ be any shortest path in $G_1 \square G_2$. We show that $P$ can always be deformed into a specific shortest path $Q$, thus showing that $G_1 \square G_2$ is stable (this shows stability via proposition \ref{stabledeform}). Let $v_i = (x_i, y_i)$ and let $Q_x$ be a fixed, arbitrary optimal path from $x_1 \to x_n$. Let $Q_y$ be a fixed, arbitrary optimal path from $y_1 \to y_n$. Then $Q$ is defined to be a shortest path of the form $(x_1,y_1)(x_2,y_1)\ldots(x_n,y_1)(x_n,y_2)\ldots(x_n,y_n)$ (that is, first we go through the path $Q_x$, holding the y-coordinate fixed, and then through $Q_y$, holding the x-coordinate fixed).

To deform $P$ to $Q$, we perform swaps on $P$ as in \ref{cartesianconvergent} to move all the x-changes to the beginning of the path, y-changes to the end, to get a path $P' = P_1'P_2'$. Then similarly to \ref{cartesianconvergent} we can use the stability of $G_1$ and $G_2$ to deform the front and end to paths $P_1'$,$P_2'$ to $Q_x$ and $Q_y$ respectively. Thus we deform $P$ to $Q$ as desired.
\end{proof}

\subsubsection*{Strong products}

For the rest of this section, by $d(v, u)$ we mean the usual distance from $v$ to $u$ over $G_1 \boxtimes G_2$. The fact that $d(v,u) = max \big( d_1(v,u), d_2(v,u) \big)$ is important and will be used extensively, sometimes implicitly, in the arguments below.

An important thing to note about walks in $G_1 \boxtimes G_2$ is that their $x$ and $y$ components can be 2-deformed independently, so long as the deformation doesn't shorten the walk. More explicitly, consider the walk $P = v_1 \ldots v_n$ and denote $v_i = (x_i, y_i)$. Note that if $x_ix_{i+1}x_{i+2}$ is 2-deformable to $x_ix'x_{i+2}$ then $v_iv_{i+1}v_{i+2} = (x_i,y_i)(x_{i+1},y_{i+1})(x_{i+2},y_{i+2})$ is 2-deformable to $(x_i,y_i)(x',y_{i+1}),(x_{i+2},y_{i+2})$. Thus we have changed the projection of $P$ onto $G_1$ without affecting the projection onto $G_2$. Equivalently, we can change the projection onto $G_2$ without changing the projection onto $G_1$. We will call $\Delta$-deformations that leave either the projection to $G_1$ or to $G_2$ unaffected \textit{independent} deformations.

We prove next the following property of strong products on graphs:

\begin{proposition} 
\label{strongproductconvergent}
$G_1 \boxtimes G_2$ is convergent iff $G_1$ and $G_2$ are convergent.
\end{proposition}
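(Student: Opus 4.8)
The plan is to reduce everything, via Proposition \ref{convergentdeform}, to the statement that over $G_1 \boxtimes G_2$ (working at $\Delta = 2$) every walk is $2$-deformable to a shortest path iff the same holds in each factor. For the direction "$G_1 \boxtimes G_2$ convergent $\Rightarrow$ $G_1, G_2$ convergent", I would mirror the Cartesian argument of Proposition \ref{cartesianconvergent}. Given a walk $P = v_1 \ldots v_n$ in $G_1$, fix an arbitrary $u \in G_2$ and lift $P$ to the walk $P'$ whose $i$-th vertex is $(v_i, u)$. Since the endpoints share their $G_2$-coordinate, $d((v_1,u),(v_n,u)) = \max\big(d_1(v_1,v_n),0\big) = d_1(v_1,v_n)$, so any shortest path to which $P'$ deforms has exactly $d_1(v_1,v_n)$ edges and hence its $G_1$-projection is a shortest path in $G_1$. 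Each atomic product $2$-deformation projects to an atomic $2$-deformation of the $G_1$-projection (or a no-op), because a product removal requires product-adjacent neighbours, which forces $G_1$-adjacency of the projected neighbours. Thus the deformation of $P'$ to a shortest path projects to a deformation of $P$ to a shortest $G_1$-path; the case of $G_2$ is symmetric.

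The real content is the converse. Assume $G_1, G_2$ are convergent and let $P = v_1 \ldots v_n$, $v_i = (x_i,y_i)$, be a walk in the product; write $d_1 = d_1(v_1,v_n)$, $d_2 = d_2(v_1,v_n)$, and assume without loss of generality $d_1 \ge d_2$, so $d(v_1,v_n) = d_1$. The engine of the argument is the following lemma about a single convergent factor: any walk in a convergent graph can be transformed, using only atomic $2$-deformations that \emph{replace} a single vertex (never removing one, so the length is preserved), into a \emph{shortest path with stalls}, i.e.\ a shortest path some of whose vertices are immediately repeated. The point of forbidding removals is that a replacement-only deformation of a projection is precisely an \emph{independent} deformation in the sense introduced before this proposition, and therefore lifts automatically and legally to the product (the untouched coordinate keeps all its original consecutive values, which already satisfy the required distance-$\le 1$ conditions). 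Applying this lemma independently to each coordinate turns $P$ into a walk $P'$ of the same length $n-1$ whose $G_1$-projection is a shortest $G_1$-path carrying $e_1 = (n-1) - d_1$ stalls and whose $G_2$-projection is a shortest $G_2$-path carrying $e_2 = (n-1) - d_2 \ge e_1$ stalls.

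Once both projections are shortest-paths-with-stalls, a vertex of $P'$ is removable exactly when its two neighbours are product-adjacent, and using $d = \max(d_1,d_2)$ this happens precisely at positions that are stalls in \emph{both} coordinates. Stalls can be slid along a projection by independent replacements (replacing a repeated vertex by the next core vertex moves the stall one step along the shortest-path core, and this is a legal product move since the other coordinate only uses steps it already has). Since $e_1 \le e_2$, I would slide the $e_1$ stalls of the $G_1$-projection onto $e_1$ distinct stalls of the $G_2$-projection, respecting their common left-to-right order along the walk, and then remove these $e_1$ now-common stalls. The product walk then has length $(n-1) - e_1 = d_1 = \max(d_1,d_2) = d(v_1,v_n)$ and is therefore a shortest path, exhibiting the required $\Delta$-deformation and proving convergence of $G_1 \boxtimes G_2$.

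The main obstacle is the coupling between the two coordinates under shortening: one cannot delete a vertex to optimise one factor unless the other factor also permits the contraction. The device that breaks the coupling is to optimise each coordinate without ever removing a vertex -- converting every would-be removal into a stall -- and to perform genuine removals only after stalls have been aligned across the two coordinates. Consequently the two genuinely delicate points to verify are (i) the replacement-only lemma for a single convergent graph, whose proof I expect to require an induction on walk length handling the "lateral" and "backtracking" cases where appending a vertex does not extend the core, and (ii) that the stall-sliding can realise the desired monotone matching of $e_1$ stalls into $e_2$ slots without two stalls having to cross. Both are essentially bookkeeping once the identity $d = \max(d_1,d_2)$ and reflexivity are exploited, but they are where the care is needed.
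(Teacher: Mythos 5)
Your proposal is correct, but it proves the substantive direction by a genuinely different route than the paper. For ``$G_1, G_2$ convergent $\Rightarrow$ $G_1 \boxtimes G_2$ convergent,'' the paper argues by contradiction inside a closed communicating class: it introduces the x- and y-scores, proves (Lemma \ref{xyscoreobserve}) that every walk in a closed class must have all x-scores or all y-scores positive (so one projection is $2$-optimal), and then, assuming a recurrent walk is suboptimal, uses convergence of the factors to independently deform its projections into walks $X^*, Y^*$ each containing a zero score, contradicting that lemma. You instead work directly with the characterization of Proposition \ref{convergentdeform} and give an explicit deformation to a shortest path: normalize each projection by length-preserving (replacement-only, hence independent and liftable) deformations into a shortest path with stalls, align the $e_1 \leq e_2$ stalls, and delete. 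Your route buys a constructive deformation sequence and dispenses with the score machinery and closed-class reasoning; the paper's route dispenses with your normalization lemma. Your forward direction coincides with the paper's, including the correct observation that a product shortest path between vertices sharing their $G_2$-coordinate cannot stall in its $G_1$-projection.

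Both of your flagged ``delicate points'' do hold, and more easily than you anticipate. For (i), no fresh induction on lateral/backtracking cases is needed: take the shortening sequence of atomic $2$-deformations guaranteed by convergence of the factor, and simulate it on the stalled walk, converting each removal $uvw \rightsquigarrow uw$ (legal since $d(u,w) \leq 1$) into the replacement $uvw \rightsquigarrow uww$, and sliding stalls out of the way of any later replacement at a core vertex using reflexivity ($avvb \rightsquigarrow aavb$ and $avvb \rightsquigarrow avbb$ are legal atomic replacements because loops exist at every vertex). For (ii), the monotone-matching concern dissolves entirely if you slide \emph{all} stalls of both projections to the trailing end of the walk: the last $e_1$ steps are then stalls in both coordinates (consecutive repetitions of $t$), each such duplicate is removable since $v_{i-1}$ is adjacent to $v_{i+1} = v_i$, and the result has edge-length $(n-1) - e_1 = \max\bigl(d_1(v_1,v_n), d_2(v_1,v_n)\bigr) = d(v_1,v_n)$, hence is a shortest path.
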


To aid us we will employ a useful definition. 

\begin{definition}
Let $P = v_1\ldots v_n$ be some walk of $G_1 \boxtimes G_2$. We define the \textit{x-score} of $v_i$, $1 < i < n$, to be $d_1(v_{i-1},v_{i+1}) - d_1(v_i,v_{i+1})$, and the \textit{y-score} to be $d_2(v_{i-1},v_{i+1}) - d_2(v_{i},v_{i+1})$.
\end{definition}

Note that the x-score (y-score) receive values only in -1, 0, or 1. The x-score (y-score) of a vertex $v_i$ of the walk $P = v_1\ldots v_n$  is a measure of how much ``closer'' $v_i$ brings us to $v_{i+1}$ when projected onto $G_i$, relative to $v_{i-1}$. If it is positive, then the projection of $v_i$ is closer to $v_{i+1}$ than was that of $v_{i-1}$. 

We start with an observation:

\begin{lemma}
\label{xyscoreobserve}
Let $\mathcal{C}$ be a closed communicating class of $G_1 \boxtimes G_2$ (i.e., of the Markov chain $\mathcal{M}_2(s,t)$ for some choice of $s$ and $t$). Then for any walk $P = v_1\ldots v_n \in \mathcal{C}$:
\begin{enumerate}
\item If $d_1(v_1,v_3) = 2$ and $d_2(v_1,v_3) \leq 1$, the x-score of every $v_i$ ($1 < i < n$) is 1
\item If $d_2(v_1,v_3) = 2$ and $d_1(v_1,v_3) \leq 1$, the y-score of every $v_i$ ($1 < i < n$) is 1
\item If $d_2(v_1,v_3) = 2$ and $d_1(v_1,v_3) = 2$, then either all $v_i$ have positive x-score, or all $v_i$ have positive y-score
\end{enumerate}
\end{lemma}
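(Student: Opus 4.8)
The plan is to reduce the whole statement to one local fact about consecutive triples and then use the non-shortenability of closed classes to rule out the ``bad'' configurations. Since every $P\in\mathcal{C}$ is $2$-optimal (Lemma \ref{Toptimallemma}) and $d=\max(d_1,d_2)$, each interior triple $v_{i-1}v_iv_{i+1}$ satisfies $\max\big(d_1(v_{i-1},v_{i+1}),\,d_2(v_{i-1},v_{i+1})\big)=2$; call it \emph{$x$-active} if $d_1(v_{i-1},v_{i+1})=2$ and \emph{$y$-active} if $d_2(v_{i-1},v_{i+1})=2$, so it is at least one of the two. If the triple is $x$-active then the triangle inequality forces $d_1(v_i,v_{i+1})=1$, so the $x$-score of $v_i$ equals $2-1=1$; contrapositively, \emph{an $x$-score $\le 0$ can occur at $v_i$ only when its triple is $y$-active but not $x$-active}, i.e. the $G_1$-projection fails to advance there while the $G_2$-projection runs along a geodesic (and symmetrically with $x,y$ exchanged). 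This already yields that each interior vertex has $x$-score $1$ or $y$-score $1$.

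First I would treat (1). The hypotheses say the first triple is $x$-active with $d_2(v_1,v_3)\le 1$: the walk begins by making genuine $G_1$-progress while its $G_2$-projection all but stalls near $s$. Suppose for contradiction some $x$-score is $\le 0$; by the local fact this produces an interior location $v_j$ (with $j\ge 3$, since the first triple is $x$-active) where the $G_1$-projection stalls or reverses while the $G_2$-projection advances. The idea is to exploit the independence of the two coordinates: using \emph{independent} $2$-deformations, which change one projection while fixing the other and, being non-lengthening, keep us inside $\mathcal{C}$ by Lemma \ref{closedwalk}, I would migrate this $G_1$-stall leftward along the walk toward the start, where $d_2(v_1,v_3)\le 1$ guarantees a matching $G_2$-stall. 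Bringing the two stalls to a common location produces a repeated vertex (or a pair two apart within $G_1$-distance $\le1$ and $G_2$-distance $\le1$, hence within distance $\le1$); such a triple violates $2$-optimality. Thus the resulting walk lies in $\mathcal{C}$ by Lemma \ref{closedwalk} yet is not $2$-optimal, contradicting Lemma \ref{Toptimallemma} (equivalently, it $2$-deforms to a strictly shorter walk, contradicting Lemma \ref{uniformlength}); these are exactly the discrepancy vertices of Lemma \ref{discrepobserve}. Part (2) is identical after swapping $G_1$ and $G_2$.

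For (3) the first triple is both $x$- and $y$-active, and I would prove the dichotomy by the same merging principle: it suffices to show $P$ cannot simultaneously have an $x$-score $\le 0$ and a $y$-score $\le 0$. If it did, the local fact would furnish both a $G_1$-stall and a disjoint $G_2$-stall along $P$ (disjoint because a single triple that is neither $x$- nor $y$-active would already break $2$-optimality); migrating one toward the other by independent deformations again aligns them into a common vertex where neither coordinate advances, yielding the same violation of $2$-optimality inside $\mathcal{C}$. Therefore at least one coordinate never stalls, i.e. all $x$-scores or all $y$-scores are positive, which is the claimed alternative.

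The main obstacle is making the ``migrate and merge'' step rigorous over an arbitrary strong product. Sliding a stall by an independent $2$-deformation is only legal where the relevant length-$2$ sub-projection can be re-routed without lengthening $P$, so one must show the stall can actually be transported all the way to its partner and cannot be blocked by the geometry of $G_1$ or $G_2$; the contradiction is obtained precisely when the migration completes. A secondary subtlety is the benign \emph{terminal} stall: a $G_1$-projection may legitimately stop advancing on the final interior step with $x$-score still equal to $1$ (as in a shortest path whose $G_1$-length is the smaller one), so the argument must be invoked only for stalls that are genuinely followed by renewed $G_1$-motion or a reversal, since those are the configurations that create an alignable slack and hence a shortcut.
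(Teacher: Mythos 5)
Your proposal is essentially sound and belongs to the same family as the paper's proof: both reduce the lemma to the local fact that $2$-optimality plus $d=\max(d_1,d_2)$ forces every interior triple to be $x$- or $y$-active, so a non-positive $x$-score at $v_k$ means $d_1(v_{k-1},v_{k+1})\le 1$ and $d_2(v_{k-1},v_{k+1})=2$, and both then use \emph{independent} $2$-deformations (which keep the walk inside $\mathcal{C}$ by Lemma \ref{closedwalk}) to manufacture a walk in $\mathcal{C}$ with $d(v'_{i-1},v'_{i+1})\le 1$, violating Lemma \ref{Toptimallemma}. The difference is the direction of transport: you shuttle the $G_1$-stall leftward to the start, where $d_2(v_1,v_3)\le 1$ waits, whereas the paper leaves the $G_1$-projection untouched and greedily re-routes the entire $G_2$-projection --- whenever $d_2(y_i,y_{i+2})\ge 1$ it replaces $y_{i+1}$ by a vertex adjacent to both endpoints and one step closer to $y_{i+2}$ --- so that the seed $d_2(v_1,v_3)\le 1$ propagates forward and $d_2(v'_i,v'_{i+2})\le 1$ holds along the whole walk; the contradiction is then read off at $v_k$ itself, with nothing migrated. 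For part (3) the paper likewise avoids merging two stalls: it takes the first vertex $v_k$ failing to have both scores positive and applies part (1) to the subwalk beginning at $v_{k-1}$.

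The obstacle you flag --- that the migration might be blocked by the geometry of $G_1$ --- is the one step you leave unproven, but it is not a real obstacle, and the fix is short. Because all graphs here are reflexive, a triple with $d_1(x_{j-1},x_{j+1})\le 1$ can first be deformed so that $x_j$ becomes $x_{j-1}$ (the required edges $x_{j-1}x_{j-1}$ and $x_{j-1}x_{j+1}$ both exist), creating a literal repetition in the $G_1$-projection; a repetition then shuttles freely in either direction, since $x_{i-1}x_ix_i \rightsquigarrow x_{i-1}x_{i-1}x_i$ uses only a reflexive loop and an edge already present in the walk. This duplication-and-shuttle trick is exactly the one the paper uses in the proof of Proposition \ref{strongproductstable} to move the repeated vertex $t_y$ along a projection. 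Each step is a length-preserving independent atomic $2$-deformation, so all intermediate walks remain in $\mathcal{C}$; once the repetition sits at positions $2,3$ you get $d_1(v'_1,v'_3)=d_1(x_1,x_2)\le 1$ together with the untouched $d_2(v'_1,v'_3)=d_2(y_1,y_3)\le 1$, hence $d(v'_1,v'_3)\le 1$, the desired contradiction; aiming the same shuttle at the $y$-stall position $j'$ disposes of your part (3). Finally, your ``terminal stall'' caveat is unnecessary: the argument runs on scores, not on literal motion. A projection that legitimately stops contributes score $1$ at the stopping vertex and score $\le 0$ only at subsequent repetitions, and under the hypotheses of (1) and (2) such non-positive scores genuinely are contradictory, so no case distinction between benign and alignable stalls is needed.
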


Since $P$ is $\Delta$-optimal, one of (1), (2) or (3) must hold. Essentially, Lemma \ref{xyscoreobserve} says that one of the projections of a path $P \in \mathcal{C}$ onto either $G_1$ or $G_2$ must be $\Delta$-optimal (in $G_1$ or $G_2$ respectively), since this is implied by either all of the x-scores or all of the y-scores being positive. The proof idea is to show that whenever this is not the case, $P$ can be ``stretched'' along either the $G_1$ or $G_2$ axes (via $\Delta$-deformation) to a path that is not $\Delta$-optimal, contradicting that $\mathcal{C}$ is a closed communicating class and hence contains only $\Delta$-optimal paths.

\begin{proof}
For proof of (1), suppose that $d_1(v_1,v_3) = 2$ and $d_2(v_1,v_3) \leq 1$. The proof is by contradiction: Let $v_k$ be the first vertex with non-positive x-score (we assume for contradiction that it exists), meaning that $d_1(v_{k-1}, v_{k+1}) - d_1(v_k, v_{k+1}) \leq 0$, or (alternatively written) $d_1(v_{k-1}, v_{k+1}) \leq d_1(v_k, v_{k+1})$. Recall the definition of 2-optimality, and that every walk in a closed communicating class is 2-optimal. Note that $P$ is 2-optimal, thus, for all $i$, $d(v_i,v_{i+2}) = 2$ and $d(v_i, v_{i+1}) = 1$. In particular we have that $d(v_k, v_{k+1}) = 1$, which implies that $d_1(v_{k-1}, v_{k+1}) \leq d_1(v_k, v_{k+1}) \leq 1$. Since $d(u,v)$ is the maximum of $d_1(u,v)$ and $d_2(u,v)$, and $d(v_{k-1}, v_{k+1}) = 2$, we must then have that $d_2(v_{k-1}, v_{k+1}) = 2$. 

Denote $v_i = (x_i, y_i)$. Using independent deformations, we can 2-deform the vertices of $P$ to have maximized y-scores, as follows: whenever $d(y_i, y_{i+2}) \geq 1$, we 2-deform $y_iy_{i+1}y_{i+2}$ to $y_iy'y_{i+2}$ such that $d(y', y_{i+2}) = d(y_i, y_{i+2}) - 1$ (this is always possible since when $y_i$ and $y_{i+2}$ are not the same vertex there is always a vertex connected to both of them that gets closer to $y_{i+2}$). We do this without affecting the projection of $P$ on $G_1$ (and so the x-scores remain unchanged), creating a walk $P' = v_1v_2'\ldots v_{n-1}'v_n$. Since $d_2(v_1, v_3) \leq 1$, we then have that $d_2(v_i',v_{i+2}') \leq 1$ for all $i$. 

Recalling that $v_k$ has non-positive x-score (meaning that so does $v'_k$), we have that $d_1(v_{k-1}',v_{k+1}') \leq 1$ but now also $d_2(v_{k-1}',v_{k+1}') \leq 1$, and so $d(v_{k-1}',v_{k+1}') \leq 1$, a contradiction to the 2-optimality of all walks in $\mathcal{C}$ and the closure of $\mathcal{C}$ under 2-deformations.

The proof of (2) is the same.

For proof of (3), again by contradiction, let $v_k$ be the first vertex that doesn't have both x- and y-scores positive. Suppose wlog that the y-score is non-positive, meaning we must have $d_2(v_{k-1},v_{k+1}) \leq d_2(v_k,v_{k+1}) \leq 1$. Due to 2-optimality we must then have $d_1(v_{k-1},v_{k+1}) = 2$, so we can apply the same argument as (1) to the subwalk $v_{k-1} \ldots v_n$ (that is, our new '$v_1$' and '$v_3$' are $v_{k-1}$ and $v_{k+1}$), to get that the vertices past $v_k$ have positive x-score as well.
\end{proof}

We can now move on to the proof of \ref{strongproductconvergent}.

\begin{proof}
The case where $G_1 \boxtimes G_2$ is convergent uses the same idea as propositions \ref{cartesianconvergent} and \ref{cartesianstable}. Let $P = v_1v_2 \ldots v_n$ be a walk in $G_1$ and create a walk $P'$ over $G_1 \boxtimes G_2$ (as in \ref{cartesianconvergent}) whose projection onto $G_1$ is $P$ and whose y-coordinate is fixed. We may deform $P'$ into a shortest path, since $G_1 \boxtimes G_2$ is convergent. Any atomic 2-deformation of $P'$ that changes its projection onto $G_1$ can be translated into a 2-deformation of $P$ by looking only at the changes to the x-coordinates. This implies that $P$ is deformable to a shortest path in $G_1$.

In the other direction, suppose $G_1$ and $G_2$ are convergent. Let $\mathcal{C}$ be a closed communicating class of $G_1 \boxtimes G_2$ and let $P = v_1\ldots v_n$ be a walk in $\mathcal{C}$. We will show $P$ is a shortest path from $v_1$ to $v_n$.

Denote $v_i = (x_i, y_i)$. Assume $P$ is not optimal. Then $n-1 > d(v_1, v_n) \geq max\big(d_1(v_1,v_n), d_2(v_1,v_n)\big)$, and thus neither of the walks $X = x_1 x_2 \ldots x_n$ and $Y = y_1 y_2 \ldots y_n$ is optimal. Since $G_1$ and $G_2$ are convergent, we can 2-deform these paths to optimality. Hence, for both $X$ and $Y$, there is a sequence of atomic 2-deformations that eventually results in a walk of length $n-1$ (i.e. a length one less than their current length). The penultimate element of this sequence will be a walk $u_1 u_2 \ldots u_n$ such that $d(u_{k-1}, u_{k+1}) \leq 1$ for some $k$, i.e., a walk that is not 2-optimal (since we cannot delete any vertices from a 2-optimal walk via a single atomic 2-deformation).

Let $X' = x_1' x_2' \ldots x_n'$ be a 2-deformation of $X$ such that for some $k$, $d(x'_{k-1},x'_{k+1}) \leq 1$.  There must be such a 2-deformation in light of the above. In X', 2-deform the sub-walk $x'_{k-1}x'_kx'_{k+1}$ to $x'_{k-1}x'_{k-1}x'_{k+1}$, calling the new walk $X^*$. Note that $X^*$ has a vertex with non-positive (zero) x-score.

Let $Y'$ be a 2-deformation of $Y$ such that for some $j$, $d(y'_{j-1}, y'_{j+1}) \leq 1$. We again 2-deform the sub-walk $y'_{j-1}y'_jy'_{j+1}$ to $y'_{j-1}y'_{j-1}y'_{j+1}$, resulting in a walk $Y^*$ that has a vertex with non-positive (zero) y-score.

We independently deform the x- and y- components of $P$ into $X^*$ and $Y^*$ respectively. Call the new walk $P^*$.

According to Lemma \ref{xyscoreobserve}, $P^*$ either has all x-scores positive, or all y-scores positive. But we know this to be false due to the way we constructed $X^*$ and $Y^*$, a contradiction to Lemma \ref{xyscoreobserve}.
\end{proof}

\begin{proposition} 
\label{strongproductstable}
$G_1 \boxtimes G_2$ is stable iff $G_1$ and $G_2$ are stable.
\end{proposition}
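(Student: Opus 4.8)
The plan is to mirror Propositions~\ref{cartesianstable} and~\ref{strongproductconvergent} and reduce stability, via Propositions~\ref{stabledeform} and~\ref{stabledeformpaths}, to the statement that every walk $2$-deforms to one fixed shortest path. For the easy direction, assuming $G_1 \boxtimes G_2$ is stable, I would fix a vertex $u \in G_2$, take an arbitrary walk $P = v_1 \ldots v_n$ in $G_1$, and lift it to the walk $P'$ with $i$-th vertex $(v_i, u)$. Since $P'$ has constant second coordinate, its $G_1$-projection is $P$ and $P'$ is a shortest path of $G_1 \boxtimes G_2$ exactly when $P$ is shortest in $G_1$. Stability of the product gives a length-preserving sequence of atomic $2$-deformations between any two such lifted shortest paths; projecting each atomic step onto $G_1$ yields either an atomic $2$-deformation of the projection (single-vertex replacement) or the identity (when only the second coordinate moved). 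Hence any two shortest paths of $G_1$ deform to one another. Since $G_1$ is convergent (it is a factor of a convergent product, by the stable$\Rightarrow$convergent implication together with Proposition~\ref{strongproductconvergent}), every walk of $G_1$ deforms to a shortest path, hence to one fixed shortest path, so $G_1$ is stable by Proposition~\ref{stabledeform}; symmetrically $G_2$ is stable.

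For the substantive direction, assume $G_1$ and $G_2$ are stable, hence convergent, so by Proposition~\ref{strongproductconvergent} $G_1 \boxtimes G_2$ is convergent and every closed class consists only of shortest paths. By Proposition~\ref{stabledeformpaths} it suffices to show that all shortest paths from $s=(a,u)$ to $t=(b,w)$ deform to one canonical path $Q$. Write $D_1 = d_1(a,b)$, $D_2 = d_2(u,w)$ and assume without loss of generality $D_1 \geq D_2$, so $d(s,t) = \max(D_1,D_2) = D_1 =: D$. A product shortest path has $D+1$ vertices, so its $G_1$-projection is a walk on $D+1$ vertices realizing distance $D_1 = D$ and is therefore automatically a \emph{tight} shortest path of $G_1$; conversely any pairing of a tight shortest $G_1$-path with an arbitrary length-$(D+1)$ walk of $G_2$ from $u$ to $w$ is a shortest path of the product. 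Fixing shortest paths $Q_x$ in $G_1$ and $Q_y$ in $G_2$, I take $Q$ to be the ``diagonal'' shortest path that advances along $Q_x$ and $Q_y$ simultaneously for $D_2$ steps and then advances only along $Q_x$, idling the $G_2$-coordinate at $w$ for the final $D - D_2$ steps. Using independent deformations and the stability of $G_1$, I would first deform the $G_1$-projection of any product shortest path to $Q_x$ while holding the $G_2$-projection fixed; this is length-preserving and keeps us among shortest paths.

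It then remains to deform the $G_2$-projection $Y$ --- now an \emph{arbitrary} length-$(D+1)$ walk from $u$ to $w$ carrying all of the slack $\sigma := D - D_2$ --- to the fixed padded walk $\tilde Q_y = Q_y\, w^{\,\sigma}$, using only length-preserving $2$-deformations: we cannot shorten $Y$, since the product path is already shortest and the tight $G_1$-coordinate needs every step. This is the main obstacle, because the stability of $G_2$ only speaks about deforming walks to shortest paths, which shortens them. I would prove the needed slack-pushing lemma by induction on $\sigma$. For $\sigma = 0$, $Y$ is a shortest path and stability finishes. For $\sigma \geq 1$, convergence of $G_2$ (Proposition~\ref{convergentdeform}) lets me deform $Y$ length-preservingly up to just before the first deletion in a deformation to $Q_y$, producing a walk with a deletable vertex $y_k$, i.e.\ a spot where $d_2(y_{k-1},y_{k+1}) \leq 1$. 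Rather than delete $y_k$, I ``slide a stall to the end'': replacing $y_k$ by $y_{k+1}$ creates a stall $y_k = y_{k+1}$, and each subsequent replacement $y_j := y_{j+1}$ moves it one place rightward (valid since $y_j = y_{j+1}$ is adjacent to $y_{j+2}$), terminating in the trailing stall $w,w$. The net length-preserving effect is to delete $y_k$ and append a trailing $w$, giving $Z\,w$ where $Z$ is a walk on $D$ vertices from $u$ to $w$ of slack $\sigma - 1$. The inductive hypothesis deforms $Z$ to $Q_y\,w^{\,\sigma-1}$ --- and these deformations lift to the full walk while fixing the final $w$ --- producing $\tilde Q_y$ and hence $Q$, which completes the argument.
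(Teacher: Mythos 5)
Your proposal is correct and follows essentially the same route as the paper: the lifting/projection argument for the easy direction, independent deformations to send the tight $G_1$-projection to $Q_x$ via stability of $G_1$, and the key trick of using convergence of $G_2$ to expose a deletable vertex and then replacing it by its successor and sliding the resulting stall rightward to pad the walk with trailing copies of $t_y$. Your only departure is organizational --- a single induction on the slack $\sigma$ in place of the paper's two cases ($d_1(s,t)=d_2(s,t)$ versus $d_1(s,t)\neq d_2(s,t)$) and its duplication of $t_y$ to index $n_2$ --- which is a mild, arguably cleaner, repackaging of the same mechanism.
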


\begin{proof}
The direction where $G_1 \boxtimes G_2$ is stable uses precisely the same idea as in Propositions \ref{cartesianconvergent}, \ref{cartesianstable}, and \ref{strongproductconvergent},  and is omitted.

Suppose $G_1$ and $G_2$ are stable. We will show $G_1 \boxtimes G_2$ is stable. To this end we show that every shortest path from $s = (s_x, s_y)$ to $t = (t_x, t_y)$ is deformable to a fixed path $Q$ (see Proposition \ref{stabledeform}). We separate the proof into cases. 

(1) Assume $d_1(s,t) = d_2(s,t)$. Set $Q = u_1 \ldots u_n$ to be some arbitrary optimal path from $s$ to $t$, and set $Q_x$, $Q_y$ to be the projections over $G_1$ and $G_2$ respectively.

Let $P = v_1 \ldots v_n$ be a path from $s = v_1$ to $t = v_n$. We can assume it is an optimal path due to convergence. Denote by $X = x_1 \ldots x_n$, $Y = y_1 \ldots y_n$ the projections of $P$ onto $G_1$ and $G_2$. Since $d_1(v_1,v_n) = d_2(v_1,v_n)$, we have that $X$ and $Y$ are both optimal. Thanks to the stability of $G_1$ and $G_2$, we can independently 2-deform $X$ to $Q_x$ and $Y$ to $Q_y$. So any $P$ from $s$ to $t$ is deformable to $Q$ and we are done.

(2) Assume $d_1(s,t) \neq d_2(s,t)$. wlog we will assume that $d_1(s,t) > d_2(s,t)$. Write $n_2 = d_2(s,t) + 1$ for shorthand. Define $Q$ similar to before, its projections over $G_1$ and $G_2$ being as follows: first, $Q_x$ is some arbitrary optimal path from $s_x$ to $t_x$. Then, the first $n_2$ vertices of $Q_y$ are some optimal path from $s_y$ to $t_y$ and the rest are $t_y$ repeated ($Q_y = \stackrel{n_2}{s_y \ldots t_y} t_y \ldots t_y$).

As before let $P = v_1 \ldots v_n$ be any shortest path from $s$ to $t$, with the projections $X$ and $Y$ defined as in the previous case. We can deform $X$ to $Q_x$ like before. Deforming $Y$ to $Q_y$ is more delicate.

First we show that given a subwalk $Y(j,k) = y_j \ldots y_k$ such that $d_2(y_j,y_k) < k - j$ (that is, the subwalk is sub-optimal), it is possible to deform $Y(j,k)$ such that both the $k$th and the $(k-1)$th vertices will be equal to $y_k$. 

Since $Y(j,k)$ is sub-optimal and $G_2$ is stable we can 2-deform $Y(j,k)$ to a walk $Y(j,k)' = y_j y_{j+1}' \ldots y_{k-1}' y_k$ such that $d_2 (y_t', y_{t+2}') = 1$ for some $t$ (the same idea was used in Proposition \ref{strongproductconvergent}). 

We perform a sequence of 2-deformations on $Y(j,k)'$, explained as follows: first we look at the sub-walk of length 3  $y_t'y_{t+1}'y_{t+2}'$ and 2-deform it to $y_t'y_{t+2}'y_{t+2}'$ (this is possible by the above). We then look at the sub-walk $y_{t+2}'y_{t+2}'y_{t+3}'$  that starts 1 vertex after $y_t'$ (which was $y_{t+1}'y_{t+2}'y_{t+3}'$ before this deformation), and 2-deform it to $y_{t+2}'y_{t+3}'y_{t+3}'$. We continue moving ``rightward'' in this manner, at every step taking a sub-walk of the form $y_{t+b}'y_{t+b}'y_{t+b+1}'$ and 2-deforming it to $y_{t+b}'y_{t+b+1}'y_{t+b+1}'$:

\begin{align*}
	\mathrm{y_j \ldots y_t' y_{t+1}' y_{t+2}' \ldots y_{k-1}' y_k} \stackrel{2-def.}{\rightsquigarrow} \mathrm{y_j \ldots y_t' y_{t+2}' y_{t+2}' \ldots y_k}\\
	\mathrm{y_j  \ldots y_t' y_{t+2}' y_{t+2}' y_{t+3}' \ldots  y_k} \rightsquigarrow \mathrm{y_j  \ldots y_t' y_{t+2}' y_{t+3}' y_{t+3}' \ldots  y_k}\\
	&\vdotswithin{=} \notag \\
	\mathrm{y_j  \ldots y_{k-1}' y_{k-1}'  y_k'} \rightsquigarrow \mathrm{y_j  \ldots y_{k-1}' y_k y_k}
\end{align*}

This eventually gives the desired deformation: a walk that ends with $y_ky_k$.

Let $o \geq n_2$ be the earliest index of the vertex $t_y$ in the walk $Y$. We deform $Y(1,o)$ a finite number of times using the above idea, to duplicate $t_y$ to the index $n_2$, resulting in a walk $Y'$. The last ($n$th) vertex of $Y'$ is $t_y$, and so is the $n_2$th vertex, so the subwalk $Y'(n_2,n)$ is sub-optimal. Thus we may repeatedly apply the above 2-deformations to duplicate $t_y$ across the rest of the walk. This leaves us with a 2-deformation of $Y$, $Y''$, such that the first $n_2$ vertices are an optimal path from $s_y$ to $t_y$, and the rest of the vertices are $t_y$. Finally we 2-deform the first $n_2$ vertices (using the stability of $G_2$) to equal those of $Q_y$, and we are done. 
\end{proof}

An interesting corollary of the above propositions is the following:

\begin{corollary}
$G_1 \boxtimes G_2$ is stable (resp., convergent) iff $G_1 \square G_2$ is stable (resp., convergent)
\end{corollary}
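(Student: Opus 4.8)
The plan is to observe that this corollary is an immediate consequence of the four propositions just established, and requires no genuinely new argument. The crucial point is that both product operations are characterized by the \emph{same} condition on the factor graphs, so the status of the two products must coincide. First I would invoke Proposition~\ref{strongproductconvergent}, which states that $G_1 \boxtimes G_2$ is convergent if and only if both $G_1$ and $G_2$ are convergent, together with Proposition~\ref{cartesianconvergent}, which states that $G_1 \square G_2$ is convergent if and only if both $G_1$ and $G_2$ are convergent. Chaining these two equivalences through their common right-hand side yields that $G_1 \boxtimes G_2$ is convergent if and only if $G_1 \square G_2$ is convergent.

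I would then repeat the identical transitivity step for stability, this time citing Proposition~\ref{strongproductstable} and Proposition~\ref{cartesianstable}. Each of these asserts that the respective product (strong or Cartesian) is stable if and only if both factors $G_1$ and $G_2$ are stable; composing the two biconditionals gives that $G_1 \boxtimes G_2$ is stable if and only if $G_1 \square G_2$ is stable. The ``resp.'' phrasing in the statement is handled simply by running the convergence and stability arguments in parallel.

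There is no real obstacle here: all of the substantive work has already been done in the four preceding propositions, and the corollary merely records the formal fact that the strong and Cartesian products admit a common characterization in terms of the factors. The only thing worth checking is that the factor conditions appearing in the paired propositions are literally the same—``$G_1$ and $G_2$ convergent'' in Propositions~\ref{strongproductconvergent} and~\ref{cartesianconvergent}, and ``$G_1$ and $G_2$ stable'' in Propositions~\ref{strongproductstable} and~\ref{cartesianstable}—which is indeed the case, so the chaining is valid.
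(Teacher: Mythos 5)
Your proposal is correct and matches the paper's intent exactly: the corollary is stated there as an immediate consequence of Propositions~\ref{cartesianconvergent}, \ref{cartesianstable}, \ref{strongproductconvergent}, and \ref{strongproductstable}, obtained by chaining the two biconditionals through the common condition on the factors $G_1$ and $G_2$. No further argument is needed, and your check that the factor conditions are literally identical is the only point that matters.
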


In other words, for questions of stability and convergence of probabilistic pursuit, there is no difference between the topology induced by the ``taxicab'' distance $d\big((x_1,y_1),(x_2,y_2)\big) = d_1(x_1,x_2) + d_2(y_1,y_2)$ and the topology induced by the distance $d\big((x_1,x_2),(y_1,y_2)\big) = max\big( d_1(x_1,y_1), d_2(x_2,y_2) \big)$.

\subsection{Planar and Chordal Graphs}

It is surprisingly difficult to pinpoint the effect of the underlying graph being \textit{planar} on chain pursuit. Certainly, not every planar graph is convergent - a simple counterexample is a cycle of length 5 and above. But one might expect that planar graphs with high connectivity, or other good ``regularity'' properties (for example, low complexity of the planar graph's faces), would be convergent, if not stable. The counterexamples in Figure 9 highlight the difficulty of pinpointing such properties. Figure 9, (a) shows a maximal planar graph that is not convergent. Figure 9, (b) shows a matchstick graph (a planar graph that can be drawn on the plane with all edge lengths being 1) whose faces are all triangular or square. The dashed edges show a suboptimal recurrent path from $s$ to $t$; the double lines show the optimal path.

\setcounter{figure}{8}
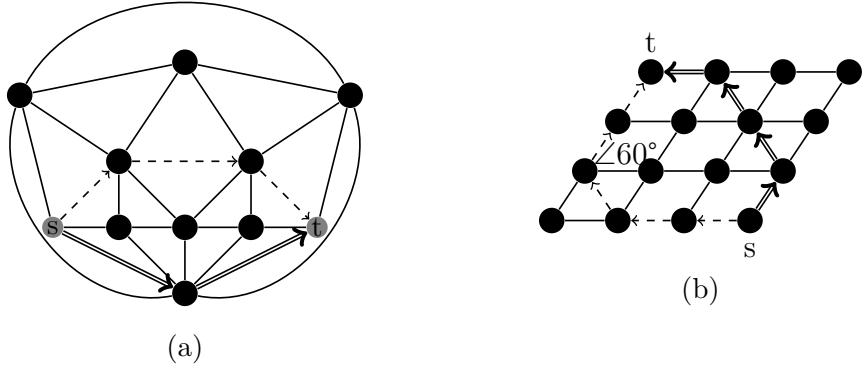
\begin{figure}[h]
\label{fig:planarexamples}
\caption{Some planar, non-convergent graphs.}
\begin{subfigure}{.5\textwidth}
\centering

\begin{tikzpicture}
[every node/.style={minimum size=8pt, inner sep=0pt}]
\node (2) [circle, fill] at (150.0pt, -137.5pt) {\textcolor{black}{1}};
\node (3) [circle, fill] at (200.0pt, -137.5pt) {\textcolor{black}{3}};
\node (5) [circle, fill] at (150.0pt, -162.5pt) {\textcolor{black}{5}};
\node (6) [circle, fill] at (175.0pt, -162.5pt) {\textcolor{black}{6}};
\node (7) [circle, fill] at (200.0pt, -162.5pt) {\textcolor{black}{7}};
\node (8) [circle, fill] at (175.0pt, -187.5pt) {\textcolor{black}{8}};
\node (9) [circle, fill] at (112.5pt, -112.5pt) {\textcolor{black}{9}};
\node (10) [circle, fill] at (175.0pt, -100.0pt) {\textcolor{black}{1}};
\node (1) [circle, fill=gray] at (125.0pt, -162.5pt) {\textcolor{black}{\small{s}}};
\node (4) [circle, fill=gray] at (225.0pt, -162.5pt) {\textcolor{black}{\small{t}}};
\node (11) [circle, fill] at (237.5pt, -112.5pt) {\textcolor{black}{1}};
\draw [line width=0.625, ->, dashed, color=black] (1) to  (2);
\draw [line width=0.625, ->, dashed, color=black] (2) to  (3);
\draw [line width=0.625, ->, dashed, color=black] (3) to  (4);
\draw [line width=0.625, ->, double, color=black] (1) to  (8);
\draw [line width=0.625, ->, double, color=black] (8) to  (4);
\draw [line width=0.625, color=black] (1) to  (5);
\draw [line width=0.625, color=black] (5) to  (6);
\draw [line width=0.625, color=black] (6) to  (7);
\draw [line width=0.625, color=black] (7) to  (4);
\draw [line width=0.625, color=black] (2) to  (6);
\draw [line width=0.625, color=black] (6) to  (3);
\draw [line width=0.625, color=black] (7) to  (3);
\draw [line width=0.625, color=black] (5) to  (2);
\draw [line width=0.625, color=black] (1) to  (9);
\draw [line width=0.625, color=black] (2) to  (9);
\draw [line width=0.625, color=black] (2) to  (10);
\draw [line width=0.625, color=black] (3) to  (10);
\draw [line width=0.625, color=black] (3) to  (11);
\draw [line width=0.625, color=black] (4) to  (11);
\draw [line width=0.625, color=black] (9) to  (10);
\draw [line width=0.625, color=black] (10) to  (11);
\draw [line width=0.625, color=black] (8) to  [in=288, out=349] (11);
\draw [line width=0.625, color=black] (8) to  [in=252, out=191] (9);
\draw [line width=0.625, color=black] (9) to  [in=119, out=61] (11);
\draw [line width=0.625, color=black] (5) to  (8);
\draw [line width=0.625, color=black] (6) to  (8);
\draw [line width=0.625, color=black] (7) to  (8);
\end{tikzpicture}
\subcaption{}
\end{subfigure}%
\begin{subfigure}{.5\textwidth}
\centering

\begin{tikzpicture}
[every node/.style={inner sep=0pt}]
\node (1) [circle, fill] at (125.0pt, -218.75pt) {\textcolor{black}{1}};
\node (2) [circle, fill] at (150.0pt, -218.75pt) {\textcolor{black}{2}};
\node (3) [circle, fill] at (137.5pt, -200.0pt) {\textcolor{black}{3}};
\node (4) [circle, fill] at (162.5pt, -200.0pt) {\textcolor{black}{4}};
\node (5) [circle, fill] at (112.5pt, -200.0pt) {\textcolor{black}{5}};
\node (6) [circle, fill] at (100.0pt, -218.75pt) {\textcolor{black}{6}};
\node (7) [circle, fill] at (87.5pt, -237.5pt) {\textcolor{black}{7}};
\node (8) [circle, fill] at (112.5pt, -237.5pt) {\textcolor{black}{8}};
\node (9) [circle, fill] at (137.5pt, -237.5pt) {\textcolor{black}{9}};
\node (10) [circle, fill] at (87.5pt, -200.0pt) {\textcolor{black}{1}};
\node (11) [circle, fill] at (75.0pt, -218.75pt) {\textcolor{black}{1}};
\node (12) [circle, fill] at (62.5pt, -237.5pt) {\textcolor{black}{2}};
\node (13) [circle, fill] at (100.0pt, -181.25pt) {\textcolor{black}{3}};
\node (14) [circle, fill] at (125.0pt, -181.25pt) {\textcolor{black}{4}};
\node (15) [circle, fill] at (150.0pt, -181.25pt) {\textcolor{black}{1}};
\node (16) [circle, fill] at (175.0pt, -181.25pt) {\textcolor{black}{1}};
\draw [line width=0.625, color=black] (1) to  (3);
\draw [line width=0.625, color=black] (2) to  (4);
\draw [line width=0.625, color=black] (1) to  (2);
\draw [line width=0.625, color=black] (3) to  (4);
\draw [line width=0.625, ->, double, color=black] (2) to  (3);
\draw [line width=0.625, ->, double, color=black] (9) to  (2);
\draw [line width=0.625, color=black] (8) to  (1);
\draw [line width=0.625, color=black] (7) to  (6);
\draw [line width=0.625, color=black] (6) to  (1);
\draw [line width=0.625, color=black] (6) to  (5);
\draw [line width=0.625, color=black] (5) to  (3);
\draw [line width=0.625, color=black] (7) to  (12);
\draw [line width=0.625, color=black] (12) to  (11);
\draw [line width=0.625, color=black] (6) to  (11);
\draw [line width=0.625, ->, dashed, color=black] (11) to  (10);
\draw [line width=0.625, color=black] (10) to  (5);
\draw [line width=0.625, ->, dashed, color=black] (10) to  (13);
\draw [line width=0.625, color=black] (5) to  (14);
\draw [line width=0.625, color=black] (3) to  (15);
\draw [line width=0.625, color=black] (14) to  (15);
\draw [line width=0.625, color=black] (15) to  (16);
\draw [line width=0.625, color=black] (4) to  (16);
\draw [line width=0.625, ->, double, color=black] (3) to  (14);
\draw [line width=0.625, ->, dashed, color=black] (7) to  (11);
\draw [line width=0.625, ->, dashed, color=black] (9) to  (8);
\draw [line width=0.625, ->, dashed, color=black] (8) to  (7);
\draw [line width=0.625, ->, double, color=black] (14) to  (13);
\node at (137.5pt, -248.75pt) {\textcolor{black}{s}};
\node at (100.0pt, -170.0pt) {\textcolor{black}{t}};
\node at (90.5pt, -211.875pt) {\textcolor{black}{$\angle\small60\degree$}};
\end{tikzpicture}
\subcaption{}
\end{subfigure}
\end{figure}

An outerplanar graph is a graph that has a planar drawing for which all vertices belong to the outer face of the drawing. A maximal outerplanar graph is an outerplanar graph to which we can add no edges. A positive result for planar graphs is the following:

\begin{proposition}
\label{maximalplanarconvergent}
Every maximal outerplanar graph is convergent.
\end{proposition}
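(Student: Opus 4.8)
The plan is to reduce, via Proposition \ref{convergentT2} and Proposition \ref{convergentdeform}, to showing that for $\Delta = 2$ every walk from $s$ to $t$ is $2$-deformable to a shortest path, and then to induct on the number of vertices $|V(G)|$ using the \emph{ear structure} of maximal outerplanar graphs. Recall that a maximal outerplanar graph with at least $4$ vertices is a triangulation of a polygon; it is therefore \emph{chordal} (no cycle of length $\ge 4$ is chordless) and it contains an \emph{ear}: a degree-$2$ vertex $v$ whose two neighbors $a,b$ are adjacent, forming a triangle $vab$. The graph $G' = G - v$ is again maximal outerplanar, and since $v$ is simplicial of degree $2$ it never lies in the interior of a shortest path between two other vertices; hence $G'$ is isometrically embedded in $G$ (so $d_{G'} = d_G$ on $V(G')$) and shortest paths between vertices of $G'$ avoid $v$. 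By induction $G'$ is convergent. The base cases ($|V(G)| \le 3$, which are complete graphs) are immediate.

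First I would dispose of the generic case, where neither $s$ nor $t$ equals the chosen ear $v$. Any occurrence of $v$ inside a walk is flanked on both sides by vertices of $\{a,b\}$ (the only neighbors of $v$); since $a \sim b$, each such occurrence can be removed by an atomic $2$-deformation (deleting $v$ from $a\,v\,b$ using the edge $ab$, or collapsing a reflexive repetition $a\,v\,a$). After erasing every interior occurrence of $v$ we obtain a walk from $s$ to $t$ lying entirely in $G'$, which by the induction hypothesis $2$-deforms to a shortest path; as $G'$ is isometric in $G$, this is a shortest path of $G$ as well.

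The crux, and the step I expect to be the main obstacle, is the \emph{endpoint case}, where we are forced to take the ear as an endpoint, say $s = v$; this is unavoidable only when every ear of $G$ lies in $\{s,t\}$, e.g.\ for a fan whose two ears are $s$ and $t$. Every walk from $v$ then has the form $v, c, \dots, t$ with $c \in \{a,b\}$; after removing interior occurrences of $v$ and applying the induction hypothesis to the suffix (a walk in $G'$), we reduce to $W = v \cdot R$ where $R$ is a shortest $c$--$t$ path. If $c$ is a neighbor of $v$ closest to $t$, then $W$ is already shortest and we are done; the difficulty is when $v$ steps to the \emph{farther} neighbor (say $c=b$, while the near neighbor $a$ has $d(a,t)=d(b,t)-1$), so that $W$ is longer by exactly one than $d(v,t)$. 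Here I would locate the discrepancy vertices of $W$: if the first discrepancy occurs before $t$, an inner induction on walk length shortens a proper prefix and we recurse; the genuinely hard situation is when $v$ and $t$ are themselves the only discrepancy pair, i.e.\ the detour through the far neighbor $b$ stays "far" from the near neighbor $a$ all the way to $t$.

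To break this last case I would invoke \emph{chordality}. Consider the closed walk formed by $R$ together with a shortest path from $t$ back to the near neighbor $a$, finished off by the edge $ab$: either it returns through a vertex shared with $R$, exposing a shortcut that strictly shortens $W$, or it is a genuine cycle of length $\ge 4$, and chordality supplies a chord. Because $b$ is strictly farther from $t$ than $a$, this chord cannot be the edge whose presence would contradict $d(b,t)$, so it is forced to provide exactly the adjacency needed either to reveal a shortcut or to "swap" the first step of $W$ from $b$ onto the near neighbor's geodesic. The smallest instance is transparent: when $d(a,t)=1$ we have $W = v,b,p,t$ sitting on the $4$-cycle $a,b,p,t$, whose chord must be $a\,p$ (it cannot be $b\,t$, since $d(b,t)=2$), and then $b,p,t \rightsquigarrow b,a,t$ followed by deletion of $b$ (using $v \sim a$) yields the shortest path $v,a,t$. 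Carrying this chord-extraction out for general $r = d(a,t)$, by a further induction on $r$, is the technical heart of the argument; I expect the bookkeeping of \emph{which} chord appears, and verifying it always drives the walk strictly closer to optimality, to be where the real effort lies.
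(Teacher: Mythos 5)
Your route is essentially the paper's: Proposition \ref{maximalplanarconvergent} is obtained there as a corollary of Proposition \ref{chordalconvergent} (chordal graphs are convergent), whose proof deletes a simplicial vertex $v$ (your ear), erases interior occurrences of $v$ by atomic $2$-deformations, reduces the endpoint case via discrepancy vertices to the situation where $s=v$ and $(v,t)$ is the remaining discrepancy pair, and then invokes chordality of the cycle formed by the walk and a $v$--$t$ geodesic. Up to that point your argument is sound (modulo the corner $s=t=v$, which the paper treats separately). But the step you defer --- ``carrying this chord-extraction out for general $r=d(a,t)$, by a further induction on $r$'' --- is a genuine gap, and not mere bookkeeping: the plan as stated would fail. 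Chordality hands you \emph{some} chord of the long cycle, but a single chord need not ``drive the walk strictly closer to optimality.'' Concretely, write $W=v,b,w_1,\dots,w_r,t$ with $b,w_1,\dots,w_r,t$ geodesic, and let $a,z_1,\dots,z_{r-1},t$ be the chosen $a$--$t$ geodesic; a cross chord such as $w_1z_1$ is compatible with all the constraints in play ($d(b,t)=r+1$, $d(a,t)=r$, geodesy of both arcs, and Lemma \ref{discrepobserve}(1)), yet it neither shortens $W$ nor creates any adjacency at $b$ or $a$ usable for your swap. An induction that consumes one chord per step therefore has no way to make progress.

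The missing idea --- exactly how the paper closes this case --- is to use chordality \emph{hereditarily} rather than one chord at a time: the subgraph $H$ induced on the cycle is chordal (induced subgraph of a chordal graph), hence so is $H-v$, and one takes a cycle of $H-v$ through the edge $ab$ with the \emph{minimum} number of vertices. In a chordal graph such a minimal cycle must be a triangle $abx$ (a longer minimal cycle would have a chord splitting it into a shorter cycle still containing $ab$), and the apex $x$ is then pinned down by constraints you already have: $x=z_j$ is impossible, since $b\sim z_j$ would give $d(b,t)\leq 1+(r-j)\leq r$, contradicting $d(b,t)=r+1$ (in the paper's general formulation this step also uses Lemma \ref{discrepobserve}(1) to forbid $b$ lying on a shortest $v$--$t$ path); and $x=w_q$ with $q\geq 2$ is impossible, since the chord $bw_q$ would contradict the geodesy of $b\,w_1\dots t$. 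Hence $x=w_1$, the edge $aw_1$ exists, and a single swap $v\,b\,w_1 \rightsquigarrow v\,a\,w_1$ followed by the induction hypothesis applied to $a\,w_1\dots t$ in $G-v$ finishes --- with no induction on $r$ at all. Your $r=1$ computation is precisely this argument in miniature; the minimal-cycle observation is what makes it scale.
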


In fact, this proposition stems from a more general observation. A fairly well-known class of graphs are the \textit{chordal graphs} (see \cite{chordalgraphs, chordalgraphs2} for background). Chordal graphs can be defined in two equivalent ways.

\theoremstyle{definition}
\begin{definition}
\label{simplicialvertexdefinition}
A \textit{simplicial vertex} is a vertex whose neighbors form a complete graph \cite{chordalgraphs}.
\end{definition}

\theoremstyle{definition}
\begin{definition}
\label{chordalgraphdefinition}
The following are equivalent characterizations of chordal graphs:
\begin{enumerate}
\item All cycles of four or more vertices of $G$ have a chord (an edge that is not part of the cycle but connects two vertices of the cycle).
\item $G$ has a \textit{perfect elimination ordering}: an ordering of the vertices of the graph such that, for each vertex v, v is a \textit{simplicial vertex} of the graph induced by $v$ and the vertices that occur after $v$ in the order.
\end{enumerate}
\end{definition}

Definition (1) hints at a ``regularity'' property of the kind we are looking for that is possessed by chordal graphs. 

It is well-known and simple to prove that every maximal outerplanar graph is chordal. To see this, note that the regions in the interior of a maximal outerplanar graph form a tree (if there was a cycle, it would necessarily surround some vertex of the graph, which contradicts outerplanarity). As such any region of the outerplanar graph corresponding to a leaf of that tree must have a vertex of degree 2. It is simple to see that this vertex is simplicial, and that after its removal the graph remains maximal outerplanar. Thus we found a perfect elimination ordering, and every such graph must be chordal. (See the famous ``ear clipping'' algorithm \cite{earclipping}). 

Thus, proposition \ref{maximalplanarconvergent} is a consequence of the following:

\begin{proposition}
\label{chordalconvergent}
Chordal graphs are convergent.
\end{proposition}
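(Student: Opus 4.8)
The plan is to reduce, exactly as in Proposition \ref{pseudomodularconvergent}, to a purely combinatorial statement about deformations, and then to exploit the chord property in place of Helly's theorem. By Proposition \ref{convergentT2} it suffices to treat $\Delta = 2$, and by Proposition \ref{convergentdeform} it suffices to show that every walk $P = v_1 \ldots v_n$ from $s$ to $t$ is $2$-deformable to a shortest path. I would prove this by induction on the number of vertices $n$, following the opening reduction of Proposition \ref{pseudomodularconvergent} essentially verbatim: first $2$-deform the sub-walk $v_2 \ldots v_n$ to a shortest path (inductive hypothesis), and then, unless $P$ is already optimal, we are reduced to the case in which $v_1, v_n$ are discrepancy vertices (Definition \ref{discrepdefine}) and both $v_2 \ldots v_n$ and $v_1 \ldots v_{n-1}$ are shortest paths; the cases $n \le 3$ and $v_1 = v_n$ are disposed of directly by deleting a vertex. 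Write $d = d(v_1, v_n)$; since $v_1 \ldots v_{n-1}$ is geodesic we have $d(v_1, v_3) = 2$, and since $v_2 \ldots v_n$ is geodesic we have $d(v_2, v_n) = n - 2 \ge d$.

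Next I would build a cycle and triangulate it. Fix a shortest path $R = r_0 r_1 \ldots r_d$ from $v_1 = r_0$ to $v_n = r_d$ and consider the closed walk $C$ obtained by following $P$ from $v_1$ to $v_n$ and then $R$ back from $v_n$ to $v_1$. Lemma \ref{discrepobserve}(1) guarantees that no interior vertex of $P$ lies on any shortest path between $v_1$ and $v_n$, so the interior $v$-vertices are disjoint from $\{r_1, \ldots, r_{d-1}\}$; together with the geodesicity of $v_1 \ldots v_{n-1}$ and $v_2 \ldots v_n$ this shows that $C$ is a \emph{simple} cycle, of length $(n-1) + d \ge n \ge 4$. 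Since $G$ is chordal, the interior of $C$ can be triangulated by chords (recursively, every sub-cycle of length $\ge 4$ has a chord).

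The heart of the argument is to extract a single local move from this triangulation. Consider the triangle incident to the boundary edge $v_1 v_2$; its third vertex $z$ is a vertex of $C$ adjacent to $v_1$. Because $v_1 \ldots v_{n-1}$ is geodesic and $d(v_1,v_n)=d$, the only cycle-vertices adjacent to $v_1$ are $v_2$ and $r_1$, so $z = r_1$ and hence $r_1 \sim v_2$. Now consider the triangle incident to the edge $v_2 v_3$; its third vertex is a common neighbor of $v_2$ and $v_3$ on $C$, and no $v$-vertex is adjacent to both (distances force this), so it is some $r_c$ with $r_c \sim v_2$. But $r_c \sim v_2$ gives $n - 2 = d(v_2,v_n) \le 1 + d(r_c,v_n) = 1 + (d-c)$, whence $c \le d - n + 3 \le 1$, so $c = 1$ and $r_1 \sim v_3$. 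Thus $r_1$ is a common neighbor of $v_1$ and $v_3$ lying on a shortest path to $v_n$, i.e. $d(r_1,v_n) = d - 1$. I then $2$-deform $v_1 v_2 v_3$ to $v_1 r_1 v_3$, apply the inductive hypothesis to the length-$(n-1)$ sub-walk $r_1 v_3 v_4 \ldots v_n$ to turn it into a shortest path from $r_1$ to $v_n$, and prepend $v_1$; since $d(v_1,r_1) = 1$ and $d(r_1,v_n) = d-1$, the result is a shortest path from $v_1$ to $v_n$, completing the induction.

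The step I expect to be the main obstacle is precisely this distance bookkeeping, which pins the triangulation down to a single usable move: showing that the common neighbor of $v_1$ and $v_3$ supplied by chordality can be taken \emph{on a geodesic toward} $v_n$. This is exactly where chordality must be combined with the geodesicity of $v_2 \ldots v_n$: the equality $d(v_2,v_n) = n-2$ is what forbids $v_2$ from being adjacent to any $r_c$ with $c \ge 2$, thereby collapsing the triangle fan at $v_2$ to the two triangles $v_1 v_2 r_1$ and $v_2 v_3 r_1$. One should also verify the degenerate configuration $d = 1$ separately (no interior $r$-vertices): there the cycle $C$ admits no chord, so chordality excludes this configuration outright, which is consistent with, and indeed subsumed by, the argument above.
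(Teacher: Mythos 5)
Your proof is correct, and it takes a genuinely different route from the paper's. The paper inducts on the number of vertices of the \emph{graph} via a simplicial vertex $v$ (the perfect elimination ordering characterization of chordality): it splits into cases according to where $v$ occurs in the walk, and in the crucial case $u_1 = v$ it forms the cycle $P \cup P^*$, passes to the induced subgraph, deletes the simplicial vertex (preserving chordality), and extracts the key edge $v_2 u_3$ from the fact that a minimal cycle through the edge $u_2 v_2$ --- an edge that exists only because $u_2$ and $v_2$ are both neighbors of the simplicial vertex --- must be a triangle. You instead fix $G$, induct only on the walk length, triangulate the entire cycle $P \cup R$ by chords (a standard consequence of chordality, correctly invoked), and pin the triangulation down by distance bookkeeping: geodesicity of $v_2 \ldots v_n$ forbids $v_2 \sim r_c$ for $c \geq 2$, forcing $r_1 \sim v_2$ and $r_1 \sim v_3$. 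Both arguments culminate in the same local move --- replacing $v_2$ by the second vertex of a geodesic toward $v_n$ and recursing on the tail --- but yours never needs a simplicial vertex, so it avoids the paper's double induction (on graph size, with an inner discrepancy argument) and its case analysis on occurrences of $v$ in the walk; the paper's route, in exchange, makes the specialization to maximal outerplanar graphs via ear-clipping completely transparent. Two minor remarks: your triangle at $v_1 v_2$ is redundant, since the triangle at $v_2 v_3$ alone forces $c = 1$ and delivers both needed edges (while $v_1 \sim r_1$ is automatic); and in the extreme sub-case $d = n-3$ your inequalities give $c \leq 0$ against $c \geq 1$, which shows that configuration simply cannot occur in a chordal graph --- the argument remains sound (the branch is vacuous), but it would be cleaner to say so explicitly rather than silently concluding $c = 1$.
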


\begin{proof}
The proof is by induction. We assume every chordal graph of size $n-1$ is convergent, and prove that this yields the result for graphs of size $n$. In the base case, it is simple to verify that any graph (chordal or not) with $n \leq 4$  vertices is convergent.

Let $G$ be a chordal graph of size $n$ and let $v$ be a simplicial vertex of $G$. Consider a walk $P = u_1 \ldots u_m$ in $G$. We show that $P$ can be 2-deformed into a shortest path. We separate our proof into three cases:

(1) If $v$ does not occur as a vertex in $P$, then we can 2-deform $P$ into a shortest path just as we would working in $G - {v}$ (which is chordal and therefore convergent by our inductive assumption). (Note that since $v$ is simplicial, it does not occur in any shortest path from $u_1$ to $u_m$). 

(2) If $v$ occurs in $P$ as a vertex $u_i$, $1 < i < m$, then since it is simplicial we have $d(u_{i-1}, u_{i+1}) = 1$, thus we can 2-deform $u_{i-1}u_iu_{i+1}$ to $u_{i-1}u_{i+1}$ and remove $v$ from $P$. Thus we are reduced to either case (1) or case (3).

(3) Either $u_1 = v$ or $u_m = v$ and $v$ occurs nowhere else in $P$. Here we require another separation to cases: 

In the first case, both $u_1 = v$ and $u_m = v$. Using the convergence of $G-{v}$ we can 2-deform $u_2 \ldots u_{m-1}$ to $u_2 u_{m-1}$ (since $u_2,u_{m-1}$ are neighbors of $v$ and therefore $d(u_2,u_{m-1}) \leq 1$). This leaves us with the walk $P'=u_1u_2u_{m-1}u_m$, and we have $d(u_1,u_{m-1}) = 1$, so we may remove $u_2$ from it via 2-deformation. Finally since $d(u_1,u_m)=0$ we can remove $u_{m-1}$, leaving us with $P''=u_1u_m$, an optimal path.

In the other case, we assume wlog that only $u_1 = v$ (the case where $u_m = v$ is symmetrical). By the inductive assumption, and since $G-{v}$ is chordal, we can 2-deform the subwalk $u_2 u_3 \ldots u_m$ to a shortest path. We will assume wlog that it already is.  Since $\Delta = 2$, we can assume that $|P| = m > 3$, otherwise the proof is trivial. 

Suppose that in spite of $u_2 u_3 \to \ldots u_m$ being optimal, the walk $P$ is not a shortest path from $u_1$ to $u_m$. Recall the definition of discrepancy vertices. Since it is not optimal, $P$ must contain a pair of discrepancy vertices. Since the subwalk $u_2 \ldots u_m$ is optimal, this must be a pair of the form $u_1, u_k$ for some $k$ (as optimal subwalks contain no discrepancy vertices).

For simplicity, will assume that $k = m$, and deal with the case where $k \neq m$ at the end. That is, we assume that $u_1$ and $u_m$ are the discrepancy vertices. 

Let $P^* = v_1 \ldots v_{l}$ be a shortest path from $v_1 = u_1 = v$ to $v_{l} = u_m$. Since $v$ is a simplicial vertex and $u_2$ is a neighbor of $v$, we have that $d(u_2, u_m) \leq d(v, u_m)$. Note that since the sub-walk $u_2 u_3 \ldots u_m$ is optimal, and goes through $m-2$ edges, we have that $m-2 = d(u_2, u_m)$. In turn this implies that $|P^*| = d(v, u_m) + 1 \geq d(u_2, u_m) + 1 = m-1 \geq 3$.

Since $u_1$ and $u_m$ are discrepancy vertices, the paths $P^*$ and $P$ contain no shared vertices except at the endpoints (see Lemma \ref{discrepobserve}). Thus the vertices of $P \cup P^*$ form a cycle. Since $|P^*|, |P| \geq 3$, this cycle contains both $u_3$ and $v_3$ as vertices.

The subgraph induced by the vertices of the cycle $H = P \cup P^*$ is a sub-graph of a chordal graph, thus it is chordal. Since $v$( = $u_1$) is simplicial, the graph $H - v$ is also chordal. Note that the edge $u_2v_2$ exists (since $u_2$ and $v_2$ are neighbors of $v$), and is an edge of the cycle $v_2 \ldots v_l \ldots u_2$ in $H - v$. Hence, there must be a cycle in $H - v$ with minimal number of vertices containing the edge $u_2v_2$. This cycle must be of size 3, since $H - v$ is chordal. We note the following facts:

\begin{enumerate}
\item This cycle is either of the form $u_2v_qv_2$ or $u_2u_qv_2$, for $q > 2$.

\item If it is of the form $u_2v_qv_2$, then $u_1u_2v_qv_{q+1}\ldots v_l$ is a shortest path from $u_1$ to $v_l = u_m$ (since $u_1v_2v_3 \ldots v_l$ is a shortest path and $q > 2$). This contradicts the fact that $u_2$ must not belong to such a path (since $u_1$ and $u_m$ are discrepancy vertices). Therefore this is an impossibility.

\item If it is of the form $u_2u_qv_2$, and $q > 3$, then there is an edge from $u_2$ to $u_q$. This is a contradiction to the fact that $u_2 u_3 \ldots u_m$ is a shortest path. Therefore, we must have that $q=3$.
\end{enumerate}

Thus we see that this cycle must be precisely the cycle  $u_2u_3v_2$. Therefore, the edge $v_2u_3$ must exist in $G$.

\setcounter{figure}{9}
\begin{figure}[h]
\caption{The induced subgraph $P \cup P^*$}
\centering
\usetikzlibrary{shapes.geometric}

\begin{tikzpicture}
[every node/.style={inner sep=0pt}]
\node (1) [circle, fill] at (87.5pt, -175.0pt) {\textcolor{black}{1}};
\node (2) [circle, fill] at (125.0pt, -150.0pt) {\textcolor{black}{2}};
\node (3) [circle, fill] at (125.0pt, -200.0pt) {\textcolor{black}{3}};
\node (4) [circle, fill] at (175.0pt, -150.0pt) {\textcolor{black}{4}};
\node (5) [circle, fill] at (175.0pt, -200.0pt) {\textcolor{black}{5}};
\node (6) [circle, fill] at (287.5pt, -175.0pt) {\textcolor{black}{6}};
\draw [line width=0.625, color=black] (1) to  (2);
\draw [line width=0.625, color=black] (2) to  (4);
\draw [line width=0.625, dashed, color=black] (4) to  [in=156, out=0] (6);
\draw [line width=0.625, dashed, color=black] (5) to  [in=204, out=0] (6);
\draw [line width=0.625, color=black] (3) to  (5);
\draw [line width=0.625, color=black] (3) to  (1);
\draw [line width=0.625, color=black] (3) to  (2);
\draw [line width=0.625, dotted, color=black] (2) to  (5);
\node at (87.5pt, -160.625pt) {\textcolor{black}{$v_1$ ($u_1$)}};
\node at (125.0pt, -135.625pt) {\textcolor{black}{$v_2$}};
\node at (125.0pt, -214.375pt) {\textcolor{black}{$u_2$}};
\node at (175.0pt, -135.625pt) {\textcolor{black}{$v_3$}};
\node at (175.0pt, -214.375pt) {\textcolor{black}{$u_3$}};
\node at (287.5pt, -189.375pt) {\textcolor{black}{$v_{l}$ ($u_m$)}};
\end{tikzpicture}
\end{figure}
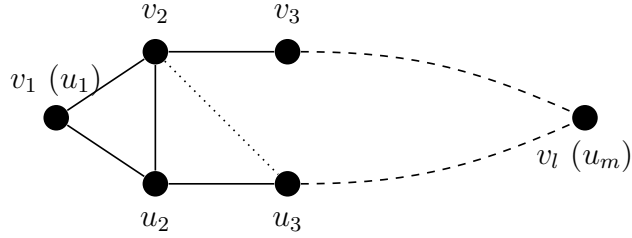

Since $v_2u_3$ exists, we can 2-deform $u_1u_2u_3$ to $u_1v_2u_3$, then 2-deform $v_2u_3 \ldots u_m$ to a shortest path from $v_2$ to $u_m$ (using the inductive assumption), deforming $P$ into an optimal path. Thus we successfully 2-deformed $P$ into an optimal path, and we are done.

If $k \neq m$ we first restrict ourselves to looking at the sub-walk $u_1 \ldots u_k$ of $P$ and apply the argument above to 2-deform it to a shortest path. This has the effect of shortening $P$. If $P$ is not a shortest path as a result of this, we can freely re-apply the same argument a finite number of times to deform $P$ to a shortest path. Specifically, after taking care of $v_1, v_k$ we look for the next pair of discrepancy vertices and apply the argument to them, each time shortening the length of $P$ by at least 1. This can only be done a finite number of times (as $P$ is finite), and at the end of this process we will have deformed $P$ to a shortest path.
\end{proof}

Note that in general chordal graphs are not necessarily planar.

\setcounter{figure}{10}
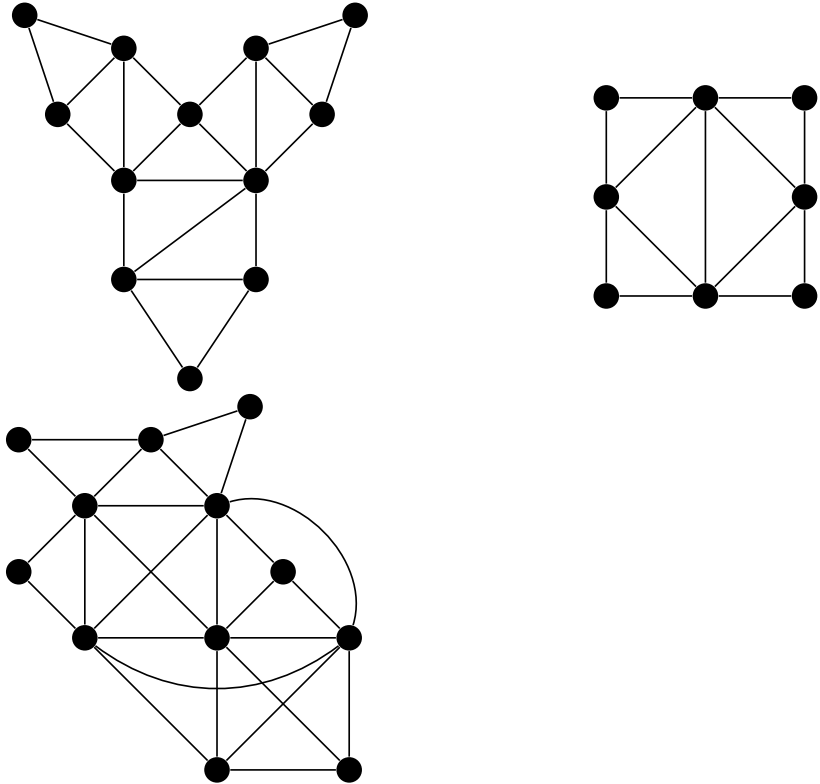
\begin{figure}[!htb]
\caption{The top row shows two maximal outerplanar graphs; the bottom row a non-planar chordal graph}
\begin{subfigure}{.5\textwidth}
\centering
\usetikzlibrary{shapes.geometric}

\begin{tikzpicture}
[every node/.style={inner sep=0pt}]
\node (1) [circle, fill] at (75.0pt, -112.5pt) {\textcolor{black}{1}};
\node (2) [circle, fill] at (87.5pt, -150.0pt) {\textcolor{black}{2}};
\node (3) [circle, fill] at (112.5pt, -125.0pt) {\textcolor{black}{3}};
\node (4) [circle, fill] at (137.5pt, -150.0pt) {\textcolor{black}{4}};
\node (5) [circle, fill] at (112.5pt, -175.0pt) {\textcolor{black}{5}};
\node (6) [circle, fill] at (162.5pt, -175.0pt) {\textcolor{black}{6}};
\node (7) [circle, fill] at (162.5pt, -125.0pt) {\textcolor{black}{7}};
\node (8) [circle, fill] at (187.5pt, -150.0pt) {\textcolor{black}{8}};
\node (9) [circle, fill] at (200.0pt, -112.5pt) {\textcolor{black}{9}};
\node (10) [circle, fill] at (112.5pt, -212.5pt) {\textcolor{black}{1}};
\node (11) [circle, fill] at (162.5pt, -212.5pt) {\textcolor{black}{1}};
\node (12) [circle, fill] at (137.5pt, -250.0pt) {\textcolor{black}{1}};
\draw [line width=0.625, color=black] (1) to  (2);
\draw [line width=0.625, color=black] (1) to  (3);
\draw [line width=0.625, color=black] (2) to  (3);
\draw [line width=0.625, color=black] (3) to  (4);
\draw [line width=0.625, color=black] (2) to  (5);
\draw [line width=0.625, color=black] (5) to  (4);
\draw [line width=0.625, color=black] (4) to  (7);
\draw [line width=0.625, color=black] (7) to  (8);
\draw [line width=0.625, color=black] (6) to  (8);
\draw [line width=0.625, color=black] (4) to  (6);
\draw [line width=0.625, color=black] (7) to  (9);
\draw [line width=0.625, color=black] (8) to  (9);
\draw [line width=0.625, color=black] (5) to  (6);
\draw [line width=0.625, color=black] (10) to  (5);
\draw [line width=0.625, color=black] (11) to  (6);
\draw [line width=0.625, color=black] (10) to  (11);
\draw [line width=0.625, color=black] (10) to  (12);
\draw [line width=0.625, color=black] (12) to  (11);
\draw [line width=0.625, color=black] (6) to  (10);
\draw [line width=0.625, color=black] (6) to  (7);
\draw [line width=0.625, color=black] (5) to  (3);
\end{tikzpicture}
\end{subfigure}%
\begin{subfigure}{.5\textwidth}
\centering
\begin{tikzpicture}
[every node/.style={inner sep=0pt}]
\node (1) [circle, fill] at (75.0pt, -100.0pt) {\textcolor{black}{1}};
\node (2) [circle, fill] at (112.5pt, -100.0pt) {\textcolor{black}{2}};
\node (3) [circle, fill] at (150.0pt, -100.0pt) {\textcolor{black}{3}};
\node (4) [circle, fill] at (150.0pt, -137.5pt) {\textcolor{black}{4}};
\node (5) [circle, fill] at (150.0pt, -175.0pt) {\textcolor{black}{5}};
\node (6) [circle, fill] at (112.5pt, -175.0pt) {\textcolor{black}{6}};
\node (7) [circle, fill] at (75.0pt, -175.0pt) {\textcolor{black}{7}};
\node (8) [circle, fill] at (75.0pt, -137.5pt) {\textcolor{black}{8}};
\draw [line width=0.625, color=black] (6) to  (2);
\draw [line width=0.625, color=black] (1) to  (2);
\draw [line width=0.625, color=black] (2) to  (3);
\draw [line width=0.625, color=black] (3) to  (4);
\draw [line width=0.625, color=black] (4) to  (5);
\draw [line width=0.625, color=black] (5) to  (6);
\draw [line width=0.625, color=black] (6) to  (7);
\draw [line width=0.625, color=black] (7) to  (8);
\draw [line width=0.625, color=black] (8) to  (1);
\draw [line width=0.625, color=black] (2) to  (8);
\draw [line width=0.625, color=black] (4) to  (2);
\draw [line width=0.625, color=black] (6) to  (4);
\draw [line width=0.625, color=black] (6) to  (8);
\end{tikzpicture}
\end{subfigure}
\begin{subfigure}{.5\textwidth}
\centering
\usetikzlibrary{shapes.geometric}

\begin{tikzpicture}
[every node/.style={inner sep=0pt}]
\node (2) [circle, fill] at (125.0pt, -112.5pt) {\textcolor{black}{2}};
\node (12) [circle, fill] at (175.0pt, -162.5pt) {\textcolor{black}{2}};
\node (7) [circle, fill] at (75.0pt, -162.5pt) {\textcolor{black}{7}};
\node (6) [circle, fill] at (125.0pt, -162.5pt) {\textcolor{black}{6}};
\node (8) [circle, fill] at (125.0pt, -212.5pt) {\textcolor{black}{8}};
\node (5) [circle, fill] at (150.0pt, -137.5pt) {\textcolor{black}{5}};
\node (11) [circle, fill] at (175.0pt, -212.5pt) {\textcolor{black}{1}};
\node (3) [circle, fill] at (75.0pt, -112.5pt) {\textcolor{black}{3}};
\node (1) [circle, fill] at (100.0pt, -87.5pt) {\textcolor{black}{1}};
\node (9) [circle, fill] at (50.0pt, -137.5pt) {\textcolor{black}{9}};
\node (13) [circle, fill] at (50.0pt, -87.5pt) {\textcolor{black}{3}};
\node (4) [circle, fill] at (137.5pt, -75.0pt) {\textcolor{black}{4}};
\draw [line width=0.625, color=black] (7) to  (2);
\draw [line width=0.625, color=black] (12) to  [in=17, out=73] (2);
\draw [line width=0.625, color=black] (2) to  (6);
\draw [line width=0.625, color=black] (6) to  (12);
\draw [line width=0.625, color=black] (6) to  (7);
\draw [line width=0.625, color=black] (6) to  (8);
\draw [line width=0.625, color=black] (12) to  (8);
\draw [line width=0.625, color=black] (8) to  (7);
\draw [line width=0.625, color=black] (5) to  (2);
\draw [line width=0.625, color=black] (6) to  (5);
\draw [line width=0.625, color=black] (12) to  (5);
\draw [line width=0.625, color=black] (11) to  (8);
\draw [line width=0.625, color=black] (11) to  (6);
\draw [line width=0.625, color=black] (11) to  (12);
\draw [line width=0.625, color=black] (2) to  (3);
\draw [line width=0.625, color=black] (7) to  (3);
\draw [line width=0.625, color=black] (3) to  (6);
\draw [line width=0.625, color=black] (1) to  (3);
\draw [line width=0.625, color=black] (1) to  (2);
\draw [line width=0.625, color=black] (3) to  (9);
\draw [line width=0.625, color=black] (9) to  (7);
\draw [line width=0.625, color=black] (13) to  (3);
\draw [line width=0.625, color=black] (13) to  (1);
\draw [line width=0.625, color=black] (1) to  (4);
\draw [line width=0.625, color=black] (2) to  (4);
\draw [line width=0.625, color=black] (7) to  [in=217, out=323] (12);
\end{tikzpicture}
\end{subfigure}
\end{figure}

\newpage

\section{The Uniform Limiting Distribution}

In the previous sections we discussed \textit{stable graphs}, graphs for which the pursuit from $s$ to $t$ converges to a unique distribution over all shortest paths. In other cases, while chain pursuit always converges to a distribution over some set of walks, this distribution is not uniquely determined and depends on the initial walk $P(A_0)$ as well as the randomness of the move choices.

The purpose of this section is to prove a general fact about these distributions; namely, that probabilistic chain pursuit will always converge to the \textit{uniform} distribution over a set of walks in one of its closed communicating classes. When restricting the discussion to stable graphs, this says that the pursuit will always converge to the uniform distribution over all shortest paths from $s$ to $t$.

\setcounter{figure}{11}
\begin{figure}[!htb]
\caption{A simulation of the pursuit rule on different graph environments. Each vertex is shaded according to the relative frequency at which an agent $A_i$ was located on it. A corollary of this section is that for a vertex $v$, this frequency converges to precisely the number of walks in $\mathcal{C}$ (the closed communicating class at which the pursuit stabilized) that pass through $v$, divided by $|\mathcal{C}|$. In the graphs pictured, the most frequented vertices are those closest to the straight line from $s$ to $t$.}
\centering
\begin{subfigure}{.5\textwidth}
\centering
\includegraphics[width=\linewidth]{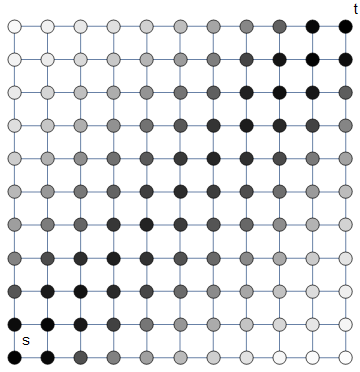}
\end{subfigure}%
\begin{subfigure}{.5\textwidth}
\centering
\includegraphics[width=\textwidth]{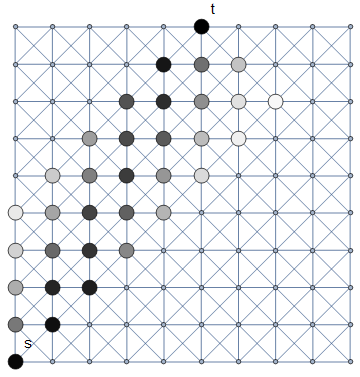}
\end{subfigure}
\label{fig:graphdistribution}
\end{figure}

\begin{proposition}
\label{uniformdistribution}
Let $\mathcal{C}$ be a closed communicating class of $\mathcal{M}_{\Delta} (s,t)$. The limiting distribution of $\mathcal{M}$ restricted to $\mathcal{C}$ is the uniform distribution.
\end{proposition}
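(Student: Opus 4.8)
The plan is to treat this as a statement about the stationary distribution of a finite Markov chain. Restricted to $\mathcal{C}$ the chain is finite and, being a communicating class, irreducible; it is also aperiodic, since from any $P=v_1\ldots v_n\in\mathcal{C}$ the next agent can re-trace $P$ exactly (at each step $v_{k+1}$ lies on a shortest path from $v_k$ to the current target $v_{k+\Delta}$, by $\Delta$-optimality), so the self-loop probability $\Pr[P\to P]$ is positive. Hence the restricted chain has a unique limiting distribution equal to its unique stationary distribution, and it suffices to prove that the \emph{uniform} distribution is stationary --- equivalently, that the restricted transition matrix is doubly stochastic, i.e. $\sum_{P\in\mathcal{C}}\Pr[P\to Q]=1$ for every fixed $Q\in\mathcal{C}$. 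Phrased dynamically: if $P(A_i)$ is uniform on $\mathcal{C}$, then so is $P(A_{i+1})$.

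First I would make the transition probability explicit. All walks in $\mathcal{C}$ share a length $n$ (Lemma~\ref{uniformlength}), and inside a closed class the inter-agent distance stays exactly $\Delta$ (if it ever dropped the walk would shorten, contradicting Lemma~\ref{uniformlength}). Thus when the pursuer stands on the $k$th vertex $u_k$ of its walk the target stands on $v_{\min(k+\Delta,n)}$, and the pursuit rule sends it to $u_{k+1}$ with probability $N(u_{k+1},w)/N(u_k,w)$, where $w$ is the target vertex and $N(a,b)$ is the number of shortest paths from $a$ to $b$. Writing the target walk as $P=v_1\ldots v_n$ and the pursuer walk as $Q=u_1\ldots u_n$, the product of these factors is $\Pr[P\to Q]=\frac{1}{N(u_{n-\Delta},t)}\prod_{j=\Delta+1}^{n-1}\frac{N(u_{j-\Delta+1},v_j)}{N(u_{j-\Delta},v_j)}$, where the last $\Delta$ factors (the target resting at $t$) have telescoped. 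Notably this depends on $P$ only through its tail $v_{\Delta+1}\ldots v_n$.

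The heart of the proof is the column sum, which I would evaluate by summing over $P\in\mathcal{C}$ and cancelling shortest-path counts one layer at a time. The prototype cancellation is at the front: for a fixed tail $v_{\Delta+1}\ldots v_n$, the admissible heads $v_1\ldots v_\Delta$ are exactly the shortest paths from $s$ to $v_{\Delta+1}$, because replacing the head by any other geodesic with the same endpoints is an atomic $\Delta$-deformation and so stays in $\mathcal{C}$ (Lemma~\ref{closedwalk}); there are $N(s,v_{\Delta+1})$ of them, which cancels the first denominator $N(u_1,v_{\Delta+1})=N(s,v_{\Delta+1})$. I would then iterate, summing over the successive target vertices $v_{\Delta+1},v_{\Delta+2},\ldots$ and using at each stage a local identity of the form $\sum_{w}N(p,w)=N(p,w')$, where $p$ is the relevant pursuer vertex, $w'$ the next (fixed) target, and $w$ ranges over the admissible current-target positions; since $N(p,w')=\sum_x N(p,x)$ over the penultimate vertices $x$ of the shortest $p$--$w'$ paths, the identity holds exactly when the admissible positions coincide with those penultimate vertices. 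Each such step consumes the leftover numerator into the next denominator, and carrying the cancellation to the end leaves $\frac{1}{N(u_{n-\Delta},t)}\cdot N(u_{n-\Delta},t)=1$.

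The hard part will be justifying that local counting identity for an arbitrary closed class. When $\mathcal{C}$ consists of all shortest paths from $s$ to $t$ it is immediate: the distances $d(s,v_j)=j-1$ and $d(s,t)=n-1$ force every relevant distance triangle to be tight, so the pursuer vertex always lies on a geodesic to the target and the admissible positions are precisely the needed penultimate vertices. In a general (possibly non-convergent) closed class the walks are only \emph{locally} optimal, so I must replace these global distance-additivity facts by purely local ones coming from $\Delta$-optimality and closure under $\Delta$-deformation, and in particular reconcile the mismatch between the target-side geodesics (which govern which positions are $\mathcal{C}$-admissible) and the pursuer-side geodesics (which appear in $N(p,\cdot)$). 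Making this bookkeeping watertight for all $\Delta$, and for environments where $\mathcal{C}$ need not contain every equal-length walk, is the crux; once the local identity is in hand the telescoping is routine.
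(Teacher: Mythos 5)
Your reduction is sound and coincides with the paper's setup: aperiodicity via exact retracing, the observation that the inter-agent distance stays exactly $\Delta$ inside a closed class, and your product formula for $\Pr[P\to Q]$ is precisely the paper's Lemma \ref{probabilitytransitionformula} (note $\eta(x_ix_{i+1},u_{i+\Delta})=N(x_{i+1},u_{i+\Delta})$ whenever the factor is nonzero, so the telescoping of the last $\Delta$ factors is also correct). But the step you yourself flag as the crux --- the local identity $\sum_{w}N(p,w)=N(p,w')$ over ``admissible'' positions --- is a genuine gap, and as organized it does not go through: the column sum over $P\in\mathcal{C}$ does not decouple vertex-by-vertex. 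Each $v_j$ appears in only one factor, but its admissible range is coupled to $v_{j-1}$ (adjacency) and to vertices up to $\Delta$ away ($\Delta$-optimality and membership of the assembled walk in $\mathcal{C}$), so once earlier layers have been summed away there is no well-defined per-vertex admissible set, and nothing forces it to coincide with the penultimate vertices of the pursuer-side geodesics --- your own caveat about the pursuer/target mismatch is exactly where this breaks. Your front cancellation succeeds only because the head $v_1\ldots v_\Delta$ is an entire window of span $\Delta$, whose admissible set is pinned down by a single atomic $\Delta$-deformation via Lemma \ref{closedwalk}; isolated vertices deeper in the walk admit no such clean description in a general (non-convergent) class.

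The paper's proof supplies precisely the missing device, and the contrast is instructive: rather than checking double stochasticity of the one-step kernel by a direct column sum, it factors the kernel into local kernels --- the $(i,i+\Delta)$-shuffles of Definition \ref{shuffles}, each replacing the whole sub-walk $x_i\ldots x_{i+\Delta}$ by a uniformly random geodesic --- and shows in Lemma \ref{shuffleprobabilityequivalence} that composing them for $i=1,\ldots,n$ reproduces exactly your product of $\eta$-ratios. A single shuffle visibly preserves the uniform distribution on $\mathcal{C}$ (Lemma \ref{shuffleuniformlemma}): the preimages of a fixed $X$ are the $\eta(x_i,x_j)$ walks agreeing with $X$ outside the window, all of which lie in $\mathcal{C}$ by closure under atomic $\Delta$-deformations (the replaced sub-walk is itself a geodesic by $\Delta$-optimality), and each hits $X$ with probability $1/\eta(x_i,x_j)$, giving $\eta(x_i,x_j)\cdot\frac{1}{|\mathcal{C}|}\cdot\frac{1}{\eta(x_i,x_j)}=\frac{1}{|\mathcal{C}|}$. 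In effect the paper writes the transition matrix as a product of doubly stochastic matrices instead of proving the product is doubly stochastic directly, replacing your per-vertex identity (which would require global distance additivity, available only when $\mathcal{C}$ is the set of all geodesics) with a per-window count that uses only $\Delta$-optimality and deformation closure --- the two facts valid in an arbitrary closed class. To salvage your route you would telescope window-by-window rather than vertex-by-vertex, which is the shuffle decomposition in disguise.
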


In particular we have:

\begin{proposition}
Let $G$ be a stable graph. Then the limiting distribution of $\mathcal{M}_{\Delta} (s,t)$ (for any choice of $\Delta$) is the uniform distribution over all shortest paths.
\end{proposition}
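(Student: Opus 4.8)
The plan is to combine standard finite Markov-chain theory with a symmetry (``reversal'') identity special to the pursuit rule. First I would note that the restriction of $\mathcal{M}_{\Delta}(s,t)$ to $\mathcal{C}$ is a finite, irreducible, aperiodic chain: irreducibility is the definition of a communicating class, and aperiodicity follows because every $P\in\mathcal{C}$ carries a self-loop of positive probability — a pursuer can exactly reproduce $P$, since $\Delta$-optimality (Lemma \ref{Toptimallemma}) guarantees that at each step the next vertex of $P$ lies on a shortest path to the current location of its predecessor. Hence the chain has a unique stationary distribution, equal to its limiting distribution, and it suffices to prove this distribution is uniform. Using the standard fact that a finite irreducible chain has uniform stationary distribution iff its transition matrix is doubly stochastic, the proposition reduces to the column-sum identity $\sum_{P\in\mathcal{C}}\Pr[P\to Q]=1$ for every $Q\in\mathcal{C}$.

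Next I would write the transition probabilities explicitly. Writing $N(a,b)$ for the number of shortest $a$--$b$ paths in $G$, the pursuit rule (a uniform choice of shortest path to the current target) gives, for $P=u_1\cdots u_n$ and $Q=w_1\cdots w_n$ in $\mathcal{C}$ (of common length $n$ by Lemma \ref{uniformlength}), $\Pr[P\to Q]=\prod_{k=1}^{n-1} N(w_{k+1},z_k)/N(w_k,z_k)$, where $z_k=u_{\min(\Delta+k,n)}$ is the predecessor's position while the pursuer stands at $w_k$. Here I use that within a closed class the inter-agent distance stays exactly $\Delta$ until the predecessor reaches $t$ (otherwise $Q$ would be strictly shorter, contradicting Lemma \ref{uniformlength}), so each $w_{k+1}$ is a legal step and the per-step factor is genuinely the ratio of shortest-path counts.

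The heart of the argument is a reversal identity: letting $\bar P$ denote the walk $P$ traversed from $t$ to $s$, I would prove $\Pr_{\mathcal{M}_{\Delta}(s,t)}[P\to Q]=\Pr_{\mathcal{M}_{\Delta}(t,s)}[\bar Q\to\bar P]$. Granting it, the column sum is immediate: reversal is a bijection carrying $\mathcal{C}$ onto a set $\bar{\mathcal{C}}$ which, since reversal sends atomic $\Delta$-deformations to atomic $\Delta$-deformations, is exactly the closed class of $\bar Q$ in $\mathcal{M}_{\Delta}(t,s)$ (using the description of a closed class as a $\Delta$-deformation-closure, Lemmas \ref{closedwalk} and \ref{walkdeform}). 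Thus $\sum_{P\in\mathcal{C}}\Pr[P\to Q]=\sum_{\bar P\in\bar{\mathcal{C}}}\Pr_{\mathcal{M}_{\Delta}(t,s)}[\bar Q\to\bar P]=1$, the last equality being an ordinary row sum of a stochastic matrix restricted to a closed class. The identity itself I would derive from the product formula: substituting $\bar P,\bar Q$ and using the symmetry $N(a,b)=N(b,a)$, both sides become products of factors $N(w_\bullet,u_\bullet)$, and the task is to match the numerator and denominator multisets by the reindexing $k\mapsto n-k$ that pairs the pursuer's step toward a predecessor-vertex with the time-reversed pursuer's step.

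The main obstacle is precisely this factor-matching near the endpoints, where the clamp $z_k=u_{\min(\Delta+k,n)}$ (the predecessor sitting at $t$) must be reconciled with the clamp $\zeta_k=w_{\max(n+1-\Delta-k,1)}$ on the reversed side; the clean telescoping one expects in the interior degenerates at both ends and must be checked carefully. I would either push the multiset bookkeeping through directly, or — more robustly — prove the identity bijectively, exhibiting a measure-preserving correspondence between a forward ``pursuit realization'' (a choice, at each step, of a shortest path to the current target whose first edge advances $Q$) and a reversed realization, using that reversing a shortest path is an involution on the set of shortest paths between two fixed vertices. A small asymmetric example, in which the transition matrix on $\mathcal{C}$ turns out to be doubly stochastic but \emph{not} symmetric, confirms that reversibility in the naive sense fails while the reversal identity holds, so the argument cannot be shortcut by a detailed-balance claim.
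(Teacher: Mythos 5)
Your proposal is correct, and it reaches the proposition by a genuinely different route than the paper. The paper also starts from the explicit product formula for the transition probability (its Lemma \ref{probabilitytransitionformula} is your formula, with $\eta$ in place of $N$), but instead of reducing to double stochasticity it decomposes one pursuit step into a sequence of local $(i,i+\Delta)$-shuffles (Definition \ref{shuffles}) --- uniform re-randomizations of the geodesic subwalk $x_i\ldots x_{i+\Delta}$ --- shows the pursuit step is distributed as the composition of these shuffles, and checks that each shuffle preserves the uniform law on $\mathcal{C}$ (closure of $\mathcal{C}$ under $\Delta$-deformations means exactly $\eta(u_i,u_j)$ states map to a given target, each with probability $1/\eta(u_i,u_j)$, as in Lemma \ref{shuffleuniformlemma}); that decomposition never sees the endpoint clamps, which is precisely what your route must confront. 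Reassuringly, the obstacle you flag does close cleanly: on positive-probability transitions each factor $\eta(w_kw_{k+1},z_k)$ equals $\eta(w_{k+1},z_k)$, the clamped tail factors telescope to $1/\eta(w_{n-\Delta+1},t)$ on the forward side and $1/\eta(u_{\Delta},s)$ on the reversed side, and these cancel against the extreme interior numerators $\eta(w_{n-\Delta+1},u_n)$ and $\eta(u_{\Delta},w_1)$, so that for $n>\Delta$ both sides of your reversal identity reduce to the common expression
\begin{equation*}
\frac{\prod_{m=2}^{n-\Delta}\eta\bigl(w_m,u_{m+\Delta-1}\bigr)}{\prod_{k=1}^{n-\Delta}\eta\bigl(w_k,u_{k+\Delta}\bigr)},
\end{equation*}
while the two leftover geodesic (legality) conditions, $d(w_{n-\Delta+1},u_n)=\Delta-1$ and $d(w_1,u_{\Delta})=\Delta-1$, hold automatically by the $\Delta$-optimality of $Q$ and $P$ (Lemma \ref{Toptimallemma}), so a transition is legal forward iff it is legal reversed; the degenerate case $n\leq\Delta$ gives $1/\eta(s,t)$ on both sides. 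Comparing the two approaches: the paper's shuffle argument is shorter and avoids all boundary bookkeeping, proving invariance of the uniform law directly; your argument establishes something strictly stronger that the paper does not record --- under the reversal bijection (which, as you correctly argue via Lemmas \ref{closedwalk} and \ref{walkdeform}, carries $\mathcal{C}$ onto a closed class $\bar{\mathcal{C}}$ of $\mathcal{M}_{\Delta}(t,s)$), the reversed pursuit chain is exactly the time reversal of the forward chain with respect to the uniform law, with double stochasticity, and hence the proposition, as a corollary. Your closing observation is also accurate: the transition matrix on $\mathcal{C}$ is doubly stochastic but in general non-symmetric, so no detailed-balance shortcut is available, and the reversal identity is the right substitute.
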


We note that for the purposes of our proof, unlike the previous sections, we cannot restrict the parameter $\Delta$ here to the value `$2$', and our argument must hold for any value $\Delta$ greater than $1$. This is because the transition probabilities of $\mathcal{M}_{\Delta} (s,t)$ depend on the choice of $\Delta$.

Let $\mathcal{C}$ be a closed communicating class of $\mathcal{M}_{\Delta} (s,t)$, and consider the Markov chain defined by deleting every walk not in $\mathcal{C}$ from $\mathcal{M}$. This restricted Markov chain is finite and aperiodic (as every walk taken by an ant has a non-zero probability of immediately recurring for the next ant) and so has a limiting distribution. Let $U = u_1 \ldots u_n$ be an arbitrary, fixed walk in $\mathcal{C}$. We will assume that $P(A_0)$ is uniformly distributed over all paths in $\mathcal{C}$ and show that the induced distribution of $P(A_1)$ is the uniform distribution. This is equivalent to showing that the uniform distribution is the limiting distribution in the restricted Markov chain.

\theoremstyle{definition}
\begin{definition}
\label{shuffles}
Let $X = x_1 \ldots x_n \in \mathcal{C}$ be a walk in $G$. An \textit{(i,j)-shuffle} of $X$, written ${S}^i_j(X)$, is a random variable resulting from the replacement of the sub-walk $x_i \ldots x_j$ of $X$ with a shortest path from $x_i$ to $x_j$ chosen uniformly at random from all such paths. (We define $S^i_i(X) = X$ and $S^i_{n+k}(X) = S^i_n(X)$ for $k \geq 1$).
\end{definition}

Consider the pursuit of $A_0$ by the ant $A_1$. The pursuit rule states that $A_1$ moves, at time $i$, to a vertex determined by choosing at random one of the shortest paths from $A_1$ to $A_0$ and stepping on the first vertex of that path. As we will see, this is equivalent to sequentially performing $(i,i+\Delta)$-shuffles of $P(A_0)$ starting from $i=1$ up to $i=n$.

We define $\eta(u,v)$ to be the number of shortest paths from $u$ to $v$ in $G$, and we define $\eta(v,v) = 1$. In a slight abuse of notation, we will also let $\eta(u,v)$ be the \textit{set} of all such shortest paths, trusting that the intent will be clear from the context. We further define $\eta(u_1u_2,v)$ to be the set of all shortest paths from $u_1$ to $v$, whose first edge is $u_1u_2$ (i.e. paths with $d(u_1, v)$ edges whose first edge is $u_1u_2$). We note that $\eta(u_1u_2,v)$ might equal $0$ for some choices of $u_1u_2$ and $v$. 

For simplicity, we define $x_{n+k} = x_n$ for $k \geq 1$, in all applications below.

Write $p_U(X)$ for the probability $Prob[P(A_1) = X|P(A_0) = U]$. We start with the following lemma:

\begin{lemma}
\label{probabilitytransitionformula}
Let $X = x_1 \ldots x_n$ be a walk in $\mathcal{C}$ such that $p_U(X) > 0$. Then we have: $$p_U(X) = \prod_{i=1}^{n} \frac{\eta(x_ix_{i+1},u_{i+\Delta})}{\eta(x_i,u_{i+\Delta})}$$
\end{lemma}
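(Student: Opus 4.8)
The plan is to reconstruct $p_U(X)$ directly from the pursuit rule as a product of per-step move probabilities, one factor for each time step of $A_1$'s walk. The first task is to pin down the timing. Since $A_0$ starts at $u_1 = s$ at time $0$ and $A_1$ emerges at $x_1 = s$ at time $\Delta$, a short induction on time shows that whenever $A_1$ stands on $x_i$, the agent $A_0$ stands on $u_{i+\Delta}$ (using the convention $u_{n+k} = u_n = t$ for the final steps, after $A_0$ has halted at $t$). Thus the move carrying $A_1$ from $x_i$ to $x_{i+1}$ is, by Definition~\ref{Pursuitrule}, the draw of a uniformly random shortest path from $x_i$ to $u_{i+\Delta}$ followed by a step onto its first vertex.

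The second step is to compute the probability of a single such move. By the very definitions of $\eta(x_i x_{i+1}, u_{i+\Delta})$ and $\eta(x_i, u_{i+\Delta})$, a uniformly random shortest path from $x_i$ to $u_{i+\Delta}$ has first edge $x_i x_{i+1}$ with probability exactly $\eta(x_i x_{i+1}, u_{i+\Delta}) / \eta(x_i, u_{i+\Delta})$. The denominator is nonzero because at least one shortest path always exists, and the hypothesis $p_U(X) > 0$ guarantees that every transition along $X$ is legal, so each numerator is positive as well.

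The third step is to assemble these factors into a product. Conditioned on $A_0$ following $U$, the positions of $A_0$ are deterministic, the position of $A_1$ at step $i$ is determined by its earlier moves, and each move choice of $A_1$ depends only on the current positions of $A_1$ and $A_0$. Hence, by the chain rule for conditional probabilities (the pursuit being Markovian in this step-local sense), $p_U(X)$ factors as the product over $i$ of the single-step probabilities found above, namely $\prod_{i=1}^{n} \eta(x_i x_{i+1}, u_{i+\Delta}) / \eta(x_i, u_{i+\Delta})$; note that $X$ and $U$ share the length $n$ by Lemma~\ref{uniformlength}, so the index range is unambiguous. The boundary conventions close the argument cleanly: for $i$ with $i+\Delta \geq n$ we have $u_{i+\Delta} = t$, so these factors correctly describe $A_1$ chasing the stationary $A_0$ at $t$, and the final factor $i = n$ equals $\eta(tt,t)/\eta(t,t) = 1$ since $x_{n+1} = x_n = t$, so extending the product to $i = n$ is harmless.

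I expect the main obstacle to be bookkeeping rather than any deep idea: carefully justifying the time-indexing relation ``$A_1$ on $x_i$ $\Leftrightarrow$ $A_0$ on $u_{i+\Delta}$,'' including the degenerate steps after $A_0$ has stopped, and making the chain-rule step rigorous by verifying that conditioning on the \emph{entire} trajectory $X$ introduces no hidden dependence among the per-step choices. Both points reduce to the Markovian, step-local nature of the pursuit rule, so no genuine difficulty is anticipated.
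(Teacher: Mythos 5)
Your proposal is correct and follows essentially the same route as the paper's proof: align the timing so that $A_1$ at $x_i$ coincides with $A_0$ at $u_{i+\Delta}$, read off each per-step probability $\eta(x_ix_{i+1},u_{i+\Delta})/\eta(x_i,u_{i+\Delta})$ from the pursuit rule, and multiply via the chain rule. Your treatment is somewhat more explicit than the paper's (notably on the boundary conventions $u_{n+k}=u_n$ and the step-local Markov justification), but there is no substantive difference.
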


\begin{proof}
In order for $A_1$ to have $P(A_{1}) = X$, it must, at time $i+1$, choose the vertex $x_{i+1}$, having already chosen the vertex $x_i$ at time $i$. At this time $A_0$ will be on the vertex $u_{i+\Delta}$, that is at distance $\Delta$ from $x_i$ by $\Delta$-optimality. By the pursuit rule, it follows that moving to the vertex $x_{i+1}$ happens with probability $$\frac{\eta(x_ix_{i+1},u_{i+\Delta})}{\eta(x_i,u_{i+\Delta})}$$  The formula follows from multiplication of all these probabilities.
\end{proof}

We define the following stochastic process: let $U^1 = S^1_{1+\Delta}(U)$, and $U^i = S^i_{i+\Delta}(U^{i-1})$. Define $\widetilde{p}_U(X)$ to be the probability that $U^n$ will equal $X$. We will show:

\begin{lemma}
\label{shuffleprobabilityequivalence}
For all $X \in \mathcal{C}$, $p_U(X) = \widetilde{p}_U(X)$.
\end{lemma}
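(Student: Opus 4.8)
The plan is to show that the sequential shuffle process $U^1, U^2, \ldots, U^n$ finalizes the vertices of its output walk from left to right, one coordinate per step, and that the probability of committing the vertex $x_{k+1}$ at step $k$ is exactly the $k$-th factor of the product in Lemma \ref{probabilitytransitionformula}. Multiplying these conditional probabilities will then give $\widetilde{p}_U(X) = p_U(X)$.

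First I would establish the index bookkeeping for the shuffles. Since $S^i_{i+\Delta}$ replaces the sub-walk at positions $i, \ldots, i+\Delta$ by a shortest path with the \emph{same} endpoints, it fixes positions $i$ and $i+\Delta$ and alters only the interior positions $i+1, \ldots, i+\Delta-1$. Consequently, the vertex at position $k+1$ can be changed only by a step $i$ with $i+1 \le k+1 \le i+\Delta-1$, the last of which is $i = k$; after step $k$ this coordinate is never touched again, so it is \emph{finalized} at step $k$. Moreover, at the instant step $k$ is applied, position $k+\Delta$ of $U^{k-1}$ has not yet been disturbed by any earlier shuffle (the first step that can alter it is step $k+1$), so it still equals its original value $u_{k+\Delta}$; and position $k$ has already been finalized by step $k-1$. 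The boundary convention $S^i_{n+j} = S^i_n$ handles windows that run past position $n$, where $u_{i+\Delta} = u_n = t$.

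Next I would condition on the finalized prefix. Let $E_k$ be the event that positions $1, \ldots, k$ of $U^{k-1}$ already equal $x_1, \ldots, x_k$. On $E_k$, step $k$ resamples the \emph{entire} window between its two now-determined endpoints $x_k$ and $u_{k+\Delta}$ with a uniformly random shortest path, so the probability that its second vertex (position $k+1$) becomes $x_{k+1}$ is exactly $\eta(x_k x_{k+1}, u_{k+\Delta}) / \eta(x_k, u_{k+\Delta})$, independently of whatever occupied the window interior beforehand. This is the key point: because the whole window is drawn afresh and its endpoints are pinned ($x_k$ by the finalized prefix, $u_{k+\Delta}$ by being untouched), the conditional commitment probability depends only on $x_k$, $x_{k+1}$, and $u_{k+\Delta}$.

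Finally, since the shuffles at distinct steps use independent randomness and the coordinates finalize strictly left to right, the probability $\widetilde{p}_U(X)$ that $U^n = X$ factors as the product over $k$ of these conditional probabilities, namely $\prod_k \eta(x_k x_{k+1}, u_{k+\Delta}) / \eta(x_k, u_{k+\Delta})$, which matches the formula for $p_U(X)$ in Lemma \ref{probabilitytransitionformula} once the trailing factors (equal to $1$ under the conventions $x_{n+j} = x_n$, $u_{n+j} = u_n$) are reconciled. I expect the main obstacle to be making the finalization claim fully rigorous, specifically verifying that position $k+\Delta$ retains its original value $u_{k+\Delta}$ up until step $k$, and carefully matching the index ranges of the two products by handling the clamped windows near the end of the walk.
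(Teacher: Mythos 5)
Your proposal is correct and takes essentially the same route as the paper's own proof: the paper likewise observes that the $i$th shuffle permanently determines the $(i+1)$th vertex and inductively multiplies the per-step commitment probabilities $\eta(x_ix_{i+1},u_{i+\Delta})/\eta(x_i,u_{i+\Delta})$ to recover the product of Lemma \ref{probabilitytransitionformula}. Your write-up is in fact more careful than the paper's terse induction, since you explicitly verify that position $k+\Delta$ is undisturbed before step $k$, that the window's endpoints are pinned when it is resampled, and that the clamped windows near position $n$ are handled by the stated conventions.
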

\begin{proof}
The $i$th shuffle permanently determines the $(i+1)$th vertex. We have $u_1 = x_1$ and $u_n = x_n$. In order for $u_2$ to be changed into $x_2$ we must have that the second vertex of $S^1_{1 + \Delta}(U)$ is $x_2$, which by the definition of $\eta$ happens with probability $$\frac{\eta(x_1x_{2},u_{1+\Delta})}{\eta(x_1,u_{1+\Delta})}$$

Inductively, we again arrive at the formula:

 $$\widetilde{p}_U(X) = \prod_{i=1}^{n} \frac{\eta(x_ix_{i+1},u_{i+\Delta})}{\eta(x_i,u_{i+\Delta})}$$
 
 Which shows that $p_U(X) = \widetilde{p}_U(X)$. 
\end{proof}

From the proof of \ref{shuffleprobabilityequivalence} we see that the probability distribution of $U^n$ is equivalent to that of $P(A_1)$ (conditioned on $P(A_0) = U$).

\begin{lemma}
\label{shuffleuniformlemma}
Let $U = u_1 \ldots u_n$ be chosen uniformly at random from the walks in $\mathcal{C}$. For all $X = x_1 \ldots x_n \in \mathcal{C}$, $Prob[U^n = X] = \frac{1}{|\mathcal{C}|}$.
\end{lemma}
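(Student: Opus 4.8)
The plan is to prove that $U^n$ is uniformly distributed by showing that \emph{each single shuffle} in the process preserves the uniform distribution on $\mathcal{C}$, and then chaining the $n$ steps together by induction. Since $U^i = S^i_{i+\Delta}(U^{i-1})$ is built from $U^{i-1}$ using fresh, independent randomness, it suffices to establish the following claim: if $W$ is uniformly distributed over $\mathcal{C}$ and we apply a single shuffle $S^i_{i+\Delta}$, then $S^i_{i+\Delta}(W)$ is again uniform over $\mathcal{C}$. Granting this, if $U^{i-1}$ is uniform then so is $U^i$; and since $U^0 = U$ is uniform by hypothesis, $U^n$ is uniform, giving $Prob[U^n = X] = 1/|\mathcal{C}|$ for every $X \in \mathcal{C}$.

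To prove the single-shuffle claim I would fix $i$ and partition $\mathcal{C}$ into \emph{equivalence classes} by ``boundary data'': two walks lie in the same class $E$ exactly when they share the prefix $w_1\ldots w_i$ and the suffix $w_{i+\Delta}\ldots w_n$ (using the convention $w_{n+k}=w_n$, which also covers the terminal shuffles with $i+\Delta>n$). The shuffle never touches this boundary data — it only rewrites the interior $w_{i+1}\ldots w_{i+\Delta-1}$ — so each class is invariant, no probability mass moves between classes, and it suffices to analyse the shuffle within a single class.

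The crux, which I expect to be the main obstacle, is the structural description of each class $E$: I claim $E$ consists of \emph{exactly} the walks obtained from one fixed member by replacing the segment $w_i\ldots w_{i+\Delta}$ with an \emph{arbitrary} shortest path between its fixed endpoints $a:=w_i$ and $b:=w_{i+\Delta}$. One inclusion uses $\Delta$-optimality (Lemma \ref{Toptimallemma}): every walk of $\mathcal{C}$ has $d(a,b)=\Delta$, so its $(\Delta)$-edge segment from $a$ to $b$ is forced to be a shortest path, and different members of $E$ can only differ by this interior shortest path. The reverse inclusion uses closure: replacing the interior by any other shortest $a$--$b$ path is an atomic $\Delta$-deformation, so by Lemma \ref{closedwalk} the result stays in $\mathcal{C}$, hence in $E$. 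Thus $E$ is in bijection with the set of shortest $a$--$b$ paths and has $\eta(a,b)$ members, and — crucially — there is no hidden constraint linking the chosen interior path to the rest of the walk. By Definition \ref{shuffles} the shuffle sends any $W\in E$ to a \emph{uniformly random} member of $E$, independently of $W$; hence, whatever the distribution of $W$ within $E$, the output $S^i_{i+\Delta}(W)$ is uniform on $E$.

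Finally I would assemble the pieces. Starting from $W$ uniform over $\mathcal{C}$, conditioning on the class $E$ makes $W$ uniform on $E$ with class probability $|E|/|\mathcal{C}|$; the shuffle leaves the conditional distribution uniform on $E$ and preserves every class mass; so integrating over all classes shows $S^i_{i+\Delta}(W)$ is again uniform over $\mathcal{C}$. Induction over $i=1,\ldots,n$ then yields the stated equality. The only delicate point is the equivalence-class description above, where one must simultaneously invoke closure under $\Delta$-deformations (Lemma \ref{closedwalk}) and $\Delta$-optimality to pin down $E$ as precisely the set of interior shortest-path replacements.
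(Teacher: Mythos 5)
Your proposal is correct and takes essentially the same route as the paper: the paper also proves that a single shuffle preserves the uniform distribution on $\mathcal{C}$ and then composes the $n$ shuffles, and its one-line computation $Prob[S^i_j(U)=X]=\frac{\eta(u_i,u_j)}{|\mathcal{C}|}\cdot\frac{1}{\eta(u_i,u_j)}=\frac{1}{|\mathcal{C}|}$ is exactly your equivalence-class argument in compressed form ($|E|=\eta(a,b)$ walks agreeing with $X$ outside the window, each mapping to $X$ with probability $1/\eta(a,b)$), justified by the same two ingredients you identify, namely $\Delta$-optimality forcing the window segment to be a shortest path and closure under $\Delta$-deformations (Lemma \ref{closedwalk}) guaranteeing every interior replacement stays in $\mathcal{C}$. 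Your write-up simply makes explicit the class decomposition that the paper relegates to a parenthetical remark.
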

\begin{proof}
According to our assumption, for any $X \in \mathcal{C}$, we have $Prob[U = X] = \frac{1}{|\mathcal{C}|}$. Then for any $1 \leq i < j \leq i+\Delta$ we have that $$Prob[S^i_j(U) = X] = \frac{\eta(u_i,u_j)}{|\mathcal{C}|} \cdot \frac{1}{\eta(u_i,u_j)} = \frac{1}{|\mathcal{C}|}$$ (The computation relies on the fact that the subwalk from $u_i$ to $u_j$ in $U$ must be $\Delta$-deformable to any path in $\eta(u_i,u_j)$, thus by closure, the result of any such deformation must be in $\mathcal{C}$). 

We see that the distribution of $S^i_j(U)$ is the same as $U$. Since $U^n$ is just a composition of a finite number of such shuffles, we have that $Prob[U^n = X] = \frac{1}{|\mathcal{C}|}$.
\end{proof}

The above lemma shows that the distribution of $P(A_1)$ is the uniform distribution, completing the proof of Proposition \ref{uniformdistribution}.

\section{Conclusion and Future Work}
\label{chap:conclusion}

We studied the behavior of ``ants'' that form an idealized ``ant trail'' through probabilistic chain pursuits of each other. Our pursuit rule is a natural generalization of the pursuit rule of \cite{ants2} to arbitrary graph environments. The behavior of this more general rule as time tends to infinity is not necessarily ``nice'' in all graph environments--unlike the original simpler scenario where pursuit was restricted to a grid graph. Over an arbitrary graph environment, convergence to a shortest path is not guaranteed, and the pursuit is not guaranteed to always stabilize in the same way. 

We investigated conditions under which convergence and stability do occur. In doing so we extended the results of \cite{ants2} on the grid in multiple ways--by showing several classes of graphs that are convergent to shortest paths and are stable, and by showing that the limiting distribution of walks in probabilistic pursuit is always uniform. From the graph classes investigated in section 3, two of these, pseudo-modular graphs and graph products, include the special case where the underlying graph is a grid.

Three immediate extensions of this work can readily be considered. The first is a classification of stable and convergent graphs with respect to a parameter $\Delta$ greater than 2, i.e., graphs where convergence to the shortest path (or, additionally, to a unique limiting distribution) is guaranteed only for choice of $\Delta$ greater than 2. The second is further investigation of types of planar graphs that are stable or convergent. The third is an analysis of the computational complexity of the problem of deciding whether a given graph is convergent or stable.

\bibliographystyle{plain}
\bibliography{general.bib}







\end{document}